\newcommand{\taubar}{\bar\tau}
\newcommand{\ktruss}{$k$-truss\xspace}
\newcommand{\kmax}{\mbox {$k_{max}$}}
\newcommand{\ignore}[1]{}
\newcommand{\nop}[1]{}
\newcommand{\eat}[1]{}
\newcommand{\kw}[1]{{\ensuremath {\mathsf{#1}}}\xspace}
\newcommand{\stitle}[1]{\vspace{1ex} \noindent{\bf #1}}
\long\def\comment#1{}
\newcommand{\LL}[1]{#1} 
\newcommand{\laks}[1]{#1} 
\newcommand{\RV}[1]{#1}
\newcommand{\TobeDelete}[1]{\eat{}}
\newcommand{\JY}[1]{#1} 
\newtheorem{definition}{Definition}
\newtheorem{fact}{Fact}
\newtheorem{lemma}{Lemma}
\newtheorem{theorem}{Theorem}
\newtheorem{example}{Example}
\newtheorem{corollary}{Corollary}
\newtheorem{remark}{Remark}
\newtheorem{problem}{Problem}
\newtheorem{proposition}{Proposition}
\newcommand{\dist}{\kw{dist}}
\newcommand{\diam}{\kw{diam}}
\newcommand{\con}{\kw{connect}}
\newcommand{\ctc}{\kw{closest} \kw{truss} \kw{community}}
\newcommand{\ctcp}{\kw{CTC}-\kw{Problem}}
\newcommand{\maxctcp}{\kw{Maximal}-\kw{CTC}-\kw{Problem}}
\newcommand{\divctcp}{\kw{Diverse}-\kw{CTC}-\kw{Problem}}
\newcommand{\conctcp}{\kw{Constrainted}-\kw{CTC}-\kw{Problem}}
\newcommand{\ctckp}{\kw{CTCk}-\kw{Problem}}
\newcommand{\local}{\kw{Local}-\kw{CTC}}
\newcommand{\LCTC}{\kw{LCTC}}
\newcommand{\LAZY}{\kw{BD}}
\newcommand{\CTC}{\kw{Basic}}
\newcommand{\MDC}{\kw{MDC}}
\newcommand{\QDC}{\kw{QDC}}
\newcommand{\TRUS}{\kw{Truss}}
\newcommand{\framework}{\kw{Basic}}
\newcommand{\lazy}{\kw{Bulk}\kw{Delete}}
\newcommand{\minlazy}{\kw{Min}-\kw{Lazy}\kw{Forward}}
\newcommand{\mind}{\kw{mind}}
\newcommand{\mindist}{\kw{mindist}}
\newcommand{\trussmt}{\kw{K}-\kw{truss }\kw{Maintenance}}
\newcommand{\FRE}{\kw{FRE }}
\begin{document}

\title{Approximate Closest Community Search in Networks}

\author{
{Xin Huang$^{\dag}$, Laks V.S. Lakshmanan$^{\dag}$,  Jeffrey Xu Yu$^{\ddag}$, Hong Cheng$^{\ddag}$}%
\vspace{1.6mm}\\
\fontsize{10}{10}\selectfont\itshape
$^{\dag}$ University of British Columbia, 
$^{\ddag}$The Chinese University of Hong Kong\\
\fontsize{9}{9}\selectfont\ttfamily\upshape
\{xin0,laks\}@cs.ubc.ca, \{yu, hcheng\}@se.cuhk.edu.hk
}

\maketitle

\begin{abstract}
Recently, there has been significant interest in the study of the community search problem in social and information networks: given one or more query nodes, find densely connected communities containing the query nodes. However, most existing studies do not address the ``free rider" issue, that is, nodes far away from query nodes and irrelevant to them are included in the detected community. Some state-of-the-art models have attempted to address this issue, but not only are their formulated problems NP-hard, they do not admit any approximations without restrictive assumptions, which may not always hold in practice.

In this paper, given an undirected graph $G$ and a set of query nodes $Q$, we study community search using the $k$-truss based community model. We formulate our problem of finding a \emph{closest truss community} (CTC), as finding a connected $k$-truss subgraph with the largest $k$ that contains $Q$, and has the minimum diameter among such subgraphs. We prove this problem is NP-hard. Furthermore, it is NP-hard to approximate the problem within a factor $(2-\varepsilon)$, for any $\varepsilon >0 $. However, we develop a greedy algorithmic framework, which first finds a CTC 
containing $Q$, and then iteratively removes the furthest nodes from $Q$, from the graph. The method achieves 2-approximation to the optimal solution. To further improve the efficiency, we make use of a compact truss index and develop efficient algorithms for $k$-truss identification and maintenance as nodes get eliminated. In addition, using bulk deletion optimization and local exploration strategies, we propose two more efficient algorithms. One of them trades some approximation quality for efficiency while the other is a very efficient heuristic. Extensive experiments on 6 real-world networks show the effectiveness and efficiency of our community model and search algorithms.

\end{abstract}

\section{Introduction}

Community structures naturally exist in many real-world networks such as social, biological, collaboration, and communication networks. The task of community detection is to identify all communities in a network, which is a  fundamental and well-studied problem in the literature. Recently, several papers have studied a related but different problem called \emph{community search}, which is to find the community containing a given set of query nodes. 
The need for community search naturally arises in many real application scenarios, where one is motivated by the discovery of the communities in which given query nodes participate. Since the communities defined by different nodes in a network may be quite different, community search with query nodes opens up the prospects of user-centered and personalized search, with the potential of the answers being more meaningful to a user \cite{huang2014}. 
As just one example, in a social network, the community formed by a person's high school classmates can be significantly different from the community formed by her family members which in turn can be quite different from the one formed by her colleagues \cite{mcauley2012learning}.

\begin{figure}[t]
\small
\vskip -0.1in
\centering
\includegraphics[width=1.0\linewidth]{./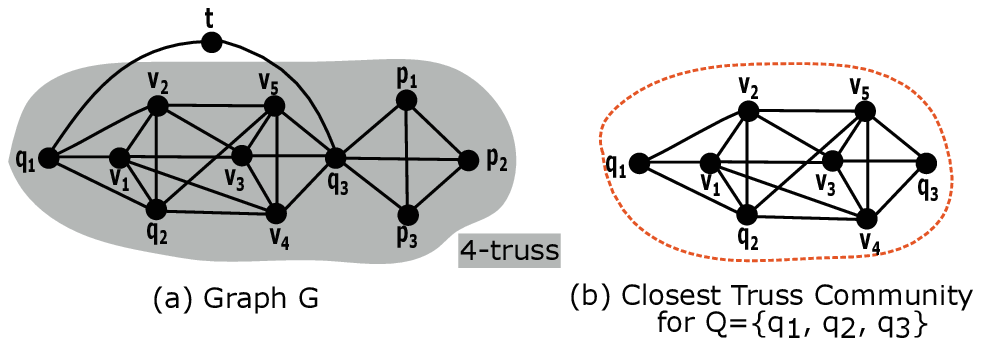}
\vskip -0.1in
\caption{Closest truss community example}\vskip -0.2in
\label{fig.community}
\end{figure}

Various community models have been proposed based on different dense subgraph structures such as $k$-core \cite{sozio2010, cui2014local, li2015influential} , $k$-truss \cite{huang2014}, quasi-clique \cite{CuiXWLW13}, weighted densest subgraph \cite{wu2015robust}, to name a few major examples. Of these, the $k$-truss as a definition of cohesive subgraph of a graph $G$, requires that each edge be contained in at least $(k-2)$ triangles within this subgraph. Consider the graph $G$ in Figure \ref{fig.community}; in the subgraph in the whole grey region (i.e., excluding the node $t$), each edge is contained in two triangles. Thus, the subgraph is a 4-truss. It is well known that most of real-world social networks are triangle-based, which always have high local clustering coefficient. Triangles are known as the fundamental building blocks of networks \cite{WangC12}.  In a social network, a triangle indicates two friends have a common friend, which shows a strong and stable relationship among three friends. Intuitively, the more common friends two people have, the stronger their relationship. In a \ktruss, each pair of friends is ``endorsed'' by at least $(k-2)$ common friends. Thus, a $k$-truss with a large value of $k$ signifies strong inner-connections between members of the subgraph. 
Huang et al.\ \cite{huang2014} proposed a community model based on the notion of \ktruss as follows. Given {\sl one} query node $q$ and a parameter $k$, a $k$-truss community containing $q$ is a maximal $k$-truss containing $q$, in which each edge is ``triangle connected'' with other edges. Triangle connectivity is strictly stronger than connectivity. The $k$-truss community model works well to find all overlapping communities containing a query node $q$. It is natural to search for communities containing a \emph{set} of query nodes in real applications, and the above community model, extended for multiple query nodes, has the following limitations. Due to the strict requirement of triangle connectivity constraint, the model may fail to discover any community for query nodes.  For example,  for the graph of Figure~\ref{fig.community}(a), and query nodes $Q=\{v_4, q_3, p_1\}$ the above  $k$-truss community model cannot find a qualified community for any $k$, since the edges $(v_4, q_3)$ and $(q_3, p_1)$ are not triangle connected in any $k$-truss. A detailed comparison of various community search models and techniques  can be found in the Section \ref{sec.related}.


In this paper, we study the problem of \emph{close community search}, i.e., given a set of query nodes, find a dense connected subgraph that contains the query nodes, in which nodes are close to each other. As a qualifying cohesive structure, we use the notion of $k$-truss for modeling a densely connected community, which inherits several good structural properties, such as $k$-edge connectivity, bounded diameter and hierarchical structure. In addition, to ensure every node included in the community is tightly related to query nodes and other nodes included in the community reported, we use  graph diameter to measure the closeness of all nodes in the community. Thus, based on $k$-truss and graph diameter, we propose a novel community model as \textbf{closest truss community (CTC)}, which requires that the all query nodes are connected in this community, the graph structure is a $k$-truss with the largest trussness $k$. In general, several such candidate communities may exist. Some of them may suffer from the so-called ``free rider effect'' formally defined and studied in \cite{wu2015robust}. While we discuss this in detail in Section~\ref{sec:fre}, we illustrate it with an example here. In Figure~\ref{fig.community}(a), for the query nodes $\{q_1,q_2,q_3\}$, the subgraph shaded grey is a 4-truss containing the query nodes. It includes the nodes $p_1, p_2, p_3$ which are intuitively not relevant to the query nodes. \laks{Specifically, they are all far away from $q_1$ and can be regarded as ``free riders''.} This 4-truss is said to suffer from the free rider effect. On the other hand, the subgraph without the nodes $\{p_1,p_2,p_3\}$ is also a 4-truss, it has the smallest diameter among all 4-trusses containing the query nodes, and does not suffer from the free rider effect. Motivated by this, we define a closest truss community as a connected \ktruss with the largest $k$ containing the query nodes and having the smallest diameter. We show that such a definition avoids the free rider effect. 
\eat{ 
\RV{Note that our model only detects one non-overlapping community for each time query.} \laks{XIN, I DON'T UNDERSTAND THE INTENT OF THE ABOVE SENTENCE. IF IT'S ONE COMMUNITY HOW CAN WE TALK ABOUT OVERLAP OR NON-OVERLAP? WHAT ARE YOU TRYING TO SAY?} 
}
A connected \ktruss with the largest $k$ containing given query nodes can be found in polynomial time. However, as we show, finding such a \ktruss with the minimum diameter is NP-hard and it is hard to approximate within a factor better than 2. Here, the approximation is w.r.t. the minimum diameter. On the other hand, we develop a greedy strategy for finding a CTC that delivers a 2-approximation to the optimal solution, thus essentially matching the lower bound. In order to make our algorithm scalable to large real networks, we propose two techniques. One of them is based on bulk deletion of nodes far away from query nodes. The second is a heuristic exploration of the local neighborhood of a Steiner tree containing the query nodes. The challenge here is that a naive application of Steiner trees may yield a \ktruss with a low value of $k$, which is undesirable. We address this challenge by developing a new notion of distances based on edge trussness. 
Specifically, we make the following contributions in this paper. 

\begin{itemize} 
\item We propose a novel community search model called \emph{closest truss community} (CTC) and motivate the problem of finding CTC containing given query nodes (Section~\ref{sec.problem}). 

\item We analyze the structural and computational properties of CTC and show that it avoids the free rider effect, is NP-hard to compute exactly or to approximate within a factor of $(2 - \varepsilon)$, for any $\varepsilon>0$  (Section~\ref{sec.hardness}). 

\item We develop a greedy 2-approximation algorithm for finding a CTC given a set of query nodes. The algorithm is based on finding, in linear time, a connected \ktruss with maximum $k$ containing the query nodes, using a simple truss index. Then successively nodes far away from the query nodes are eliminated (Section~\ref{sec.framework}). 

\item We further speed up CTC search in two ways: (1) we make use of a clever bulk deletion strategy and (2) find a Steiner tree of the query nodes and expand it into a \ktruss by exploring the local neighborhood of the Steiner tree. The first of these slightly degrades the approximation factor while the second is a heuristic (Section~\ref{sec.fast}). 

\item We extensively experiment with the various algorithms on 6 real networks. Our results show that our \ctc model can efficiently and effectively discover the queried communities on real-world networks with ground-truth communities. (Section~\ref{sec.exp}). 
\end{itemize}

\laks{In Section~\ref{sec.related}, we present a detailed comparison with related work. In Section~\ref{sec.design}, we discuss alternative candidates for community models and provide a rationale for our design decisions.} We summarize the paper in Section~\ref{sec.conclu}.

\eat{We summarize the paper in Section~\ref{sec.conclu} and discuss future work. }

\section{Problem Definition} \label{sec.problem}
We consider an undirected, unweighted simple graph $G=(V(G),$ $ E(G))$ with $n=|V(G)|$ vertices and $m=|E(G)|$ edges.  We denote the set of neighbors of a vertex $v$ by $N(v)$, i.e., $N(v) = \{ u\in V: (v, u)\in E \}$, and the degree of $v$ by $d(v)=|N(v)|$. We use \LL{$d_{max} = \max_{v\in V}d(v)$} to denote the maximum vertex degree in $G$. W.l.o.g we assume in this paper that the graph $G$ we consider is connected. Note that this implies that $m\geq n-1$. Table \ref{tab:notations} summarizes the frequently used notations in the paper. 

A \emph{triangle} in $G$ is a cycle of length 3.  Let $u, v, w \in V$ be the three vertices on the cycle, then we denote this triangle by $\triangle_{uvw}$.  The \emph{support} of an edge $e(u,v)\in E$ in $G$, denoted $sup_{G}(e)$, is defined as $|\{\triangle_{uvw}: w\in V\}|$. When the context is obvious, we drop the subscript and denote the support as $sup(e)$. Based on the definition of $k$-truss  \cite{cohen2008, WangC12}, we define a connected $k$-truss as follows.

\vskip -0.1in
\begin{table}[t]
\begin{center}\vspace*{-0.35cm}
\scriptsize
\caption[]{\textbf{Frequently Used Notations}}\label{tab:notations}
\begin{tabular}{|c|c|}
\hline
Notation & Description \\ \hline \hline
$G = (V(G), E(G))$ &  An undirected and connected simple graph $G$\\ \hline
$n;m$ & The number of vertices/edges in $G$ \\ \hline
$N(v)$	& The set of neighbors of $v$\\ \hline
$\sup_H(e)$	& The \emph{support} of edge $e$ in $H$ \\ \hline
$\tau(H)$	& Trussness of graph $H$  \\ \hline
$\tau(e)$	& Trussness of edge $e$ \\ \hline
$\tau(v)$	& Trussness of vertex $v$ \\ \hline
$\taubar(S)$  & The maximum trussness of connected graphs containing $S$ \\ \hline
$\diam(H)$ & The diameter of graph $H$ \\ \hline
$\dist_H(v, u)$ & The shortest distance between $v$ and $u$ in $H$ \\ \hline
$\dist_H(R, Q)$ & $\dist_H(R, Q) = \max_{v\in R, u\in Q} \dist_H(v,u)$ \\ \hline
\end{tabular}\vspace*{-0.6cm}
\end{center}
\end{table}
\vskip -0.15in

\begin{definition}
 [Connected K-Truss] Given a graph $G$ and an integer $k$, a connected $k$-truss is a connected subgraph $H \subseteq G$,  
\eat{is a connected $k$-truss subgraph of $G$, denoted as $G'\subseteq G$,}  such that $\forall e\in E(H)$, $sup_{H}(e)$ $\geq (k-2)$. 
\eat{, and $\forall v, u \in V(G')$, $v$ can reach $u$ in $G'$.} 
\end{definition} \label{def.ktruss}

Intuitively, a connected $k$-truss is a connected subgraph such that each edge $(u,v)$ in the subgraph is ``endorsed'' by $k-2$ common neighbors of $u$ and $v$ \cite{cohen2008}. 
In a connected $k$-truss graph, each node has degree at least $k-1$ and a  connected $k$-truss is also a $(k-1)$-core \cite{BatageljZ03}. Next, we define the \emph{trussness} of a subgraph, an edge, and a vertex as follows.

\begin{definition}
[Trussness] The trussness of a subgraph $H $ $\subseteq $ $ G$ is  the minimum support of an edge in $H$ plus $2$, i.e., $\tau(H) = 2+\min_{e\in E(H)}\{sup_{H}(e)\}$. The trussness of an edge $e\in E(G)$ is  $\tau(e)= \max_{H\subseteq G \wedge e\in E(H)}\{\tau(H)\}$. The trussness of a vertex $v\in V(G)$ is  $\tau(v)= \max_{H\subseteq G \wedge v \in V(H)} $ $\{\tau(H)\}$.
\end{definition}


Consider the graph $G$ in Figure~\ref{fig.community}(a). Edge $e(q_2, v_2)$ is contained in three triangles $\triangle_{q_2v_2q_1}$, $\triangle_{q_2v_2v_1}$ and $\triangle_{q_2v_2v_5}$, thus its support is $sup_{G}(e(q_2,v_2))=3$.  Suppose $H$ is the triangle $\triangle_{q_2v_2q_1}$,  then the trussness of the subgraph $H$ is  $\tau(H)=2+\min_{e\in H}{sup_{H}(e)}=3$, since each edge is contained in one triangle in $H$. The  trussness of the edge $e(q_2, v_2)$ is 4, \LL{because in the induced subgraph on vertices $\{q_1, q_2, v_1, v_2\}$, each edge is contained in two triangles in the subgraph and  any subgraph $H$ containing $e(q_2,v_2)$ has $\tau(H)\leq 4$, i.e., $\tau(e(q_2, v_2)) = $ $ \max_{H\subseteq G \wedge e\in E(H)} $ $\{\tau(H)\} $ $ =4$}. Note that the trussness of an edge $e$ of a graph $G$ could be less than $sup_G(e)+2$, e.g., $\tau(e(q_2, v_2))$ $ = 4 < 5$ $ = sup(e(q_2, v_2))+2$. 
Moreover, the vertex trussness of $q_2$ is also 4, i.e. $\tau(q_2) = 4$.

\LL{
For a set of vertices $S \subseteq V(G)$, we use $\taubar(S)$ to denote the maximum trussness of a \emph{conncted} subgraph $H$ containing $S$, i.e., $\taubar(S)= \max_{S\subseteq H\subseteq G \wedge H \text{is connected} } $ $\{\tau(H)\}$.  Notice that by definition, for $S=\emptyset$, $\taubar(\emptyset)$ is the maximum trussness of any edge in $G$. In Figure~\ref{fig.community}(a), the whole subgraph in the grey region is a 4-truss. There exists no 5-truss in $G$, and 
$\taubar(\emptyset) = 4$. We will make use of $\taubar(\emptyset)$ in Section~\ref{sec.fast}. 

\eat{For a subgrah $H\subseteq G$, we use 
$\taubar(H)$ to denote the maximum trussness of any edge in $H$. Notice that by definition, $\tau(H) \leq \taubar(H)$. In Figure~\ref{fig.community}(a), the whole subgraph in the grey region is a 4-truss. There exists no 5-truss in $G$, and 
$\taubar(G) = 4$. We will make use of $\taubar(G)$ in Section~\ref{sec.framework}
}
} 


For two nodes $u, v \in G$,  we denote by $\dist_{G}(u, v)$ the length of the shortest path between $u$ and $v$ in $G$, where $\dist_{G}(u, v) = +\infty$ if $u$ and $v$ are not connected. \laks{We make use of the notions of graph query distance and diamater in the rest of the paper.} 

\laks{ 
\begin{definition}
[Query Distance] Given a graph $G$ and a set of query nodes $Q\subset V$, for each vertex  $v\in G$, the vertex query distance of $v$ is the maximum length of a shortest path from $v$ to a query node $q\in Q$, i.e., $\dist_G(v, Q) = \max_{q\in Q} $ $ \dist_G(v, q)$. For a subgraph $H\subseteq G$, the graph query distance of $H$ is defined as $\dist_G(H, Q) = $ $\max_{u\in H} \dist_{G}(u, Q) $ $ =\max_{u\in H, q\in Q}\dist_G(u, q).$
\end{definition}  \label{def.maxdis}
} 

\begin{definition}
[Graph Diameter] The diameter of a graph $G$ is defined as the maximum length of  a shortest path in $G$, i.e., $\diam(G) =  \max_{u,v \in G}\{\dist_G(u,v)\}$. 
\end{definition}

\eat{ 
For the graph $H$ in Figure~\ref{fig.community}(b), the shortest path between $q_1$ and $q_3$ is $\langle(q_1, v_1), (v_1, v_3), (v_3, q_3)\rangle$, and $\dist_H(q_1, q_3) =3$. This is also the longest shortest path in $H$, so  $\diam(H)=3$. 
} 

\laks{
For the graph $G$ in Figure \ref{fig.community}(a) and $Q=\{q_2,$ $ q_3\}$, the vertex query distance of $v_2$ is $\dist_G(v_2, Q)= $ $\max_{q\in Q} $ $\{dist_G(v_2, q)\}$ $= 2$, since $\dist_G(v_2, q_3) = 2$ and $\dist_G(v_2, q_2) = 1$. Let $H$ be the subgraph of Figure~\ref{fig.community}(a) shaded in grey. Then query distance of $H$ is $\dist_G(H,Q) = 3$. The diameter of $H$ is $\diam(H) = 4$. 
%
} 

On the basis of the definitions of $k$-truss and graph diameter, we define the \emph{closest truss community} in a graph $G$  as follows. 


\LL{ 
\begin{definition}
[Closest Truss Community] Given a graph $G$ and a set of query nodes $Q$,  
$G'$ is a closest truss community (CTC), if $G'$ satisfies the following two conditions:
\begin{description}\label{def.community}
\vspace{-0.2cm}
  \item[(1) Connected $k$-Truss.] $G'$ is a connected $k$-truss containing $Q$ with the largest $k$, i.e., $Q\subseteq $ $G' $ $\subseteq G$ and $\forall e\in E(G')$, $sup(e) \geq k-2$;
    \vspace{-0.15cm}
  \item[(2) Smallest Diameter.] $G'$ is a subgraph of smallest diameter satisfying condition (1). That is, $\nexists G'' \subseteq G$, such that $diam(G'') $ $ < diam(G')$, and $G''$ satisfies condition (1). 
    \vspace{-0.2cm}
\end{description}
\end{definition}

Condition (1) requires that the closest community containing the query nodes $Q$ be densely connected. 
In addition, Condition (2) makes sure that each node is as close as possible to every 
\laks{other} node in the community\laks{, including the query nodes}. 
We next illustrate the notion of CTC as well as the consequence of considering Conditions (1) and (2) in different order. 
}

\begin{example}
{\em 
In Definition \ref{def.community}, we firstly consider the connected $k$-truss of $G$ containing query nodes with the largest trussness, and then among such subgraphs, regard the one with the smallest diameter as the \ctc. 
\laks{Consider the graph $G$ in Figure \ref{fig.community}(a), and $Q=\{q_1, q_2, q_3\}$; the subgraph in the region shaded grey is a 4-truss containing $Q$, and is a  subgraph with the largest trussness that contains $Q$, and has diameter 4. Notice that in Figure \ref{fig.community}(a), although the nodes $p_1, p_2, p_3$ belong to the 4-truss and are strongly connected with $q_3$, they are far away from the query node $q_1$. Figure \ref{fig.community}(b) shows another 4-truss containing $Q$ but not $p_1, p_2, p_3$, and its diameter is 3. It can be verified that this is the 4-truss with the smallest diameter. Thus, by  Condition (2) of Definition \ref{def.community}, the 4-truss graph in Figure \ref{fig.community}(a) will  not be regarded the closest truss community, whereas the one in Figure \ref{fig.community}(b) is indeed the CTC. Intuitively, the nodes $p_1, p_2, p_3$ are ``free riders'' that belong to a community defined only using Condition (1), and are avoided by Condition (2). We will see in Section \ref{sec:fre} that the definition of CTC above avoids the so-called ``free rider effect''. 
} 
\qed } 
\end{example} 

\laks{
\begin{example}
{\em Suppose we apply the conditions in Definition \ref{def.community} in the opposite order. That is, we first minimize the diameter among connected subgraphs of $G$ containing $Q$ and look for the $k$-truss subgraph with the largest $k$ among those. 
Firstly, we find that the cycle of $\{(q_1, t), $ $(t, q_3),$ $ (q_3, v_4),$ $ (v_4, q_2),$ $ (q_2, q_1)\}$ is the connected subgraph containing $Q$ with the smallest diameter 2. Then, we find that this cycle is also the $k$-truss subgraph with the largest $k$ containing itself. However, it is only a 2-truss, which has a loosely connected structure compared to Figure~\ref{fig.community}(b). This justifies the choice of the order in which Conditions (1) and (2) should be applied.   \qed } 
\end{example} 
} 


\laks{We discuss several natural candidates for community models in Section \ref{sec.design} and provide a rationale for our design decisions.} We have a choice between minimizing diameter or minimizing query distance. We address this choice in Section~\ref{sec:fre}: Example~\ref{ex-qd-dia} illustrates the value added by minimizing the diameter over minimizing just the query distance. 
The problem of \textbf{closest truss community (CTC) search} studied in this paper is stated as follows. 
\begin{problem} 
[\ctcp] Given a graph $G(V, E)$ and a set of query vertices $Q=\{v_1, ...,v_r\} \subseteq V$, find a closest truss community containing $Q$. 
\end{problem}\label{def.pro}

\section{Problem Analysis}\label{sec.hardness}



\subsection{Structural Properties}
Since our closest truss community model is based on the concept of $k$-truss, the communities caputure good structral properties of $k$-truss, such as  \emph{k-edge-connected} and \emph{hierarchical structure}. In addition, since CTC is required to have minimum diameter, it also has \emph{bounded diameter}. As a result, CTC avoids the ``free rider effect''  \cite{sozio2010, wu2015robust} (see Section~\ref{sec:fre}). 

\stitle{Small diameter, k-edge-connected, hierarchical structure.} First, the diameter of a connected $k$-truss with $n$ vertices is no more than $\lfloor\frac{2n-2}{k}\rfloor$ \cite{cohen2008}. \RV{The diameter of a community is considered as an important feature of a community  \cite{Edachery99graphclustering}. } Moreover, a $k$-truss community is ($k-1$)-edge-connected \cite{cohen2008}, as it remains connected whenever fewer than $k-1$ edges are removed \cite{DBLP:books/fm/GareyJ79}. In addition, $k$-truss based community has \emph{hierarchical structure} that represents the cores of a community at different levels of granularity \cite{huang2014}, that is, $k$-truss is always contained in the $(k-1)$-truss for any $k\geq 3$. 

\stitle{Largest $k$. } We have a trivial upper bound on the maximum possible trussness of a connected $k$-truss containing the query nodes. 

\eat{Given a set of query nodes $Q$, we need to find the connected $k$-truss containing $Q$ with the largest $k$. We discuss the choice of $k$ as follows. } 

\begin{lemma}\label{lemma.largestk}
For a connected $k$-truss  $H$ satisfying definition of CTC for $Q$, we have  $k \leq \min $ $\{\tau(q_1),$ $ ..., \tau(q_r)\}$ holds.
\end{lemma}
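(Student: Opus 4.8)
The plan is to show that the trussness of any connected $k$-truss $H$ containing $Q$ cannot exceed the vertex trussness $\tau(q_i)$ of any individual query node $q_i \in Q$. The argument is essentially a one-line consequence of the definition of vertex trussness, so I would present it as follows.

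First I would fix an arbitrary query node $q_i \in Q$ and let $H$ be any connected $k$-truss containing $Q$; by Condition (1) of Definition~\ref{def.community}, $H$ is a connected subgraph of $G$ with $\tau(H) \geq k$ (in fact $\tau(H) = k$ suffices, but we only need one direction). Since $Q \subseteq V(H)$, in particular $q_i \in V(H)$, so $H$ is a subgraph of $G$ containing the vertex $q_i$. By the definition of vertex trussness, $\tau(q_i) = \max_{H' \subseteq G \,\wedge\, q_i \in V(H')} \{\tau(H')\}$, and $H$ is one of the subgraphs in this maximization. Therefore $\tau(H) \leq \tau(q_i)$, which gives $k \leq \tau(H) \leq \tau(q_i)$.

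Since $q_i$ was arbitrary, this holds for every $i \in \{1, \dots, r\}$, and hence $k \leq \min\{\tau(q_1), \dots, \tau(q_r)\}$, as claimed.

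There is no real obstacle here — the statement follows immediately from unwinding the definitions of trussness of a subgraph and vertex trussness, together with the requirement that the CTC contain all the query nodes. The only thing to be careful about is the direction of the inequality in the definition of edge/vertex trussness (it is a $\max$ over all subgraphs containing the vertex), but once that is in place the bound is immediate. If desired, one could also remark that this upper bound is used later by the algorithm to prune the search for the largest feasible $k$.
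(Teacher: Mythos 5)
Your proof is correct and takes essentially the same route as the paper: the paper argues contrapositively that a node $q$ cannot lie in any $k$-truss with $k > \tau(q)$, while you unwind the definition of vertex trussness directly to get $k \leq \tau(H) \leq \tau(q_i)$ for each $q_i$. The two arguments are the same observation stated in opposite directions.
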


\begin{proof}
First, we have $Q\subseteq H$. For each node $q\in Q$, $q$ cannot be contained in a $k$-truss in $G$, whenever $k >\tau(q)$. Thus, the fact that $H$ is a $k$-truss subgraph containing $Q$ implies that $k \leq$ $\min\{\tau(q_1),$ $..., \tau(q_r)\}$.
\end{proof}

\RV{
\stitle{Lower and upper bounds on diameter. } Since the distance function satisfies the triangular inequality, i.e., for all nodes $u, v, w$, $\dist_G(u,v) \leq \dist_G(u,w)+\dist_G(w, v)$, we can express the lower and upper bounds on the graph diameter in terms of the query distance as follows. 

\begin{lemma}\label{lemma.disdia}
For a graph $G(V,E)$ and a set of nodes $Q\subseteq G$, we have $\dist_G(G, Q) \leq \diam(G) \leq 2\dist_{G}(G, Q)$.
\end{lemma}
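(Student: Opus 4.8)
The plan is to prove the two inequalities separately, each as a one-line consequence of a definitional fact, so no real machinery is needed. Throughout I assume $Q\neq\emptyset$ and $Q\subseteq V(G)$, and I will use the shorthand $D := \dist_G(G,Q) = \max_{u\in V(G),\, q\in Q}\dist_G(u,q)$.

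For the lower bound $D \leq \diam(G)$, I would simply observe that the maximum defining $D$ ranges over a subset of the node pairs $\{(u,v): u,v\in V(G)\}$ over which $\diam(G)=\max_{u,v\in V(G)}\dist_G(u,v)$ is taken (namely, those pairs with at least one endpoint in $Q$; this set is nonempty since $Q\neq\emptyset$). A maximum over a nonempty subset is at most the maximum over the full set, hence $D\leq\diam(G)$.

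For the upper bound $\diam(G)\leq 2D$, I would fix arbitrary $u,v\in V(G)$ and any $q\in Q$, and apply the triangle inequality for $\dist_G$ (stated just before the lemma): $\dist_G(u,v)\leq \dist_G(u,q)+\dist_G(q,v)$. By the definition of $D$, both $\dist_G(u,q)\leq D$ and $\dist_G(q,v)=\dist_G(v,q)\leq D$, so $\dist_G(u,v)\leq 2D$. Since $u,v$ were arbitrary, taking the maximum over all pairs yields $\diam(G)\leq 2D$, completing the proof.

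There is essentially no obstacle here: the only things to be slightly careful about are that $Q$ is nonempty (so the maxima are well-defined and the subset in the lower-bound argument is nonempty) and that $\dist_G$ genuinely satisfies the triangle inequality on $V(G)$, which holds because $G$ is assumed connected (so all distances are finite) and shortest-path distances are metrics. I would state these caveats in one sentence and then give the two short arguments above.
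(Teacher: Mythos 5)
Your proposal is correct and follows essentially the same route as the paper's proof: the lower bound by observing that the query-distance maximum ranges over a subset of the pairs defining the diameter, and the upper bound via the triangle inequality through a query node (the paper applies it to the specific pair realizing the diameter, you to an arbitrary pair before maximizing, which is the same argument). Your added caveats about $Q\neq\emptyset$ and connectivity are reasonable housekeeping that the paper leaves implicit.
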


\begin{proof}
First, the diameter $\diam(G) = \max_{v, u\in G} \dist_G(v, u)$, which is clearly no less than than $\dist_G(G, Q) = \max_{v\in G, q \in Q}$ $ \dist_{G}(v, q)$ for $Q\subseteq G$. Thus, $\dist_G(G, Q) \leq \diam(G)$. 
Second, suppose that the longest shortest path in $G$ is between $v$ and $u$. Then  $\forall q\in Q$, then we have $\diam(G) = \dist(v, u)\leq \dist(v, q) + $ $\dist(q, u) \leq $ $2\dist_G(G, Q)$. The lemma follows. 
\end{proof}

}

\subsection{Free Rider Effect} 
\label{sec:fre} 



\LL{ 
In previous work on community detection, researchers \cite{sozio2010, wu2015robust} have identified an undesirable phenomenon called ``free rider effect''. Intuitively, if a definition of community admits irrelevant subgraphs in the detected community, we refer to such irrelevant subgraphs as free riders.
\RV{For instance, suppose we use the classic density definition of average internal degree $\frac{|E|}{|V|}$ as the community goodness metric. Then for a set of query nodes $Q$, the community is a subgraph containing $Q$ with the maximum density. Then, any local community for $Q$ merged with the densest subgraph part will increase the community density. However, the densest subgraph may be disconnected from or irrelvant to query nodes. This shows the simple density metric suffers from the free rider effect. Wu et al. \cite{wu2015robust} show that serveral other goodness metrics including minimum degree, local modularity, and external conductance suffer from the free rider effect.} 
\RV{Following Wu et al. \cite{wu2015robust}, we define the free rider effect as follows.} Typically, a community definition is based on a goodness metric $f(H)$ for a subgraph $H$: subgraphs with minimum\footnote{We use minimum w.l.o.g.} $f(H)$ value are defined as communities. E.g., for our CTC problem, diameter is the goodness metric: among all subgraphs with maximum trussness, the smaller the diameter of $H$, the better it is as a community. The definition of free rider effect is based on this goodness metric. We term a community query independent if it is the solution to the community search with $Q$ set to $\emptyset$. 

\eat{Furthermore, equipped with the minimum diameter constraints, our closest truss community is able to \emph{avoid the ``free-riders" effect} in community search \cite{sozio2010, wu2015robust}. We called the ``free-riders" effect as \FRE in short.   A feasible solution of \ctcp is the graph $H$, that is, $Q\subseteq H$ and $H$ is a connected $k$-truss with the largest $k$. The optimal solution of \ctcp is a feasible solution with the minimum diameter, denoted as $S$. We give the formal definition of \FRE as below.
} 

\begin{definition}
[\FRE] 
\eat{
The free-riders effect is that, for a query $Q$ with its optimal solution $S$, $\exists$ a subgraph $S^* \subseteq G$ and $S^*\nsubseteq S$, s.t., $S\cup S^*$ is also a feasible solution and $\diam(S\cup S^*) \leq \diam(S)$. $S^*$ can be irrelevant with $Q$.
} 
Given a \laks{non-empty} query $Q$, let $H$ be a solution to a community definition based on a goodness metric $f(.)$. Let $H^*$ be a (global or local) optimum solution, which is query-independent. If $f(H\cup H^*) \leq f(H)$, we say that the definition suffers from free rider effect. Here, nodes in $H^* \setminus H$ are called \emph{free riders} for the query $Q$ and community $H$. 
\end{definition}\label{def.FRE}

\begin{figure}[t]
\small
\vskip -0.1in
\centering
\includegraphics[width=0.5\linewidth]{./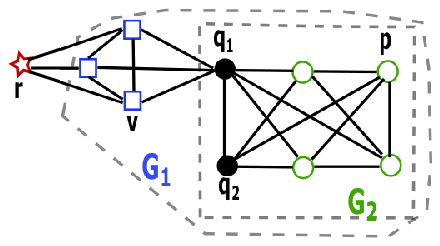}
\vskip -0.1in
\caption{A graph $G$ with $Q=\{q_1, q_2\}$.}\vskip -0.2in
\label{fig.querydistance}
\end{figure}

\laks{
\begin{example}\label{ex-qd-dia}  
{\em Consider Figure \ref{fig.querydistance}, showing a graph $G$ and query nodes $Q = \{q_1, q_2\}$. It also shows subgraphs $G_1$ and $G_2$. All three graphs -- $G, G_1$, and $G_2$ -- are 4-trusses containing $Q$. The query distance of the star node $r$ is 3, while that for all other nodes is at most 2. Thus, the query distance of $G$ is 3. The subgraph $G_1$ has the minimum query distance 2 among all 4-trusses containing $Q$. However, its diameter is 3, as the distance between square node $v$ and circle node $p$ is 3. On the other hand, the subgraph $G_2$, while having the same query distance as $G_1$, has a strictly smaller diameter 2. It has the minimum diameter among all 4-trusses containing $Q$.

Both the star node and the square nodes are free riders. The star node is the furthest from query node $q_2$ and its removal from $G$ leaves the trussness unchanged. The square nodes have the same query distance 2 as the circle node $p$. However, the square nodes are not close enough to \emph{other} nodes of the community: e.g., their distance to circle node $p$ is 3. Unlike the circle nodes, removal of square nodes leaves the trussness unchanged. Thus, the square nodes are also free riders, while the circle nodes aren't. 
\eat{
 We can see that removing the square nodes leaves the trussness unchanged and decreases the diameter, but removing any of the circle nodes \emph{decreases the trussness}. Thus both the star node and the square nodes are ``free riders''. } 
Minimizing query distance among 4-trusses eliminates the free rider star node but not the square free rider nodes, while minimizing diameter eliminates both free riders. 
\qed } 
\end{example} 
} 

\eat{ 
\RV{
Consider the graph $G$ in Figure \ref{fig.querydistance}, $G$ is a 4-truss and the query distance for star node, square nodes and circle nodes are respectively as 3, 2 and 2. The subgraph excluding star nodes, $G_1$, is a 4-truss with minimum query distance as 2; this already eliminates some free riders(the star nodes), as it is too far away from the query nodes. While adding them back to $G_1$ as $G$ does not reduce the trussness, thus if the goodness metric $f(.)$ is based only on trussness, it does fall prey to \FRE, as it would not avoid star node. The subgraph $G_2$ of $G_1$, excluding the square nodes is a 4-truss as well and has the minimum diameter. $ \diam(G_2) = 2 < 3 = \diam(G_1)$. We claim that  the square nodes are also free riders. Even though their query distances are 2, the same as the circle nodes, they are not close enough to the rest of the nodes in the community. And this phenomenon is caught by the minimum diameter criterion but not by the minimum query distance criterion.} 
}
We next show that our definition of CTC avoids the problem of free rider effect. 


\RV{ In general, there may be multiple  CTCs $H$, i.e., connected $k$-trusses with maximum trussness containing $Q$ with the minimum diameter. For example, consider the graph $G$ in Figure \ref{fig.community} and $Q=\{q_3\}$. The  subgraphs of $G$ induced respectively by $\{q_3, p_1, p_2, p_3\}$ and $\{q_3, v_3, v_4, v_5\}$  are both 4-trusses with diameter 1. Both happen to be maximal in that they are not contained in any other 4-truss with this property.}
}

\laks{
\begin{proposition}
\label{prop:fre} 
For any graph $G$ and query nodes $Q\subset V(G)$, there is a solution $H$ to the CTC search problem such that for all \RV{query-independent} optimal solutions $H^*$, either $H^*=H$, or $H\cup H^*$ is \RV{disconnected}, or $H\cup H^*$ has a strictly larger diameter than $H$. 
\end{proposition}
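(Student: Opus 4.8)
The plan is to take $H$ to be a CTC solution for $Q$ that is \emph{maximal} --- concretely, one with the largest number of vertices among all CTC solutions (ties broken arbitrarily) --- and to verify the three-way alternative for this $H$. The only structural fact needed is a closure property of trusses under union: if $H_1$ is a connected $k_1$-truss and $H_2$ a connected $k_2$-truss, both subgraphs of $G$, and $H_1\cup H_2 := (V(H_1)\cup V(H_2),\, E(H_1)\cup E(H_2))$ is connected, then $H_1\cup H_2$ is a connected $\min\{k_1,k_2\}$-truss; this holds because passing to a supergraph never destroys a triangle, so every edge of $H_i$ retains support at least $k_i-2$ inside $H_1\cup H_2$. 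I will also use the trivial inequality $k := \taubar(Q) \le \taubar(\emptyset) =: k^{*}$, which holds because imposing the constraint $Q\subseteq H$ only shrinks the family of admissible connected subgraphs.

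Now fix any query-independent optimal solution $H^{*}$; by definition it is the CTC solution for the empty query, i.e.\ a connected $k^{*}$-truss of minimum diameter among all connected $k^{*}$-trusses. Consider $H\cup H^{*}$. If it is disconnected, the second alternative of the proposition holds. Otherwise, by the union property and $k^{*}\ge k$, $H\cup H^{*}$ is a connected $k$-truss, and since $Q\subseteq V(H)\subseteq V(H\cup H^{*})$ it satisfies Condition~(1) of Definition~\ref{def.community}: it is a feasible CTC solution for $Q$. Because $H$ is a CTC solution, it has the minimum diameter over all feasible solutions, so $\diam(H\cup H^{*})\ge \diam(H)$; if this inequality is strict, the third alternative holds.

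The remaining --- and delicate --- case is $\diam(H\cup H^{*}) = \diam(H)$. Then $H\cup H^{*}$ also attains the minimum feasible diameter, so it is itself a CTC solution, and it contains $H$; by the maximality of $H$ together with $V(H)\subseteq V(H\cup H^{*})$ we get $V(H\cup H^{*}) = V(H)$, i.e.\ $V(H^{*})\subseteq V(H)$. Thus not a single vertex of $H^{*}$ lies outside $H$ --- there are no free riders for the pair $(Q,H)$ --- which is the substance the proposition is after; to land literally in the first alternative $H^{*}=H$ one additionally uses that a query-independent optimum is taken to be maximal (as a returned community normally is), which also pins down the edges. I expect this tie case to be the main obstacle: the union-closure property and the feasibility/minimum-diameter comparison are routine, whereas excluding a diameter tie that genuinely enlarges the community is exactly where the extremal choice of $H$ and the precise reading of ``query-independent optimum'' are needed.
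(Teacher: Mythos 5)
Your proposal is correct and follows essentially the same route as the paper: pick $H$ extremal (the paper uses a containment-maximal optimal solution, you use one of maximum vertex count), observe that a connected $H\cup H^*$ is still a feasible maximum-trussness $k$-truss containing $Q$ by union-closure of truss support and $\taubar(Q)\le\taubar(\emptyset)$, and derive a contradiction with extremality when the diameter does not strictly increase. Your write-up actually supplies the union-closure and feasibility details that the paper's one-line proof leaves implicit, and your honest caveat about the tie case (only $V(H^*)\subseteq V(H)$ follows, with edges pinned down by containment-maximality rather than vertex-count maximality) is exactly the point the paper's choice of a containment-maximal $H$ is designed to handle.
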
 

\begin{proof} 
Let ${\cal C}(G,Q)$ denote the set of optimal solutions to the CTC search problem on graph $G$ and query nodes $Q$. ${\cal C}(G,Q)$ is partially ordered w.r.t. the graph containment order $\subseteq$. Let $H$ be any maximal element of ${\cal C}(G,Q)$, let $H^*$ be any \RV{query-independent} optimal solution, 
and consider $H\cup H^*$. \laks{Assume w.l.o.g. that $(H^*\setminus H) \ne \emptyset$.} Suppose that $H\cup H^*$ is a connected \ktruss with maximum trussness containing $Q$, and $\diam(H\cup H^*) \le \diam(H)$. This contradicts the maximality of $H$. 
\end{proof} 
}

\subsection{Hardness and Approximation}

\stitle{Hardness. } In the following, we show the \ctcp is NP-hard. Thereto, we define the decision version of the \ctcp. 

\vskip -0.15in
\begin{problem}
[\ctckp] Given a graph $G(V, E)$, a set of query nodes $Q=\{v_1, ...,v_r\} \subseteq V$ and parameters $k$ and $d$, test whether $G$ contains a connected $k$-truss subgraph with diameter at most $d$, that contains $Q$. 
\vspace*{-0.3cm}
\end{problem}\label{def.ctckp}
\vskip -0.15in

\begin{theorem}\label{theorem.ctckp}
The \ctckp  is \textbf{NP-hard}.
\end{theorem}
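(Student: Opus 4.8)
The plan is to prove NP-hardness by a reduction from a known NP-complete problem that naturally combines a density-like constraint with a diameter constraint. The most promising source is the \textbf{Maximum Clique} problem (equivalently, the decision problem: does $G$ contain a clique of size $k$?), since a clique on $k$ vertices is a $k$-truss of diameter $1$, and conversely a $k$-truss of diameter $1$ on exactly $k$ vertices is a clique. A cleaner route, which I would try first, is to reduce from the variant where we ask whether a graph contains a clique of size exactly some target, and to engineer the instance so that the only way to get a $k$-truss of the prescribed trussness and diameter is to use a clique in the original graph.

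\medskip

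\noindent\textbf{Construction.} Given an instance $(G_0, k_0)$ of Clique, I would build a graph $G$ and choose query nodes $Q$ and parameters $k$, $d$ as follows. The high-level idea: take $G_0$, and for each vertex $v$ of $G_0$ attach a small ``gadget'' (e.g., a clique of appropriate size sharing the vertex $v$, or a collection of common neighbors) whose purpose is to inflate the trussness of every edge of $G_0$ up to $k_0$, so that trussness alone no longer distinguishes cliques from non-cliques, while leaving the diameter of any subgraph essentially controlled by distances inside (a copy of) $G_0$. Then add one or two query vertices connected to the whole of $G_0$'s vertex set through these gadgets, forcing $Q$ to be contained in any candidate community. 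Set $k$ to the target trussness (built so that a $k$-truss must contain a clique-like core of $G_0$) and set $d=1$ (or a small constant), so that asking for diameter $\le d$ forces the chosen vertices of $G_0$ to be pairwise adjacent, i.e., to form a clique of size $k_0$. I would verify the two directions: (i) if $G_0$ has a $k_0$-clique $C$, then $C$ together with its gadgets forms a connected $k$-truss of diameter $\le d$ containing $Q$; (ii) conversely, any connected $k$-truss of diameter $\le d$ containing $Q$ must project to a $k_0$-clique in $G_0$, because the diameter bound kills any pair of non-adjacent original vertices and the trussness bound forces enough original vertices to be present.

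\medskip

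\noindent\textbf{Main obstacle.} The delicate part is designing the gadget so that it simultaneously (a) raises the support of every original edge to at least $k-2$ regardless of whether endpoints are adjacent to a large common neighborhood in $G_0$, (b) does \emph{not} itself create short paths that would let non-adjacent original vertices sneak into a small-diameter truss (i.e., the gadget vertices must not act as ``hubs'' collapsing the diameter — or if they do, we must account for that by choosing $d$ accordingly and arguing the original-vertex projection is still forced to be a clique), and (c) keeps the whole construction polynomial in size. Balancing (a) and (b) is the crux: common-neighbor gadgets that boost triangle counts tend to also shrink distances. One way to manage this is to make the gadget attached to $v$ interact only with $v$ (a pendant clique through $v$), so it contributes triangles only to edges incident to $v$ — which is not enough for an edge $(u,v)$ unless $u$ and $v$ already share $k_0-2$ gadget-free common neighbors. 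Hence I expect the reduction actually needs the gadget to supply common neighbors to pairs, and then the diameter argument must be done carefully: show that even though gadget vertices may be within distance $2$ of many vertices, the \emph{query} vertices $Q$ are placed so that $\dist(q, \cdot)$ is large for any vertex not in the intended clique core, or place $Q$ inside the core region so that including a non-clique original vertex forces distance $> d$ to some other core vertex.

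\medskip

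\noindent\textbf{Fallback.} If the clique reduction proves too fiddly, I would fall back to a reduction from a problem already known to combine connectivity/density with diameter, or reduce instead to showing hardness of the optimization version via the closely related result the paper announces — that CTC is hard to approximate within $2-\varepsilon$ — since that inapproximability (typically proved by a gap reduction, e.g., from a problem where the diameter optimum is either $1$ or $\ge 2$) immediately implies NP-hardness of the decision problem \ctckp. Concretely, a reduction producing instances whose optimal CTC has diameter $1$ in the YES case and diameter $\ge 2$ in the NO case settles Theorem~\ref{theorem.ctckp} with $d=1$; this is the cleanest packaging, and I would aim the write-up at producing exactly such a gap instance from Maximum Clique.
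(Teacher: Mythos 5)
Your core instinct is exactly right, and your ``fallback'' paragraph is in fact the paper's entire proof --- but your primary plan buries it under machinery that is not needed, and the one observation that makes the simple reduction go through is missing from your write-up. The paper reduces from Maximum Clique Decision by taking the input graph $G$ \emph{unchanged}, setting $d=1$, the trussness parameter to the clique target $k$, and, crucially, $Q=\emptyset$. Since the problem statement allows $Q=\{v_1,\dots,v_r\}\subseteq V$ to be empty, the containment condition $Q\subseteq H$ is vacuous, so no query-vertex placement, no gadgets, and no control of distances from $Q$ are required. The two directions are then immediate: a clique on at least $k$ vertices is a connected $k$-truss of diameter $1$; conversely, any connected $k$-truss $H$ has every vertex of degree at least $k-1$, hence $|V(H)|\ge k$, and $\diam(H)=1$ forces $H$ to be a clique, so $G$ contains a $k$-clique. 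This is precisely the ``diameter $1$ in the YES case, diameter $\ge 2$ in the NO case'' gap instance you describe at the end, realized with the identity construction.

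The genuine gap in your proposal is therefore not the choice of source problem but the execution: the gadget-based plan that occupies most of your argument is solving a problem you created for yourself. You worry about inflating edge supports in $G_0$ up to $k_0$ so that ``trussness alone no longer distinguishes cliques from non-cliques'' --- but the reduction does not need trussness to be indistinguishable; it needs the conjunction of trussness $k$ and diameter $1$ to characterize $k$-cliques, which it already does on the raw graph. Worse, you correctly identify that common-neighbor gadgets shrink distances and would undermine the diameter argument, and you leave that tension unresolved, so the primary route as written does not terminate in a proof. If you simply delete the construction section, promote the fallback to the main argument, and add the two missing justifications (that $Q=\emptyset$ is a legal instance, and that a $k$-truss has at least $k$ vertices because its minimum degree is $k-1$), you recover the paper's proof verbatim.
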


\begin{proof}
We reduce the well-known NP-hard problem of Maximum Clique (decision version) to \ctckp. Given a graph $G(V,E)$ and number $k$, the  Maximum Clique Decision problem is to check whether $G$ contains a clique of size $k$. From this, construct an instance of \ctckp, consisting of graph $G$, parameters $k$ and $d=1$, and the empty set of query nodes $Q=\emptyset$. 
\eat{
Determine whether a solution $H\subseteq G$ satifies the \ctckp conditions: i) $Q=\emptyset$ and $Q\subseteq H;$ ii) $H$ is a connected $k$-truss; and iii) $diam(H) \leq 1$. 
} 
We show that the instance of the Maximum Clique Decision problem is a YES-instance iff the corresponding instance of \ctckp is a YES-instance. 
Clearly, any clique with at least $k$ nodes is a connected $k$-truss with diameter $1$. On the other hand, given a solution $H$ for \ctckp, $H$ must contain at least $k$ nodes since $H$ is a $k$-truss, and $diam(H) = d = 1$, which implies $H$ is a clique.
\end{proof}

The hardness of \ctcp follows from this. The next natural question is whether \ctcp can be approximated. 

\eat{ 
By Theorem \ref{theorem.ctckp}, we can easily see that \ctcp is also NP-hard, because \ctckp is the decision version of \ctcp. Another way to see this is that given an instance of Maximum Clique, we can run an algorithm for \ctcp with parameter $k$ and check the diameter of the solution found. 
Then $G$ contains a $k$-clique iff the solution found has diameter $1$. 
} 

\eat{It is similar as the relation between Maximum Clique problem and Clique Decision problem. } 
\eat{
\begin{theorem}\label{theorem.ctcp}
The \ctcp is \textbf{NP-hard}.
\end{theorem}
} 



\stitle{Approximation. } For $\alpha \geq 1$, we say that an algorithm achieves an $\alpha$-approximation to the closest truss community (CTC) search problem if it outputs a connected $k$-truss subgraph $H\subseteq G$ such that $Q\subseteq H$, $\tau(H)=\tau(H^*)$ and $\diam(H)\leq \alpha\cdot\diam(H^*)$, where $H^*$ is the optimal CTC. That is, $H^*$ is a connected $k$-truss with the largest $k$ s.t. $Q\subseteq H^*$, and $diam(H^*)$ is the minimum among all such CTCs containing $Q$. Notice that the trussness of the output subgraph $H$ matches that of the optimal solution $H^*$ and that the approximation is only w.r.t. the diameter: the diameter of $H$ is required to be no more than $\alpha\cdot diam(H^*)$. 
\eat{In the following, we consistently use $H^*$ to denote the optimal CTC, i.e., $H^*$ is a connected $k$-truss with the largest $k$ s.t. $Q\subseteq H^*$, and $diam(H^*)$ is the minimum among all such CTCs containing $Q$. 
} 

\stitle{Non-Approximability. } We next prove that  \ctcp cannot be approximated within a factor better than $2$. We establish this result through a reduction, again from the Maximum Clique Decision problem to the problem of approximating \ctcp, given $k$. In the next section, we develop a $2$-approximation algorithm for \ctcp, thus essentially matching this lower bound. Notice that the \ctcp with given parameter $k$ is essentially the \ctckp. 

\laks{
\begin{theorem}
 Unless \kw{P} = \kw{NP}, for any $\varepsilon >0 $, the \ctcp with given parameter $k$ cannot be approximated in polynomial time within a factor $(2-\varepsilon)$ of the optimal.
\end{theorem}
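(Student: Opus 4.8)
The plan is to prove non-approximability by a \emph{gap-preserving} reduction from the Maximum Clique Decision problem to the task of approximating \ctcp\ with the parameter $k$ given, reusing essentially the construction of Theorem~\ref{theorem.ctckp} but exploiting a diameter gap of exactly $2$. First I would take an instance $(G,k)$ of Maximum Clique Decision and build the \ctcp\ instance (with parameter $k$, i.e.\ \ctckp\ without a diameter bound) consisting of the same graph $G$, the same $k$, and the empty query set $Q=\emptyset$. Since a $k$-clique is in particular a connected $k$-truss, if $G$ contains no connected $k$-truss at all --- which is testable in polynomial time by truss decomposition --- then $G$ has no $k$-clique and we answer NO; so from now on assume $G$ contains some connected $k$-truss, and let $H^*$ be an optimal (minimum-diameter) connected $k$-truss of $G$ containing $Q$.

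The key step is the gap. Every vertex of a $k$-truss has degree at least $k-1$, so any connected $k$-truss has at least $k$ vertices. A connected graph on at least two vertices has diameter $1$ if and only if it is complete. Hence a connected $k$-truss of diameter $1$ is a complete graph on at least $k$ vertices and therefore contains a $k$-clique; conversely $K_k$ itself is a connected $k$-truss of diameter $1$. Consequently: if $G$ has a $k$-clique then $\diam(H^*)=1$, while if $G$ has no $k$-clique then no connected $k$-truss of $G$ has diameter $1$, so $\diam(H^*)\ge 2$.

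Next comes the approximation argument. Suppose, for contradiction, that for some fixed $\varepsilon>0$ there is a polynomial-time algorithm $\mathcal{A}$ that achieves a $(2-\varepsilon)$-approximation for \ctcp\ with $k$ given. Run $\mathcal{A}$ on the constructed instance to obtain a connected $k$-truss $H\subseteq G$ containing $Q$ with $\diam(H)\le(2-\varepsilon)\,\diam(H^*)$. If $G$ has a $k$-clique, then $\diam(H)\le 2-\varepsilon<2$, which forces $\diam(H)=1$, so $H$ is a complete graph on at least $k$ vertices and we can read off a $k$-clique. If $G$ has no $k$-clique, then by the gap no connected $k$-truss of $G$ has diameter $1$, so the output $H$ satisfies $\diam(H)\ge 2$. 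Therefore $G$ has a $k$-clique if and only if $\mathcal{A}$'s output has diameter $1$, giving a polynomial-time decision procedure for Maximum Clique and hence $\kw{P}=\kw{NP}$.

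The main obstacle --- really the only nontrivial point --- is justifying the gap cleanly: one must be sure that, in the absence of a $k$-clique, the minimum-diameter connected $k$-truss has diameter at least $2$, which rests solely on the equivalence ``diameter $1$ $\iff$ complete'' together with the $\ge k$-vertex lower bound for $k$-trusses. The remaining care is purely bookkeeping about degenerate cases (using $Q=\emptyset$ as in Theorem~\ref{theorem.ctckp}, and disposing in polynomial time of instances where $G$ has no $k$-truss); no gap amplification is needed, since any factor strictly below $2$ already separates diameters $1$ and $2$, which is exactly why the bound takes the form $(2-\varepsilon)$.
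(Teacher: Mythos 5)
Your proof is correct and follows essentially the same route as the paper: a reduction from Maximum Clique Decision with $Q=\emptyset$, using the integrality of the diameter to turn any $(2-\varepsilon)$-approximation into an exact separator between diameter $1$ (a clique of size at least $k$ exists) and diameter at least $2$ (no $k$-clique). Your version is marginally more careful than the paper's in explicitly disposing of the degenerate case where $G$ contains no connected $k$-truss at all, but the core gap argument is identical.
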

} 

\begin{proof}

\eat{First, as discussed above, the Clique Decision problem asks, given $k$,  whether $G$ contains a clique of size $k$.} 
Suppose there exists a polynomial time algorithm $\mathbb{A}$ for the \ctcp with a given $k$ that provides a solution $H$ with an approximation factor $(2-\varepsilon)$ of the optimal solution $H^*$. Set the query nodes $Q=\emptyset$. By our assumption, we have $Q\subseteq H$, $\tau(H)=\tau(H^*) = k$ and $\diam(H)\leq (2-\varepsilon)\cdot\diam(H^*)$. Next, we use this approximation solution to exactly solve the Maximum Clique Decision problem as follows. Since the latter cannot be done in polynomial time unless \kw{P} = \kw{NP}, the theorem follows. 

Run algorithm $\mathbb{A}$ on a given instance $G$ of the Maximum Clique Decision problem, with parameter $k$ and query nodes $Q = \emptyset$. We claim that $G$ contains a clique of size $k$ iff $\mathbb{A}$ outputs a solution $H$ with $\tau(H) = k$ and $diam(H) = 1$. To see this, suppose $diam(H) = 1$, then the optimal solution $H^*$ has $\diam(H^*) \leq \diam(H) =1$, and $H^*$ is a connected $k$-truss, which shows $H^*$ is a clique of size $k$ in $G$. On the other hand, suppose $diam(H) \geq 2$. Then we have $2\cdot\diam(H^*) > (2-\varepsilon)\cdot \diam(H^*) \geq \diam(H) \geq 2$. Since diameter is an integer, we deduce that $diam(H^*) \geq 2$. In this case, $G$ cannot possibly contain a clique of size $k$, for if it did, that clique would be the optimal solution to the \ctcp on $G$, with parameter $k$, whose diameter is $1$, which contradicts the optimality of $H^*$. Thus, using algorithm $\mathbb{A}$, we can distinguish between the YES and NO instances of the Maximum Clique Decision problem. This was to be shown. 
\eat{due to the interger of $diam(H)$, if $diam(H) \geq 2$, and $\varepsilon >0 $,  we have $2\cdot\diam(H^*) > (2-\varepsilon)\cdot \diam(H^*) \geq \diam(H) \geq 2$, . Thus, we obtain $H^*$ is not a clique for $\diam(H^*) >1$. Then, we can conclude that $G$ doesn't contain a clique of size $k$; Otherwise, this clique of size $k$ can be the optimal solution $H^*$ with diameter of 1 for the \ctcp with parameter $k$, which is a contradiction. As a result, the Clique Decision problem can be solved in polynomial time by this method, which is impossible unless \kw{P} = \kw{NP}. }  

\end{proof}

\eat{ 
Based on above discussion, we can easily obtain the following theorem. 
\begin{theorem}
For any $\varepsilon >0 $, $(2-\varepsilon)$-approximation of \ctcp is \textbf{NP-hard}.
\end{theorem}
}

\section{Algorithms} \label{sec.framework}

In this section, we present a greedy algorithm called \framework for the CTC search problem. Then, we show that this algorithm achieves a 2-approximation to the optimal result. Finally, we discuss procedures for an efficient implementation of the algorithm and analyze its time and space complexity.

\subsection{Basic Algorithmic Framework}

Here is an overview of our algorithm \framework. First, given a graph $G$ and query nodes $Q$, we find a maximal connected \ktruss, denoted as $G_0$, containing $Q$ and having the largest trussness.\eat{\RV{We can apply the truss decomposition algorithm \cite{cohen2008,WangC12} to find $G_0$ as follows. We delete edges from $G$ in ascending order of edge support, which increases the trussness of the remaining graph accordingly, until $Q$ gets disconnected. Then, we can obtain the largest trussness $k$, and recover $G_0$ by keeping all $k$-truss edges.}} As $G_0$ may have a large diameter, we iteratively remove nodes far away from the query nodes, while maintaining the trussness of the remainder graph at $k$.

\stitle{Algorithm.} Algorithm \ref{algo:mindia} outlines a framework for finding a closest truss community based on a greedy strategy. For query nodes $Q$, we first find a maximal connected $k$-truss $G_0$ that contains $Q$, s.t.\ $k=\tau(G_0)$ is the largest (line 1). Then, we set $l = 0$. For all $u\in G_l$ and $q\in Q$, we compute the shortest distance between $u$ and $q$ (line 4), and obtain the vertex query distance $\dist_{G_l}(u, Q)$. Among all vertices, we pick up a vertex $u^*$ with the maximum $\dist_{G_l}(u^*, Q)$, which is also the graph query distance $\dist_{G_l}(G_l, Q)$ (lines 5-6). Next, we remove the vertex $u^*$ and its incident edges from $G_l$, and delete any nodes and edges needed to restore the \ktruss property of $G_l$ (lines 7-8). We assign the updated graph as a new $G_{l}$.  Then, we repeat the above steps until  $G_{l}$ does not have a connected subgraph containing $Q$ (lines 3-9). Finally, we terminate by outputting graph $R$ as the closest truss community, where $R$ is any graph $G' \in \{G_0, ..., G_{l-1}\}$ with the smallest graph query distance $\dist_{G'}(G', Q)$ (line 10). Note that each intermediate graph $G' \in \{G_0, ..., G_{l-1}\}$ is a $k$-truss with the maximum trussness as required.


\LL{
\begin{example}
We apply Algorithm \ref{algo:mindia} on $G$ in Figure \ref{fig.community} for $Q=\{q_1,$ $ q_2, $ $ q_3\}$. First, we obtain the 4-truss subgraph $G_0$ shaded in grey, using a procedure we will shortly explain. Then, we compute all shortest distances, and get the maximum vertex query distance as  $\dist_{G_0}(p_1, Q) =4$, and $u^*= p_1$. We delete node $p_1$ and its incident edges from $G_0$; we also delete $p_2$ and $p_3$, in order to restore the 4-truss property. The resulting subgraph is $G_1$. Any further deletion of a node in the next iteration of the while loop will induce a series of deletions in line 8, eventually making the graph disconnected or containing just a part of query nodes.
As a result, the output graph $R$, shown in Figure~\ref{fig.community}(b), is just $G_1$. Also $\dist_{R}(R, Q)=3$, and $R$ happens to be the exact CTC with diameter $3$, which is optimal.
\end{example}
}


\begin{algorithm}[t]
\small
\caption{\framework($G$, $Q$)} \label{algo:mindia}
\textbf{Input:} A graph $G=(V, E)$, a set of query nodes $Q=\{q_1, ..., q_r\}$.\\
\textbf{Output:} A connected $k$-truss $R$ with a small diameter.\\
\
\begin{algorithmic}[1]

\STATE  Find a maximal connected $k$-truss containing $Q$ with the largest $k$ as $G_{0}$ //{\tt see Algorithm \ref{algo:findg0}}.

\STATE  $l\leftarrow 0$;

\STATE	\textbf{while}  $\con_{G_{l}}(Q)=$ \textbf{true} \textbf{do}

\STATE  \hspace{0.3cm}  Compute $\dist_{G_{l}}(q, u)$, $\forall q\in Q$ and $\forall u\in G_{l}$;

\STATE  \hspace{0.3cm}  $u^{*} \leftarrow \arg\max_{u\in G_{l}} \dist_{G_l}(u, Q)$;

\STATE  \hspace{0.3cm}  $\dist_{G_l}(G_{l}, Q) \leftarrow \dist_{G_l}(u^*, Q)$;

\STATE  \hspace{0.3cm}  Delete $u^{*}$ and its incident edges from $G_{l}$;

\STATE  \hspace{0.3cm}  Maintain $k$-truss property of $G_l$ //{\tt see Algorithm \ref{algo:simple_trssmt}};

\STATE  \hspace{0.3cm}  $G_{l+1} \leftarrow G_{l}$; $l \leftarrow l+1$;

\STATE  $R \leftarrow \arg\min_{G'\in\{G_0, ..., G_{l-1}\}}{\dist_{G'}(G', Q)}$;

\end{algorithmic}
\end{algorithm}

\subsection{Approximation Analysis}
Algorithm \ref{algo:mindia} can achieve 2-approxiamtion to the optimal solution, that is, the obtained connected $k$-truss community $R$ satisfies $Q\subseteq R$, $\tau(R)= \tau(H^*)$ and $\diam(R) \leq 2\diam(H^*)$, for any optimal solution $H^*$. Since any graph in $\{G_0, ..., G_{l-1}\}$ is a connceted $k$-truss with the largest $k$ containing $Q$ by Algorithm \ref{algo:mindia}, and $R\in \{G_0, ..., G_{l-1}\}$, we have $Q\subseteq R$, and $\tau(R)= \tau(H^*)$. In the following, we will prove that $\diam(R) \leq 2\diam(H^*)$.  We start with a few key results. For graphs $G_1, G_2$, we write $G_1\subseteq G_2$ to mean $V(G_1)\subseteq V(G_2)$ and $E(G_1)\subseteq E(G_2)$.

\begin{fact}\label{lemma.dissubgraph}
Given two graphs $G_1$ and $G_2$ with $G_1\subseteq G_2$, for  $u, v\in V(G_1)$, $\dist_{G_2}(u, v) \leq \dist_{G_1}(u,v)$ holds. Moreover, if $Q\subseteq V(G_1)$, then $\dist_{G_2}(G_1, Q) \leq \dist_{G_1}(G_1, Q)$ also holds.
\end{fact}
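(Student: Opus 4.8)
The plan is to prove the two claims of Fact~\ref{lemma.dissubgraph} directly from the definition of shortest-path distance, using the monotonicity of path availability under subgraph containment.

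First I would establish the pointwise inequality $\dist_{G_2}(u,v) \leq \dist_{G_1}(u,v)$ for $u,v \in V(G_1)$. The key observation is that since $G_1 \subseteq G_2$ (meaning $V(G_1) \subseteq V(G_2)$ and $E(G_1) \subseteq E(G_2)$), every path between $u$ and $v$ in $G_1$ is also a path between $u$ and $v$ in $G_2$. Hence the set of $u$--$v$ paths in $G_1$ is a subset of the set of $u$--$v$ paths in $G_2$, so the minimum path length over the larger set can only be smaller or equal. If $u$ and $v$ are disconnected in $G_1$ then $\dist_{G_1}(u,v) = +\infty$ and the inequality holds trivially; otherwise a shortest path in $G_1$ witnesses $\dist_{G_2}(u,v) \leq \dist_{G_1}(u,v)$.

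Next I would derive the second claim from the first. Assume $Q \subseteq V(G_1)$. By definition, $\dist_{G_1}(G_1, Q) = \max_{u \in V(G_1), q \in Q} \dist_{G_1}(u, q)$, and similarly $\dist_{G_2}(G_1, Q) = \max_{u \in V(G_1), q \in Q} \dist_{G_2}(u, q)$ (note $V(G_1) \subseteq V(G_2)$ and $Q \subseteq V(G_2)$, so all these quantities are well defined). For each fixed pair $u \in V(G_1)$, $q \in Q$, both $u$ and $q$ lie in $V(G_1)$, so the pointwise inequality gives $\dist_{G_2}(u,q) \leq \dist_{G_1}(u,q)$. Taking the maximum over all such pairs on both sides preserves the inequality, yielding $\dist_{G_2}(G_1, Q) \leq \dist_{G_1}(G_1, Q)$.

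There is essentially no hard part here; this is a routine monotonicity argument, and the only thing to be slightly careful about is handling the $+\infty$ case cleanly and making sure the vertex sets over which the maxima are taken are the same on both sides (namely $V(G_1)$), which is exactly why the hypothesis $Q \subseteq V(G_1)$ is needed for the second statement.
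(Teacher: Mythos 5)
Your proposal is correct and follows the same route as the paper, which simply notes that the claim ``trivially follows from the fact that $G_2$ preserves paths between nodes in $G_1$''; you have just spelled out the path-subset argument and the passage to maxima explicitly. No issues.
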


\begin{proof}
Trivially follows from the fact that $G_2$ preserves paths between nodes in $G_1$.
\end{proof}

Recall that in Algorithm \ref{algo:mindia}, in each iteration $i$, a node $u^*$ with maximum $\dist(u^*, Q)$ is deleted from $G_i$, but $\dist_{G_i}(G_i, Q)$ is \emph{not} monotone nonincreasing during the process, hence  $\dist_{G_{l-1}}(G_{l-1}, Q)$ is not necessarily the minimum. Note that in Algorithm \ref{algo:mindia}, $G_l$ is not the last feasible graph (i.e., connected $k$-truss containing $Q$), but $G_{l-1}$ is.  The observation is shown in the following lemma.


\begin{lemma}\label{lemma.nondecreasing}
In Algorithm \ref{algo:mindia}, it is possible that for some  $ 0 \leq i < j <l$, we have $G_{j} \subset G_{i}$, and  $\dist_{G_i}(G_{i}, Q) < \dist_{G_j}(G_j, Q)$ hold.
\end{lemma}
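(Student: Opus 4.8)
The plan is to establish Lemma~\ref{lemma.nondecreasing} by an explicit construction, since it is merely an existence claim (``it is possible that''): we just need one graph $G$ and query set $Q$ on which Algorithm~\ref{algo:mindia} witnesses the non-monotone behavior. So the proof is really the design of a small gadget rather than a general argument.

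First I would describe the target behavior. We want the algorithm, starting from $G_0$, to delete the globally-furthest node $u^*$ and, via the $k$-truss maintenance step (line 8), be forced to cascade-delete several additional nodes — in particular nodes that were themselves \emph{close} to $Q$ and were ``propping up'' the trussness of a region that is moderately far from $Q$. Once those supporting nodes vanish, the remaining graph $G_j$ still satisfies $\con_{G_j}(Q)$ and is still a connected $k$-truss, but its graph query distance has gone \emph{up} relative to $G_i$, because the short alternative paths from some surviving vertex to a query node ran through the deleted region. The cleanest way to do this is to take two ``bridges'' between a query node and a far cluster: a long bridge (say length $3$) lying entirely inside a $k$-truss, and a shorter bridge (length $2$) that is triangle-rich only because of the presence of the node $u^*$ and a couple of its neighbors. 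Deleting $u^*$ triggers deletion of the short-bridge support, so the short bridge (being only a $2$- or $3$-truss on its own) is removed by the maintenance step, and now the far cluster is reachable from $Q$ only via the long bridge — increasing $\dist(\cdot,Q)$.

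Concretely I would draw $G_0$ as a fixed $k$ (say $k=4$) truss built from overlapping $K_4$'s or small truss blocks, mark the query nodes $Q$, identify the unique maximizer $u^*$ of $\dist_{G_0}(u,Q)$, and then trace through lines 7--9: after removing $u^*$ the maintenance Algorithm~\ref{algo:simple_trssmt} deletes exactly the block of nodes that no longer have support $\ge k-2$, yielding $G_1 = G_j$ with $j=1$, $i=0$; verify $G_1 \subset G_0$, $\con_{G_1}(Q)$ holds, and compute $\dist_{G_0}(G_0,Q)$ and $\dist_{G_1}(G_1,Q)$ directly from the picture to confirm the strict inequality $\dist_{G_0}(G_0,Q) < \dist_{G_1}(G_1,Q)$. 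That finishes the lemma. (The statement uses generic indices $i<j<l$, but exhibiting the phenomenon at $i=0$, $j=1$ already suffices.)

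The main obstacle is purely in gadget engineering: I need the furthest node $u^*$ to be distinct from, yet its removal to topple, the nodes sustaining the \emph{short} path — so the triangle structure has to be arranged so that the short path's edges have support exactly $k-2$ and lose a triangle when $u^*$ goes, while the long path's edges keep support $\ge k-2$ independently. Getting all support counts to land exactly on the $k-2$ threshold in a small picture, while simultaneously keeping $Q$ connected and $\tau = k$ throughout, takes a little care; everything else (applying Fact~\ref{lemma.dissubgraph} for $G_j \subset G_i$, reading off distances) is routine. In the write-up I would lean on a figure and simply narrate the cascade and the two distance computations rather than belabor the arithmetic.
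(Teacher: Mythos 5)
Your proposal is correct and matches the paper's approach: the paper's own proof paragraph merely notes that vertex query distances are monotone non-decreasing under edge/node removal, and the actual witness is the accompanying example (Figure~\ref{fig.remove}), where deleting the furthest node $t_3$ cascades to remove $t_1,t_2$ and thereby destroys the short path from $q_3$ to $Q$, raising the graph query distance from $2$ to $3$ --- precisely the ``cascade deletes the short bridge'' gadget you describe at $i=0$, $j=1$. The only difference is that the paper exhibits the concrete graph while you leave the gadget engineering as a to-do, but the mechanism and verification steps you outline are exactly those the paper carries out.
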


\begin{proof}
It is easy to be realized, because for a vertex $v\in G$, $\dist_{G}(v, Q)$ is non-decreasing monotone w.r.t.\ subgraphs of $G$. More precisely, for $v\in G_i \cap G_j$, $\dist_{G_i}(v, Q) \leq \dist_{G_j}(v, Q)$ holds.
\end{proof}

\eat{
\note[Laks]{An example may improve readability. Also, see if you agree with the change I made to the statement of the lemma above.}
\note[Xin]{I will add one more example soon.}
\note[Laks]{BTW, the remark in the proof above is confusing: it says distance function is monotone non-decreasing while above it says explicitly that distance is NOT monotone non-increasing. What is going on?}
\note[Xin] {The writting was not in the accurate description. These two things are not conflict. We say that $\dist_{G_i}(G_i, Q)$ is \emph{not} monotone nonincreasing, the variable is the decreased graph of $G_i$. On the other hand, we say that $\dist_{G}(v, Q)$ fuction is nondecreasing monotone with the decreased graph $G$ for a fixed vertex $v$.}
}

\begin{figure}[t]
\small
\vskip -0.1in
\centering
\includegraphics[width=0.8\linewidth]{./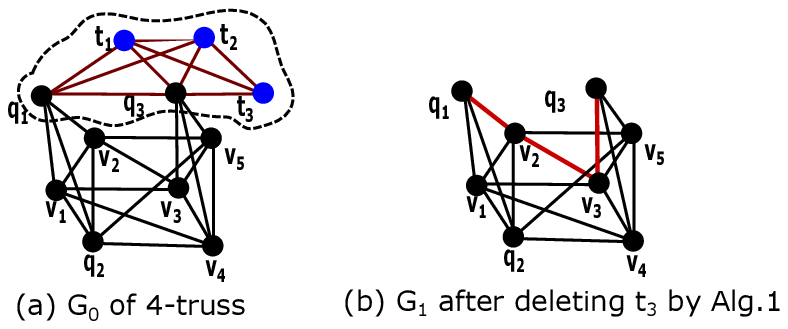}
\vskip -0.1in
\caption{Closest truss community example}\vskip -0.2in
\label{fig.remove}
\end{figure}

\begin{example}
{\em
To illustrate the lemma, suppose the graph in Figure \ref{fig.remove}(a) is $G_0$, a connected 4-truss containing the query nodes $Q=\{q_1\}$ in some initial graph $G$ (not shown) and suppose the maximum trussness of such a subgraph is 4.  
One of furthest nodes from $Q$ in $G_0$ is $t_3$, which has query distance $\dist_{G_0}(t_3, Q) = 2$. After deleting the node $t_3$ from $G_0$, we remove the all incident edges of nodes $t_1$, $t_2$ and $t_3$, since the 4-truss subgraph induced by $\{q_1, q_3, t_1, t_2, t_3\}$ in the dashed region does not exist any more in $G_1$ in Figure \ref{fig.remove}. Thus, we have the largest query distance as $\dist_{G_1}(G_1, Q)=$ $\dist_{G_1}(q_3, Q) =3$, which is larger than $\dist_{G_0}(G_0, Q)=2$.
\qed }
\end{example}

\LL{
We have an important observation that if an intermediate graph $G_{i}$ obtained by Algorithm~\ref{algo:mindia} contains an optimal solution $H^*$, i.e.,  $H^*\subset G_{i}$，and $\dist_{G_i}(G_{i}, Q) > \dist_{G_i}(H^*, Q)$, then  algorithm will not terminate at $G_{i+1}$.

\begin{lemma}\label{lemma.terminate}
In Algorithm \ref{algo:mindia}, for any intermediate graph $G_{i}$, we have  $H^*\subseteq G_{i}$, and $\dist_{G_i}(G_{i}, Q) > \dist_{G_i}(H^*, Q)$, then $G_{i+1}$ is a connected \ktruss containing $Q$ and $H^*\subseteq G_{i+1}$.
\end{lemma}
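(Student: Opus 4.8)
The plan is to argue that, under the stated hypotheses, deleting the furthest node $u^*$ from $G_i$ in iteration $i$ does not touch $H^*$ at all, so $H^*$ survives intact inside $G_{i+1}$, which then witnesses that $G_{i+1}$ is still a connected $k$-truss containing $Q$. First I would recall exactly what Algorithm~\ref{algo:mindia} does at step $i$: it picks $u^* = \arg\max_{u\in G_i}\dist_{G_i}(u,Q)$, so $\dist_{G_i}(u^*,Q) = \dist_{G_i}(G_i,Q)$. By hypothesis $\dist_{G_i}(G_i,Q) > \dist_{G_i}(H^*,Q) = \max_{v\in H^*}\dist_{G_i}(v,Q)$, so for every vertex $v\in V(H^*)$ we have $\dist_{G_i}(v,Q) < \dist_{G_i}(u^*,Q)$, and in particular $u^*\notin V(H^*)$.

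Next I would track the cascade of deletions in line~8 (truss maintenance). Removing $u^*$ and its incident edges can only decrease the support of edges incident to a neighbor of $u^*$; the truss-maintenance routine then iteratively deletes any edge whose support in the current graph has dropped below $k-2$, and removes a vertex once it becomes isolated. The key claim is that no edge of $H^*$ is ever removed in this cascade. Since $H^*$ is itself a connected $k$-truss, every edge $e\in E(H^*)$ satisfies $\sup_{H^*}(e)\ge k-2$, i.e.\ $e$ lies in at least $k-2$ triangles all of whose edges are in $H^*$. As long as all of $H^*$ is still present in the current graph, each such $e$ still has support at least $k-2$ there (support is monotone under supergraphs, Fact~\ref{lemma.dissubgraph}-style monotonicity of triangle counts), so $e$ is never a candidate for deletion. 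Formally I would do an induction on the steps of the cascade: the invariant is ``$E(H^*)$ is contained in the current edge set''; it holds initially because $u^*\notin V(H^*)$ so no edge of $H^*$ was removed with $u^*$; and it is preserved because the only edges the routine deletes are ones with current support $< k-2$, and we just argued no edge of $H^*$ qualifies while the invariant holds. Hence at the end of line~8, $H^*\subseteq G_{i+1}$. Since $H^*$ is connected and contains $Q$, $G_{i+1}$ contains a connected subgraph containing $Q$, so $\con_{G_{i+1}}(Q)$ is true and the loop does not terminate at this step; and since each intermediate graph produced by the algorithm is maximal $k$-truss with the maximum trussness (and $\tau(H^*)=k$), $G_{i+1}$ is a connected $k$-truss containing $Q$ with $H^*\subseteq G_{i+1}$.

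The main obstacle I anticipate is making the cascade argument airtight, specifically the interaction between vertex deletions and edge deletions in the maintenance routine: I need that a vertex of $H^*$ is never removed either, which follows because a vertex is only removed when it becomes isolated, but every $v\in V(H^*)$ still has its $H^*$-incident edges present by the invariant, hence nonzero degree. A second subtlety is that $\dist_{G_i}(H^*,Q)$ is measured using distances in $G_i$, not in $H^*$; but that is fine — the hypothesis is stated with $\dist_{G_i}$ throughout, and all I use is the pointwise comparison $\dist_{G_i}(v,Q) < \dist_{G_i}(u^*,Q)$ for $v\in V(H^*)$, which is exactly what the strict inequality hypothesis gives. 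I would close by noting that this lemma is the inductive engine for the 2-approximation bound: combined with Lemma~\ref{lemma.disdia} it will let us show that the algorithm runs long enough to reach a graph whose query distance is at most $\dist(H^*,Q)$, and hence whose diameter is at most $2\diam(H^*)$.
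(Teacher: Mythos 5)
Your proof is correct and takes essentially the same approach as the paper's: identify that the furthest node $u^*$ deleted in line~7 lies outside $H^*$ because of the strict query-distance inequality, and then argue that the truss-maintenance cascade of line~8 never removes an edge or vertex of $H^*$, since $H^*$ is itself a connected $k$-truss containing $Q$. Your inductive invariant on the deletion cascade merely makes explicit a step that the paper's proof asserts in one sentence, so there is no substantive difference.
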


\begin{proof}
Suppose $H^*\subseteq G_{i}$ and $\dist_{G_i}(G_{i}, Q) $ $> \dist_{G_i}(H^*, Q)$. Then  there exists a node $u\in G_{i} \setminus H^*$ s.t. $\dist_{G_i}(u, Q)$ $ = $ $\dist_{G_i}(G_i, Q)$ $ > \dist_{G_i}(H^*, Q)$. Clearly, $u\notin Q$.
\eat{
Thus, we can obtain that $u\notin Q$ for $Q\subseteq H$.
}
In the next iteration, Algorithm \ref{algo:mindia} will delete $u$ from $G_i$ (Step 7), and perform Step 8. The graph resulting from restoring the \ktruss property is $G_{i+1}$. Since $H^*$ is a connected k-truss containing $Q$,
\eat{and no vertices/edges are deleted in Step 8, thus}
the restoration step (line 8) must find a subgraph $G_{i+1}$ s.t. $H^* \subseteq G_{i+1}$, and $G_{i+1}$ is a connected $k$-truss containing $Q$. Thus, the algorithm will not terminate in iteration $(i+1)$.
\end{proof}
}

We are ready to establish the main result of this section. \RV{Our polynoimal algorithm can find a connected $k$-truss community $R$ having the minimum query distance to $Q$, which is optimal.

\begin{lemma}\label{lemma.optdist}
For any $H$ is a connected $k$-truss with the highest $k$ containing $Q$, $\dist_R(R, Q) \leq \dist_H(H, Q)$.
\end{lemma}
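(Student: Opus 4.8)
## Proof Proposal for Lemma~\ref{lemma.optdist}

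The plan is to show that the greedy Algorithm~\ref{algo:mindia} never terminates while any connected $k$-truss $H$ (with the maximum trussness, containing $Q$) still survives as a subgraph of the current graph, and that at the iteration just before $H$ stops being contained, the current graph has query distance no larger than $\dist_H(H,Q)$. Since $R$ is chosen (line~10) as the intermediate graph with the \emph{smallest} query distance over all $G_0,\dots,G_{l-1}$, this will give $\dist_R(R,Q) \le \dist_H(H,Q)$.

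First I would fix an arbitrary connected $k$-truss $H$ with $\tau(H)=k=\tau(G_0)$ and $Q\subseteq H$; note $H\subseteq G_0$ since $G_0$ is a \emph{maximal} such truss. Then I would argue by induction that for every intermediate graph $G_i$ (for $i$ up to the first index where $H\not\subseteq G_{i+1}$, if any), we have $H\subseteq G_i$. The inductive step is essentially Lemma~\ref{lemma.terminate}: if $H\subseteq G_i$ and the graph query distance $\dist_{G_i}(G_i,Q)$ is strictly larger than $\dist_{G_i}(H,Q)$, then the deleted node $u^*$ lies in $G_i\setminus H$ (it attains $\dist_{G_i}(\cdot,Q) > \dist_{G_i}(H,Q) \ge \dist_{G_i}(v,Q)$ for all $v\in H$), so line~8's $k$-truss restoration still retains $H$, giving $H\subseteq G_{i+1}$. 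The only way the induction can break is if at some $G_i$ we have $\dist_{G_i}(G_i,Q) = \dist_{G_i}(H,Q)$ (equality — it cannot be strictly smaller since $H\subseteq G_i$ and query distance is monotone by Fact~\ref{lemma.dissubgraph}).

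So let $i^\star$ be the \emph{first} index with $\dist_{G_{i^\star}}(G_{i^\star},Q) = \dist_{G_{i^\star}}(H,Q)$; by the argument above $H\subseteq G_{i^\star}$, so $G_{i^\star}$ is itself a connected $k$-truss containing $Q$, hence it is a valid candidate for the choice of $R$ and $i^\star \le l-1$. Using Fact~\ref{lemma.dissubgraph} (monotonicity under subgraph containment, since $H\subseteq G_{i^\star}$): $\dist_{G_{i^\star}}(H,Q) \le \dist_H(H,Q)$. Chaining,
\[
\dist_R(R,Q) \;\le\; \dist_{G_{i^\star}}(G_{i^\star},Q) \;=\; \dist_{G_{i^\star}}(H,Q) \;\le\; \dist_H(H,Q),
\]
where the first inequality is the definition of $R$ in line~10. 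If no such $i^\star$ exists, then the algorithm keeps deleting nodes strictly outside $H$ and $H$ survives in every $G_i$; the loop must still terminate (the graph shrinks each iteration), and at termination the last feasible graph $G_{l-1}$ contains $H$, so the same monotonicity gives $\dist_R(R,Q)\le \dist_{G_{l-1}}(G_{l-1},Q)$ — but I would instead observe that once $G_i$ shrinks down so that $G_i = H$ would force equality, so this degenerate case is subsumed; cleaner is to note that whenever $H\subseteq G_i$, either strict inequality holds (and we recurse) or equality holds (and we are done as above), and since $G_i$ strictly decreases it cannot recurse forever, so an $i^\star$ with $H\subseteq G_{i^\star}$ and $\dist_{G_{i^\star}}(G_{i^\star},Q)=\dist_{G_{i^\star}}(H,Q)$ is always reached.

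The main obstacle is the bookkeeping around termination: making sure that the index $i^\star$ at which the current graph's query distance "catches down" to $H$'s is genuinely among $\{0,\dots,l-1\}$ (i.e., that the algorithm has not already stopped), and handling the boundary case where $H$ shrinks all the way to equal $G_i$. Both are resolved by the invariant "$H\subseteq G_i \Rightarrow$ algorithm has not terminated at step $i$", which follows from Lemma~\ref{lemma.terminate} together with the trivial fact that a graph containing the connected $k$-truss $H\supseteq Q$ is itself feasible, so $\con_{G_i}(Q)$ holds.
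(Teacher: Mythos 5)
Your proposal is correct and follows essentially the same route as the paper's proof: both hinge on Lemma~\ref{lemma.terminate} (so that while the current graph's query distance strictly exceeds $H$'s, $H$ survives the deletion and restoration steps), on the monotonicity of distances under subgraph containment (Fact~\ref{lemma.dissubgraph}), and on the $\arg\min$ choice of $R$ in line~10. The paper merely organizes the argument as a two-case split on whether $H\subseteq G_{l-1}$ (with the first deleted vertex of $H$ necessarily attaining the maximum query distance in the other case), whereas you package the same content as an induction locating the first index where $\dist_{G_i}(G_i,Q)=\dist_{G_i}(H,Q)$; the two are equivalent.
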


\begin{proof}
The following cases arise for $G_{l-1}$, which is the last feasible graph obtained by Algorithm~\ref{algo:mindia}.

\underline{Case (a)}: $H \subseteq G_{l-1}$.
We have   $\dist_{G_{l-1}}$ $(G_{l-1}, Q) $ $\leq \dist_{G_{l-1}}$ $(H, Q)$; for otherwise, if $\dist_{G_{l-1}}$ $(G_{l-1}, Q) $ $ > \dist_{G_{l-1}} $ $(H, Q)$, we can deduce from Lemma \ref{lemma.terminate} that $G_{l-1}$ is not the last feasible graph obtained by Algorithm~\ref{algo:mindia}, a contradiction. Thus, according to Step 10 in Algorithm \ref{algo:mindia} and $\dist_{G_{l-1}}(G_{l-1}, Q) $ $\leq \dist_{G_{l-1}}(H, Q)$, we have $\dist_R(R, Q)\leq $
$\dist_{G_{l-1}}(G_{l-1}, Q) \leq $ $\dist_{G_{l-1}}(H, Q) \leq $ $ \dist_{H}(H, Q)$.

\underline{Case (b)}: $H\nsubseteq G_{l-1}$.  There exists a vertex $v\in H$ deleted from one of the subgraphs $\{G_0, ..., G_{l-2}\}$. Suppose the first deleted vertex $v^*\in H$ is in graph $G_i$, where $0\leq $ $i\leq l-2$, then $v^*$ must be deleted in Step 7, but not in Step 8. This is because each vertex/edge of $H$ satisfies the condition of $k$-truss, and will not be removed before any vertex is removed from
\LL{$G_i$}. Then, we have $\dist_{G_i}(G_i, Q) $ $= \dist_{G_i}(v^*, Q) $ $= \dist_{G_i}(H, Q)$, and  $\dist_{G_i}(G_i, Q) \geq $ $ \dist_{R}(R, Q)$ by Step 10. As a result,  $\dist_{R}(R, Q) $ $\leq $ $\dist_{G_i}(H, Q)$ $ \leq $ $ \dist_{H}(H, Q)$.
\end{proof}

Based on the preceding lemmas, we have:
\begin{theorem}
Algorithm \ref{algo:mindia} provides a 2-approximation to the \ctcp as $diam(R) \leq 2diam(H^*)$.
\end{theorem}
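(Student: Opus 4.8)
The plan is to combine Lemma~\ref{lemma.optdist} with the two inequalities of Lemma~\ref{lemma.disdia}. The key point is that Lemma~\ref{lemma.optdist} already gives us the strong claim that $R$ minimizes the \emph{query distance} among all connected $k$-trusses with the largest $k$ containing $Q$; in particular, applying it with $H = H^*$ (the optimal CTC, which is indeed a connected $k$-truss of maximum trussness containing $Q$) yields $\dist_R(R,Q) \le \dist_{H^*}(H^*,Q)$. So the proof is essentially just a diameter-versus-query-distance conversion.

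First I would invoke the right-hand inequality of Lemma~\ref{lemma.disdia} on the graph $R$ with query set $Q$: since $Q \subseteq R$, we get $\diam(R) \le 2\dist_R(R,Q)$. Next, using Lemma~\ref{lemma.optdist} with $H = H^*$, we have $\dist_R(R,Q) \le \dist_{H^*}(H^*,Q)$. Finally, apply the left-hand inequality of Lemma~\ref{lemma.disdia} to $H^*$ with query set $Q$ (again $Q\subseteq H^*$) to get $\dist_{H^*}(H^*,Q) \le \diam(H^*)$. Chaining these three steps gives
\[
\diam(R) \;\le\; 2\dist_R(R,Q) \;\le\; 2\dist_{H^*}(H^*,Q) \;\le\; 2\diam(H^*),
\]
which is exactly the desired $\diam(R) \le 2\diam(H^*)$. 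Together with the already-noted facts that $Q \subseteq R$ and $\tau(R) = \tau(H^*)$ (because $R \in \{G_0,\dots,G_{l-1}\}$ and every such $G_i$ is a connected $k$-truss with the maximum trussness containing $Q$), this establishes that Algorithm~\ref{algo:mindia} is a $2$-approximation for the \ctcp.

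I do not anticipate a real obstacle here, since all the heavy lifting has been done in the preceding lemmas; the only thing to be careful about is that the query distances $\dist_R(R,Q)$ and $\dist_{H^*}(H^*,Q)$ are computed \emph{within} the respective subgraphs $R$ and $H^*$ (not in the ambient $G$), so that Lemma~\ref{lemma.disdia} applies verbatim to each of them — and that is precisely how those quantities are defined in Lemma~\ref{lemma.optdist}. One subtlety worth a sentence in the write-up is why Lemma~\ref{lemma.optdist} is applicable to $H^*$: we need $H^*$ to be a connected $k$-truss with the \emph{highest} $k$ containing $Q$, which holds by the definition of the optimal CTC. Everything else is a three-line inequality chain.
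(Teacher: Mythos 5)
Your proposal is correct and follows exactly the paper's own argument: invoke Lemma~\ref{lemma.optdist} with $H=H^*$ to get $\dist_R(R,Q)\le\dist_{H^*}(H^*,Q)$, then chain $\diam(R)\le 2\dist_R(R,Q)\le 2\dist_{H^*}(H^*,Q)\le 2\diam(H^*)$ via the two inequalities of Lemma~\ref{lemma.disdia}. The only difference is that you spell out the applicability conditions (why $H^*$ qualifies for Lemma~\ref{lemma.optdist}, and that the distances are intrinsic to $R$ and $H^*$) a bit more explicitly than the paper does.
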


\begin{proof}
Since $\dist_R(R, Q) \leq \dist_{H^*}(H^*, Q)$ by Lemma \ref{lemma.optdist}, we get $diam(R) \leq $ $ 2\dist_R(R, Q) $ $\leq $ $2\dist_{H^*} (H^*, Q) $ $ \leq $ $2diam(H^*)$ by Lemma \ref{lemma.disdia}. The theorem follows from this.
\end{proof}
}

\subsection{K-truss Identification and Maintenance}
\RV{
In this section, we introduce the detailed implementation of Algorithm~\ref{algo:mindia}. Finding $G_0$, the maximal connected $k$-truss containing $Q$ with the largest trussness $k$, is a basic primitive in our problem. A straightforward method is to apply a truss decomposition algorithm \cite{WangC12}, and delete edges in ascending order of edge support from $G$, until $Q$ becomes disconnected. Then we can obtain the largest trussness $k$ and recover $G_0$ by keeping all $k$-truss edges. However, this method is quite costly. To find $G_0$ efficiently, we design an index structure. The index is constructed by organizing edges according to their trussness.}


\eat{Based on the index, we propose an efficient algorithm to detect $G_0$, which is the maximal connected $k$-truss containg $Q$ with the largest $k$.}

















\stitle{Index Construction.} We first apply a truss decomposition algorithm such as \cite{WangC12} and compute the trussness of each edge of graph $G$.  \LL{We omit the details of this algorithm due to space limitation.}  

Based on the obtained edge trussness, we construct our truss index as follows. For each vertex $v\in V$, we sort its neighbors $N(v)$ in descending order of the edge trussness $\tau(e(v,u))$, for $u\in N(v)$.  For each distinct trussness value $k\geq 2$, we mark the position of the first vertex $u$ in the sorted adjacency list where $\tau(e(u,v))=k$.  This supports efficient retrieval of $v$'s incident edges with a certain trussness value. The vertex trussness of $v$ is also kept as $\tau(v) = \max\{\tau(v,u)| u\in N(v)\}$, which is the trussness of the first edge in the sorted adjacency list.  Moreover, we build a hashtable to keep all the edges and their trussness values. \LL{This is identical to the simple truss index of \cite{huang2014} and we refer to it as the truss index.}

In the following, we will show that this truss index is sufficient to design an algorithm for \LL{finding the maximal connected $k$-truss containing given query nodes $Q$ in time $O(m')$, where $m' = |E(G_0)|$. This time complexity is essentially \emph{optimal}}.
\eat{Thus, no other complex indexes are not necessary, which will not outperform this time complexity. }
We remark the complexity of this $k$-truss index construction below.

\begin{remark}\label{remark.index}
The construction of this truss index takes $O(\rho \cdot m)$ time and $O(m)$ space, where $\rho$ is the arboricity of graph $G$, i.e., the minimum number of spanning forests needed to cover all edges of $G$. Notice that $\rho \leq \min\{d_{max}, \sqrt{m}\}$ \cite{ChibaN85}. 
\end{remark}

\eat{
\note[Laks]{[22] reports that their truss decomposition algorithm takes $O(m\sqrt{m})$ time. If we use their algorithm, how come the index construction takes less time than that (since arboricity can be smaller)?}
\note[Xin]{The time complexity $O(m^{1.5})$ in [22] is not tight enough for their algorithm. This can be improved to $O(\rho m)$ using other analysis methods without changing algorithm.}
\note[Laks]{Which paper is this other method due to? We should cite it.}
\note[Xin]{I put the paper \cite{ChibaN85}  here.}
}

\begin{algorithm}[t]
\small
\caption{Find$G_0$($G$, $Q$)} \label{algo:findg0}
\textbf{Input:} A graph $G=(V, E)$, a set of query nodes $Q=\{q_1, ..., q_r\}$.\\
\textbf{Output:} A connected $k$-truss $G_0$ containing $Q$ with the largest $k$.\\
\
\begin{algorithmic}[1]

\STATE  $k \leftarrow \min $ $\{\tau(q_1),$ $ ..., \tau(q_r)\}$ //{\tt see Lemma \ref{lemma.largestk}};

\STATE  $V(G_0)\leftarrow \emptyset$; $S_k = Q$;

\STATE	\textbf{while}  $\con_{G_0}(Q)=$ \textbf{false} \textbf{do}

\STATE  \hspace{0.3cm}  \textbf{for} $v \in S_k$ \textbf{do}

\STATE  \hspace{0.3cm}  \hspace{0.3cm}  \textbf{if} $v\in V(G_0)$ \textbf{then}

\STATE  \hspace{0.3cm}  \hspace{0.3cm}  \hspace{0.3cm}  $k_{max} \leftarrow k+1$;

\STATE  \hspace{0.3cm}  \hspace{0.3cm}  \textbf{else}

\STATE  \hspace{0.3cm}  \hspace{0.3cm}  \hspace{0.3cm}  $k_{max} \leftarrow +\infty$; \RV{$V(G_0)\leftarrow V(G_0) \cup \{v\}$;}

\STATE  \hspace{0.3cm}  \hspace{0.3cm}	\textbf{for} $(v, u)\in G$  with $k \leq \tau(v,u)< k_{max}$ \textbf{do}

\STATE  \hspace{0.3cm}  \hspace{0.3cm}	\hspace{0.3cm}	$G_0 \leftarrow G_0 \cup \{(v,u)\}$;

\STATE  \hspace{0.3cm}  \hspace{0.3cm}	\hspace{0.3cm}	\textbf{if} $u\notin S_k$ \textbf{then}
 $S_k \leftarrow S_k \cup \{u\}$;

\STATE  \hspace{0.3cm}  \hspace{0.3cm}	$l \leftarrow \max\{\tau(v, u)| \RV{(v,u)} \notin G_0\}$;

\STATE  \hspace{0.3cm}  \hspace{0.3cm}	$S_l \leftarrow S_l \cup \{v\}$;

\STATE  \hspace{0.3cm}  $k \leftarrow k-1$;

\STATE  Compute the edge support $sup(v,u)$ in $G_0$, for all $(v,u)\in G_0$;

\end{algorithmic}
\end{algorithm}

\stitle{Finding $G_0$.} \RV{Based on the index, we present Algorithm \ref{algo:findg0} for finding $G_0$, the maximal connected $k$-truss containing $Q$ with the largest trussness $k$. We initialize $G_0$ to be the query vertex set $Q$, and iteratively add the edges of $G$ in the decreasing order of trussness, until $G_0$ gets connected.  
} 

The initial trussness level of the edges to be included in $G_0$ is computed as $k = \min\{\tau(q_1),$ $ ..., \tau(q_r)\}$ (line 1). This is motivated by the fact that, by  Lemma~\ref{lemma.largestk}, for any $k'>k$, no connected $k'$-truss can contain $Q$. We use $S_k$ to denote the set of nodes to be visited within level $k$. We start with $S_k = Q$ (line 2).
For a given $k$, we process each node $v\in S_{k}$, and visit its neighbors in a BFS manner. \RV{Then, we insert those of its incident edges $(v, u)$, with $k\leq \tau(v,u)\ \leq k_{max}$ into $G_0$, where $\kmax$ is the maximum possible trussness of unvisited edges.  This is because all these edge should be present in a connected $k$-truss.}
\RV{Meanwhile, if the neighbor $u$ is not in $S_{k}$, we add $u$ into $S_{k}$ (line 11), since unvisited edges incident to $u$ may have trussness no less than $k$.} After checking all edges incident to $v$, we add $v$ to $S_{l}$, where $l = \max\{\tau(v, u)\mid u \in N(v), \tau(v,u) < k\}$ (line 12-13). \RV{Notice that
$l$ is the next highest level for which a connected $l$-truss contains the node $v$, which can avoid scanning the neighbor set of $v$ at each level.}  After traversing all vertices in $S_{k}$, the algorithm checks whether $Q$ is connected in $G_0$. If yes, the algorithm terminates, and $G_0$ is returned; otherwise, we decrease the present level $k$ by 1 (line 14), and repeat the above steps (lines 4-14). After obtaining $G_0$, we compute all edge supports by counting triangles in $G_0$, which is used for the $k$-truss maintenance (line 15). 

\eat{
\LL{We initialize $G_0$ to the empty graph and expand it level by level, from highest to lowest, where ``level'' corresponds to trussness. The initial level is computed as the $k = \min\{\tau(q_1),$ $ ..., \tau(q_r)\}$ (line 1). This is motivated by the fact that, by  Lemma~\ref{lemma.largestk}, for any $k'>k$, no connected $k'$-truss can contain $Q$. We use $S_k$ to denote the set of nodes to be visited within level $k$. We start with $S_k = Q$ (line 2).
For a given $k$, we process each node $v\in S_{k}$, and visit its neighbors in a BFS manner, and then insert those of its incident edges $(v, u)$, with $k\leq \tau(v,u)\ \leq k_{max}$ into $G_0$. The truss index facilitates efficient access of $v$'s neighbors based on their trussness. We use $\kmax$ to keep track of the maximum possible trussness of unvisited edges. Whenever the current node $v\in V(G_0)$ is a node newly added to $G_0$, we set $k_{max} = +\infty$; otherwise, we set $k_{max}=k+1$ (line 5-8). We only need to check the edges at the levels between $k$ and $k_{max}$ (line 4-11). If the neighbor $u$ is not in $S_{k}$, we add $u$ into $S_{k}$ (line 11). After checking all edges incident to $v$, we add $v$ to $S_{l}$, where $l = \max\{\tau(v, u)\mid u \in N(v), \tau(v,u) < k\}$ (line 12-13). Notice that
$l$ is the next highest level for which a connected $l$-truss contains the node $v$.  After traversing all vertices in $S_{k}$, the algorithm checks whether $Q$ is connected in $G_0$. If yes, the algorithm terminates, and $G_0$ is returned; otherwise, we decrease the present level $k$ by 1 (line 14), and repeat above steps (lines 4-14). After obtaining $G_0$, we compute all edge supports by counting triangles in $G_0$, which is used for the $k$-truss maintenance (line 15). This can be done efficiently again using the truss index.
}
}

The following example illustrates the algorithm.

\begin{example}
\vspace*{-0.2cm}
{\em Consider the graph $G$ in Figure \ref{fig.FindG0} with $Q=\{q_1, $ $ q_2\}$. The trussness of each edge is  displayed, e.g., $\tau(q_1, v_1)=4$. Now, we apply Algorithm \ref{algo:findg0} on $G$ to find $G_0$ containing $Q$. We can verify that $\tau(q_1)=\tau(q_2)=4$ so we start with level $k=4$ and set $S_{4}=\{q_1, q_2\}$. Then, we process the node $q_1\in S_4$, and insert all its incident edges into $G_0$, for the trussness of each edge is 4. Meanwhile, all its neighbors are inserted into $S_4$. We repeat above process for each node in $S_4$. Note that for nodes $t_1$, $t_2$,  $\tau(t_1, t_2)=2$, so we insert $t_1, t_2$ into $S_2$ (lines 11-12 of Algorithm \ref{algo:findg0}). Then, at level $k=4$, we get the 4-truss as the whole graph in Figure \ref{fig.FindG0} minus  the edge $(t_1, t_2)$, for  $\tau(t_1, t_2) =2$. Since the current $G_0$ is not connected, we decrease the truss level $k$ to 3, and find that $S_3 = \emptyset$. Then, we decrease $k$ to $2$, and find that $S_2=\{t_1, t_2\}$. So we expand from the edge incident to $t_1$, and insert the edge $(t_1, t_2)$ into $G_0$, and find that the resulting graph contains $Q$ and is connected. In this example, $G_0$ happens to coincide with $G$.
\qed
}
\end{example}

\begin{figure}[t]
\small
\scriptsize
\vskip -0.1in
\centering
\includegraphics[width=0.45\linewidth]{./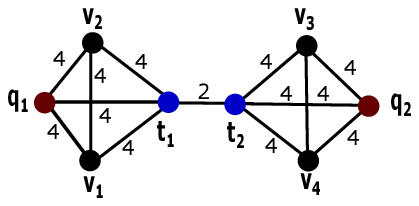}
\vspace{-0.2cm}
\caption{An example graph $G$ of finding $G_0$}
\label{fig.FindG0}
\end{figure}
\vskip -0.2in

\vspace*{-0.1cm}
\begin{remark}
Based on the truss index, for each vertex $v$, in line 9 of Algorithm~\ref{algo:findg0}, each edge $(v,u)$ can be accessed constant time using the sorted adjacent list of $v$, and in line 12, we can compute $l$ in constant time. Algorithm~\ref{algo:findg0} takes time $O(m')$ where $m' = |E(G_0)|$.
\end{remark}

\vspace*{-0.1cm}
\stitle{Computing Query Distance.} For a vertex $v$, to compute the query distance $\dist_{G_i}(v,Q)$, we need to perform $|Q|$ BFS traversals on graph $G_i$. Specifically, for each query node $q\in Q$, with one BFS traversal starting from $q$ in $G_i$, we can obtain the shortest distance $\dist_{G_i}(v, q)$ for each node $v \in G_i$. Then, $\dist_{G_i}(v,Q)$ is the maximum of all shortest distances $\dist_{G_i}(v, q)$, for $q\in Q$.

{
\begin{algorithm}[t]
\small
\caption{\trussmt($G$, $V_d$)} \label{algo:simple_trssmt}
\textbf{Input:} A graph $G=(V, E)$, a set of nodes to be removed as $V_d$.\\
\textbf{Output:} A $k$-truss graph.\\
\
\begin{algorithmic}[1]

\STATE  $S \leftarrow \emptyset$; \RV{//$S$ is the set of removed edges. }

\STATE 	\textbf{for} $v\in V_d$ \textbf{and} $(v, u)\in G$ \textbf{do}

\STATE 	\hspace{0.3cm}  $S \leftarrow S \cup (v, u)$;

\STATE  \textbf{for} $(v, u) \in S$ \textbf{do}

\STATE 	\hspace{0.3cm}  \textbf{for} $w \in N(v) \cap N(u)$ \textbf{do} \RV{\\\hspace{0.3cm} \hspace{0.3cm}  // Update the support of edges $(v, w)$ and $(u,w)$ }

\STATE 	\hspace{0.3cm}  \hspace{0.3cm}	$sup(v, w) \leftarrow sup(v,w)-1$; $sup(u, w) \leftarrow sup(u,w)-1$;

\STATE  \hspace{0.3cm}  \hspace{0.3cm}  \textbf{if} $sup(v, w) < k-2$ \textbf{and} $(v,w)\notin S$ \textbf{then} $S \leftarrow S \cup (v, w)$;

\STATE  \hspace{0.3cm}  \hspace{0.3cm}  \textbf{if} $sup(u, w) < k-2$ \textbf{and} $(u,w)\notin S$ \textbf{then} $S \leftarrow S \cup (u, w)$;

\STATE  \hspace{0.3cm}  Remove $(v, u)$ from $G$;

\STATE  Remove isolated vertices from $G$;

\end{algorithmic}
\end{algorithm}
}

{

\stitle{K-truss Maintenance.} Algorithm \ref{algo:simple_trssmt} describes the procedure for maintaining $G$ as a $k$-truss after the deletion of nodes $V_d$ from $G$. In Algorithm~\ref{algo:mindia}, $V_d=\{u^*\}$ (see line 8).\footnote{In Section~\ref{sec.fast}, we will discuss deleting a set of nodes $V_d$ in batch.} \RV{Generally speaking, after removing nodes $V_d$ and their incident edges from $G$, $G$ may not be a $k$-truss any more, or $Q$ are disconnected. Thus, Algorithm \ref{algo:simple_trssmt} iteratively deletes edges having less than $(k-2)$ triangles and nodes disconnected with $Q$ from $G$, until $G$ becomes a connected $k$-truss containing $Q$. }

 Algorithm \ref{algo:simple_trssmt} firstly pushes all edges incident to nodes $V_d$ into set $S$ (lines 1-3). Then, for each edge $(u,v)\in S$, the algorithm checks every triangle $\triangle_{uvw}$ where $w\in N(u)\cap N(v)$, and decreases the support of edges $(u,w)$ and $(v,w)$ by 1; For any edge $e \notin S$, with resulting support $\sup(e)< k-2$, $e$ is added to $S$. After traversing all triangles containing $(u,v)$, the edge $(u,v)$ is deleted from $G$. This process continues until $S$ becomes empty (lines 4-9), and then the algorithm removes all isolated vertices form $G$ (line 10).
}

\subsection{Complexity analysis}
\LL{In the implementation of Algorithm \ref{algo:mindia}, we do not need to keep all immediate graphs, but just record the removal of vertices/edges at each iteration. Let $G_0$ be the maximal connected $k$-truss found in line 1 of Algorithm~\ref{algo:mindia}. Let $n'=|V(G_0)|$ and $m'=|E(G_0)|$, and let $d'_{max}$ be the maximum degree of a vertex in $G_0$.
}

At each iteration $i$ of Algorithm \ref{algo:mindia}, we delete at least one node and its incident edges from $G_i$. Clearly, the number of removed edges is no less than $k-1$, thus the total number of iterations is $t \leq \min\{n'-k, m'/(k-1)\}$, i.e., $t$ is $O(\min\{n', m'/k\})$.  We have:

\begin{theorem}\label{theorem.framework}
Algorithm \ref{algo:mindia} takes $O((|Q|t+\rho)m')$ time and  $O(m')$ space, where $t\in O(\min\{n', m'/k\})$, and $\rho$ is the arboricity of graph $G_0$. Furthermore, we have $\rho \leq \min\{d'_{max}, \sqrt{m'}\}$.
\end{theorem}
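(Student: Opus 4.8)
The plan is to split the running time of Algorithm~\ref{algo:mindia} into three parts and bound each separately: (i) the one-time cost of computing $G_0$ and all edge supports in $G_0$ (line~1); (ii) the per-iteration cost of the while loop (lines~3--9), which further decomposes into the query-distance computation (line~4), the selection of the furthest node (lines~5--6), and the $k$-truss maintenance (line~8); and (iii) a bound on the number of iterations $t$. Summing these yields the claimed time bound, after which I would verify the $O(m')$ space bound and the inequality on $\rho$ separately. For part~(iii): in every iteration at least one vertex $u^*$ and all its incident edges are removed, and restoring the $k$-truss property only deletes more. Since each vertex of a $k$-truss has degree at least $k-1$, every iteration deletes at least one vertex and at least $k-1$ edges, so $t \le \min\{\,n'-k,\ m'/(k-1)\,\}$, i.e.\ $t\in O(\min\{n',\ m'/k\})$.

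For part~(i): by the remark following Algorithm~\ref{algo:findg0}, finding $G_0$ from the truss index costs $O(m')$, because each incident edge of a visited vertex is touched a constant number of times through the sorted adjacency lists and the value $l$ in line~12 is obtained in $O(1)$. Computing $sup(v,u)$ for every $(v,u)\in E(G_0)$ (line~15) is exactly the task of listing all triangles of $G_0$; using the arboricity-based triangle enumeration of \cite{ChibaN85} this costs $O(\rho m')$ time, which dominates the $O(m')$ for finding $G_0$. Hence part~(i) is $O(\rho m')$.

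For part~(ii): in a single iteration, computing $\dist_{G_i}(v, Q)$ for all $v$ needs one BFS from each query node, i.e.\ $|Q|$ traversals, each $O(|E(G_i)|)=O(m')$, and selecting $u^*$ costs $O(n')\subseteq O(m')$; over all $t$ iterations this is $O(|Q|\,t\,m')$. For the $k$-truss maintenance I would argue by amortization over the \emph{whole} run of Algorithm~\ref{algo:mindia}, not per iteration: each edge of $G_0$ enters the set $S$ and is deleted at most once across all iterations, and processing a deleted edge $(v,u)$ does $O(|N(v)\cap N(u)|)$ work; summing $|N(v)\cap N(u)|$ over all edges $(v,u)$ counts each triangle of $G_0$ a constant number of times, so the total is $O(\rho m')$, again by \cite{ChibaN85}, with $O(1)$ bookkeeping per triangle (two support decrements and a membership test in $S$, e.g.\ using a hash set for $S$ and the edge hashtable for the neighbor intersection). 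Adding the three parts gives $O(\rho m') + O(|Q|\,t\,m') + O(\rho m') = O((|Q|\,t + \rho)\,m')$. For space, we never materialize $G_0,\dots,G_{l-1}$ simultaneously: we keep one adjacency-list copy of the current $k$-truss together with a log of removed vertices/edges, which is $O(m')$, and BFS uses $O(n')$ scratch space, so $O(m')$ overall; the bound $\rho \le \min\{d'_{max},\sqrt{m'}\}$ is the standard arboricity inequality of \cite{ChibaN85} applied to $G_0$.

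I expect the main obstacle to be the amortized analysis of the $k$-truss maintenance: although one iteration can trigger a long cascade of edge deletions, one must argue that the total deletion work over all iterations is bounded by (a constant times) the number of triangles of $G_0$, hence $O(\rho m')$ rather than $O(\rho m')$ per iteration, and that the per-triangle work is genuinely $O(1)$ with appropriate data structures. The remaining steps (BFS counting, iteration count, space accounting) are routine.
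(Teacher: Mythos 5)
Your proposal is correct and follows essentially the same route as the paper's proof: the same iteration bound $t \le \min\{n'-k,\, m'/(k-1)\}$ from deleting at least one vertex and $k-1$ edges per iteration, the same $O(|Q|\,t\,m')$ charge for the BFS traversals, the same $O(\rho m')$ charge for triangle listing and the cumulative truss maintenance, and the same $O(m')$ space accounting via a log of removed edges. The only difference is that you spell out the amortized argument for the maintenance cost (each edge enters $S$ once, per-triangle work is $O(1)$), which the paper asserts in one line without justification — a welcome addition, not a deviation.
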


\begin{proof}
First, finding the $k$-truss $G_0$, listing all triangles of $G_0$ and creating a series of $k$-truss graphs $\{G_0, ..., G_{l-1}\}$  takes $O(\rho\cdot m')$ time in all, where $\rho$ is the arboricity of graph $G_0$.

Second, in each iteration, the algorithm needs to compute the shortest distances by a BFS traversal  strating from each query node $q\in Q$, which takes $O(|Q| m')$ time. Since the algorithm runs in $t$ iterations, the total time cost is $O(t |Q| m')$. Thus, the overall time complexity of Algorithm \ref{algo:mindia} is $O((|Q|t+\rho)m')$.

Next, we analyze the space complexity. For graphs $\{G_0, ..., G_l\}$,  we only record the sequence of removed edges from $G_0$ for attaching a corresponding label to a graph $G_i$ at each iteration $i$, which takes $O(m')$ space in all.  For each vertex $v\in G_i$, we only keep $\dist(v, Q)$ instead of all query distances $\dist(v, q)$ for $q\in Q$, which takes $O(n')$ space. \LL{Hence, the space complexity of Algorithm \ref{algo:mindia} is $O(m'+n')$, which is  $O(m')$, as $G_0$ is connected}.
\end{proof}

\section{Fast search algorithms}\label{sec.fast}
In this section, we focus on improving the efficiency of CTC search in two ways. 
First, we develop a new greedy strategy to speed up the pruning process process in Section \ref{sec.lazy}, by deleting at least $k$ nodes in batch, to achieve quick termination while sacrificing some approximation ratio. Second, we also propose a heuristic strategy to quickly find the \ctc in the local neighborhood of query nodes.     


\eat{
\begin{algorithm}[t]
\small
\caption{Efficient \trussmt($G$, $V_d$)} \label{algo:trssmt}
\textbf{Input:} A graph $G=(V, E)$, a set of nodes to be removed as $V_d$.\\
\textbf{Output:} A $k$-truss graph.\\
\
\begin{algorithmic}[1]

\STATE  $S \leftarrow \emptyset$;

\STATE 	\textbf{for} $v\in V_d$ \textbf{and} $u \in N(v)$ \textbf{do}

\STATE 	\hspace{0.3cm}  \textbf{if}	$u\notin V_d$ \textbf{then}	Remove $(v, u)$ from $G$;

\STATE 	\hspace{0.3cm}  \textbf{else}	$S \leftarrow S \cup (v, u)$;

\STATE  \textbf{for} $(v, u) \in S$ \textbf{do}

\STATE 	\hspace{0.3cm}  \textbf{if} $v\in V_d$ \textbf{and} $u\in V_d$ \textbf{then} \textbf{continue};

\STATE 	\hspace{0.3cm}  \textbf{for} $w \in N(v) \cap N(u)$ \textbf{do}

\STATE 	\hspace{0.3cm}  \hspace{0.3cm}	$sup(v, w) \leftarrow sup(v,w)-1$; $sup(u, w) \leftarrow sup(u,w)-1$;

\STATE  \hspace{0.3cm}  \hspace{0.3cm}  \textbf{if} $sup(v, w) < k-2$ \textbf{and} $(v,w)\notin S$ \textbf{then}	$S \leftarrow S \cup (v, w)$; 

\STATE  \hspace{0.3cm}  \hspace{0.2cm}	\textbf{if} $sup(u, w) < k-2$ \textbf{and} $(u,w)\notin S$ \textbf{then} $S \leftarrow S \cup (u, w)$; 

\STATE  \hspace{0.3cm}  \hspace{0.3cm}  \textbf{for} $x\in \{v, u, w\}/V_d$ \textbf{and} $\triangle_{G}^{k}(x)<k-1$ \textbf{do}

\STATE  \hspace{0.3cm}  \hspace{0.3cm}  \hspace{0.3cm}  $V_d \leftarrow V_d \cup \{x\}$;

\STATE 	\hspace{0.3cm}  \hspace{0.3cm}  \hspace{0.3cm}  \textbf{for} $y \in N(x)$ \textbf{do}

\STATE	\hspace{0.3cm}  \hspace{0.3cm}  \hspace{0.3cm}  \hspace{0.3cm}  \textbf{if}	$y \in V_d$ \textbf{then}	Remove $(x, y)$ from $G$;  

\STATE 	\hspace{0.3cm}  \hspace{0.3cm}  \hspace{0.3cm}  \hspace{0.3cm}	\textbf{else if} $(x,y)\notin S$ \textbf{then}	$S \leftarrow S \cup (x, y)$; 

\STATE  \hspace{0.3cm}  Remove $(v, u)$ from $G$;

\STATE  Remove isolated vertices from $G$;

\end{algorithmic}
\end{algorithm}

\subsection{Efficient K-truss Maintenance} \label{sec.effktruss}

\LL{Here, we improve the efficiency of \RV{$k$-truss} maintenance given in Algorithm~\ref{algo:simple_trssmt}. The problem of updating edge trussness and maintaining a $k$-truss over dynamic graphs was previously studied in \cite{huang2014, YZhang12} under node additions and deletions. For our problem, it suffices to focus on node deletions. This enables us to maintain the $k$-truss property more efficiently than is reported in \cite{huang2014, YZhang12}.  } 

\note[Laks]{Somewhere, we should say how much we improve the time complexity of $k$-truss maintenance over \cite{huang2014, YZhang12}. } 
\note[Xin]{For this truss maintenance part, it indeed improve the time complexity, and some analysis comments are shown in the Theorem \ref{theorem.fast}.}

\eat{ 
Note that the problem of updating edge trussness in dynamic graphs\cite{huang2014, YZhang12}, which requires exact calculation of edge trussness when graph is updated with node/edge insertions/deletions. Differently, in this paper, the task of maintaining $k$-truss only considers removing all nodes/edges that are not in $k$-truss for a specific $k$ when graph is updated with node deletions, which exists more efficient solutions.  
}

\LL{Algorithm \ref{algo:simple_trssmt} strictly follows the truss decomposition algorithm of \cite{huang2014, YZhang12} and checks all triangles containing each edge to be deleted for updating support. This is an expensive process. When a \emph{set} of nodes $V_d$ are deleted from graph $G$, all edges between them can be deleted without checking any containing triangles, as shown by the following lemma.} 

\eat{can be improved to be more efficient for maintaining $k$-truss, because it checks all triangles containing each edge to be deleted for updating support, which strictly follows up the truss decomposition algorithm. Listing triangles is an expensive cost in this process. Actually, when a set of nodes $V_d$ are deleted from graph $G$, the edges between deleted nodes can be directly deleted to avoid checking triangles, since all edges in these triangles will be absolutely deleted by the following lemma.
}

\begin{lemma}\label{lemma.del}
For two vertices $u, v\in V_d$ in a $k$-truss $G$,  all edges in the triangles containing $(u,v)$ will be deleted from $G$.
\end{lemma}

\begin{proof}
Each edge in the triangles containing $(u,v)$ must have an endpoint  of $u$ or $v$, which will be deleted with the removal of vertices $u$ and $v$ from $G$.
\end{proof}

Thus, to further speed up the algorithm, we need to identify a maximal set of nodes which are certain to be deleted in an iteration. We give a new definition of triangle degree below. 

\begin{definition}
[Triangle degree] The triangle degree of a vertex $v$ in a $k$-truss G is the number of edges $(u,v)$ with $sup(u,v)\geq k-2$ where $u\in N(v)$, denoted as $\triangle_{G}^{k}(v) = |\{u| u\in N(v), sup(u,v)\geq k-2\}|$.
\end{definition} \label{def.triangledeg}

E.g., in Figure~\ref{}(a), the triangle degree of $q_3$ for $k=4$ is $3$ as the support of $sup(q_3,x) = 2$, $\forall x \in N(q_3)$. 

By Definition \ref{def.triangledeg}, we have the following proposition which identifies nodes that can be deleted in one batch in an iteration.

\begin{proposition}\label{lemma.nodedel}
For a vertex $v$ in a $k$-truss $G$, if $\triangle_{G}^{k}(v) < k-1$, then the vertex $v$ and its incident edges can be deleted from $G$.
\end{proposition}

\note[Laks]{Consider a triangle $u, v, w$ and let $k = 3$. Then $u$ has $k-1 = 2$ neighbors whose support is $1 = k-2$. However, your claim above implies you'd delete $u$, which does not seem right. In general, you and I need to talk about the following paragraph.} 
\note[Xin]{A minor. It should be `$<$', not `$\leq$'.}

Our efficient $k$-truss maintenance algorithm is given in Algorithm \ref{algo:trssmt}.  \LL{For each vertex $u\in V_d$, we directly remove all edges $(u,v)$ whenever $v\notin V_d$, otherwise we insert $(u,v)$ into the set $S$ (lines 1-4).} Since all edges between the deleted nodes are directly removed from $G$,  $S$ only needs to keep  edges $e$ with $sup(e)<k-2$, for which one endpoint is still in the graph $G$. For each edge $e\in S$, Algoirthm \ref{algo:trssmt} checks all triangles containing $e$ and updates the edge support as Algorithm \ref{algo:simple_trssmt}, and finally removes it from $G$ (lines 5-16). Meanwhile, the algorithm tests the third vertex $x$ of the containing triangles. If $\triangle_{G}^{k}(x) \leq k-1$, then vertex $x$ will be added into $V_d$ by Lemma \ref{lemma.nodedel}. We deal with its incident edges similarly with other vertices in $V_d$ (lines 13-15).  The algorithm terminates with the removal of all isolated vertices from $G$.

\begin{remark}
After obtaining the $G_0$, for each vertex $v$ in $G_0$, $\triangle_{G}^{k}(v)$ can be initialized in $O(1)$ time, which is the vertex degree of $v$ in $G_0$. Moreover, it can also be updated in $O(1)$ time during the $k$-truss maintenance process of the following graphs $\{G_0, G_1, ..., G_l\}$. For any graph $G\in \{G_0, G_1, ..., G_l\}$, since the $\triangle_{G}^{k}(v)$ records the degree of $v$ in graphs, and it can be updated by decreasing one whenever an edge incident to $v$ deleted from $G$, which can be done in $O(1)$ time. 
\end{remark}

\note[Laks]{How? Explain why this remark holds.} 
\note[Xin]{Please see above.} 

\begin{figure}[t]
\small
\centering
\includegraphics[width=0.4\linewidth]{./Figure/delete.eps}
\caption{An example of efficient truss maintenance}
\label{fig.delete}
\end{figure}

\begin{example}
{\em 
For example, let us apply Algorithm \ref{algo:trssmt} to delete the nodes $V_d =\{p_1, p_2, p_3\}$ from the entire 4-truss $G_0$ in Figure \ref{fig.community} (b). The edges among the vertices in $V_d$, $(p_1, p_2)$, $(p_1, p_3)$ and $(p_2, p_3)$, are shown as dashed lines in Figure \ref{fig.delete}; the algorithm will directly delete them from $G_0$ without further checks. The other edges $(p_1, q_3)$, $(p_2, q_3)$ and $(p_3, q_3)$ in solid lines of Figure \ref{fig.delete} will be deleted after updating edge support. \qed 
} 
\end{example}

\LL{
The $k$-truss maintenance discussed in this subsection does not improve the worst case time complexity compared to that of Algorithm ??. However, it does improve its running time in practice. This is further explored in Section~\ref{sec:exp}. 
} 
}

\subsection{Bulk Deletion Optimization} \label{sec.lazy}
In this subsection,  we propose a new algorithm called \lazy following the framework of Algorithm \ref{algo:mindia}, which is based on deletion of a set of nodes in batch when maintaining a \ktruss. The algorithm is described in detail in Algorithm \ref{algo:lazy}, which can terminate quicker than Algorithm \ref{algo:mindia}. It is based on the following two observations. 

First, in Algorithm \ref{algo:mindia}, if a graph $G_i$ has query distance $\dist_{G_i}(G_i, Q)$ $ = d$, only one vertex $u^*$ with $\dist_{G_i}(u^*, Q)=d$ is removed from $G_i$. Instead, we can delete \emph{all} nodes $u$ with $\dist_{G_i}(u, Q)=d$, from $G_i$, in one shot. The reason is that $\dist_{G_i}(u,Q)$ is monotone non-decreasing with decreasing graphs, i.e., $dist_{G_j}(u,Q)$ $ \ge $ $ \dist_{G_i}(u,Q)$ $ = d$, for $j>i$. 
Thus, removing a set of vertices $L = \{u^*| \dist_{G_i}(u^*, Q)\geq d, u^*\in G_{i}\}$ in each iteration $i$ will improve the efficiency. This improvement indeed works in real applications. However, in theory, it is possible that $|L|=1$ in every iteration. 

\TobeDelete{Delete the lemma and explain it with a sentence.}
\eat{
Our second observation, is that a vertex $u^*$ with $\dist_{G_i}(u^*, Q)$ $=d$ has at least $k-1$ neighbors $v$ with $\dist_{G_i}(v, Q)=d-1$. If we remove $L = \{u | $ $\dist_{G_i}(u, Q)$ $\geq d-1,$ $ u\in G_{i}\}$ at each iteration, then at least $k$ nodes are removed each time. Thus, the resulting number of iterations is $O(n'/k)$, where $n'=|V(G_0)|$.}

\RV{
\LL{Our second observation, shown in the next lemma, is that a vertex $u^*$ with $\dist_{G_i}(u^*, Q)$ $=d$ has at least $k-1$ neighbors $v$ with $\dist_{G_i}(v, Q)=d-1$. If we remove $L = \{u | $ $\dist_{G_i}(u, Q)$ $\geq d-1,$ $ u\in G_{i}\}$ at each iteration, then the resulting number of iterations is $O(n'/k)$, where $n'=|V(G_0)|$. }  


\begin{lemma}
Algorithm \ref{algo:lazy} terminates in  $O(n'/k)$ iterations.
\end{lemma}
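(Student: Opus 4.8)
The plan is to show that in each iteration of Algorithm~\ref{algo:lazy}, at least $k$ vertices are removed from the current graph, so that after at most $\lceil n'/k\rceil$ iterations the graph is exhausted of non-query material and the loop terminates. Since $n' = |V(G_0)|$ and the graph only shrinks, this bounds the iteration count by $O(n'/k)$.

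First I would set up the key structural observation stated just before the lemma: fix an iteration operating on the intermediate graph $G_i$, and let $d = \dist_{G_i}(G_i, Q)$ be its graph query distance. Pick any vertex $u^*$ realizing this distance, i.e.\ $\dist_{G_i}(u^*, Q) = d$, so there is a query node $q\in Q$ with $\dist_{G_i}(u^*, q) = d$. Because $G_i$ is a connected $k$-truss (this property is maintained throughout Algorithm~\ref{algo:mindia}/\ref{algo:lazy}), $u^*$ has degree at least $k-1$ in $G_i$. Consider a shortest path from $u^*$ to $q$; its second vertex $w$ is a neighbor of $u^*$ with $\dist_{G_i}(w, q) = d-1$. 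I would then argue that \emph{all} of $u^*$'s $\ge k-1$ neighbors $v$ satisfy $\dist_{G_i}(v, q) \ge d-1$: indeed $\dist_{G_i}(v, q) \ge \dist_{G_i}(u^*, q) - 1 = d-1$ by the triangle inequality along the edge $(u^*, v)$. Hence $\dist_{G_i}(v, Q) \ge \dist_{G_i}(v, q) \ge d-1$ for each such neighbor $v$, and also $\dist_{G_i}(u^*, Q) = d \ge d-1$. Therefore the set $L = \{u \in G_i : \dist_{G_i}(u, Q) \ge d-1\}$ contains $u^*$ together with its $\ge k-1$ neighbors, giving $|L| \ge k$.

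Next I would invoke the monotonicity fact (Lemma~\ref{lemma.nondecreasing} / the remark in its proof): for any vertex $v$ and any subgraph $G' \subseteq G_i$ with $v\in G'$, $\dist_{G'}(v,Q) \ge \dist_{G_i}(v,Q)$. This justifies that it is safe to delete the whole of $L$ in one shot in iteration $i$ — none of these vertices could ever have a smaller query distance in any later feasible graph, so no candidate community is lost — and, more to the point here, it guarantees that the bulk-deletion step of Algorithm~\ref{algo:lazy} removes at least $|L| \ge k$ vertices. Summing over iterations: the total number of vertices removed is at most $n'$ (we never add vertices, and query nodes are never removed while the loop runs), and each iteration removes $\ge k$ of them, so the number of iterations is at most $n'/k$, i.e.\ $O(n'/k)$. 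I would close by noting the loop condition $\con_{G_l}(Q)$ must become false once fewer than the necessary vertices remain, so termination is within this bound.

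The main obstacle — really the only subtle point — is pinning down exactly which vertices are guaranteed to be removed in one iteration of Algorithm~\ref{algo:lazy}, and matching it to the lemma's claim. The $k-1$ lower bound on $u^*$'s neighbor count relies on $G_i$ being a genuine $k$-truss at the \emph{start} of the iteration (each vertex of a $k$-truss has degree $\ge k-1$, since a $k$-truss is a $(k-1)$-core), so I must be careful that the bound is applied before any deletions in that iteration, not after. A secondary care point: the deleted set is $L = \{u : \dist_{G_i}(u,Q) \ge d-1\}$, not just $\{u : \dist_{G_i}(u,Q) \ge d\}$; the neighbors of $u^*$ live at distance $\ge d-1$, not necessarily $\ge d$, which is precisely why the lemma uses the $d-1$ threshold. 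Everything else is routine counting once these two points are handled.
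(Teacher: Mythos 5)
Your proof is correct and follows essentially the same route as the paper's: use the fact that $G_i$ is a connected $k$-truss (hence every vertex has degree at least $k-1$) to show that the furthest vertex $u^*$ together with its neighbors, all of which have query distance at least $d-1$, forces $|L|\geq k$, and then count $n'/k$ iterations. The only cosmetic difference is that you argue the neighbor-distance bound directly via the triangle inequality while the paper phrases it as a contradiction, and your parenthetical that query nodes are never removed is not quite right (the paper's own example has query nodes in $L$), but that aside plays no role in the bound.
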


\begin{proof}
In Algorithm \ref{algo:lazy}, at each iteration $i$, the graph $G_i$ has at least one node $u^*$ with $\dist_{G_l}(u^*, Q) = d$, which belongs to $L$, and will be deleted in this iteration (lines 4-10). Since $G_i$ is a connected k-truss and $u^*\in G_i$, $u^*$ has at least $k-1$ neighbors, i.e., $|N_{G_i}(u^*)| \geq k-1$. Moreover, $\forall v\in N_{G_i}(u^*)$, we have $\dist_{G_i}(v, Q) \geq d-1$: otherwise, if $\exists v\in N_{G_i}(u^*)$ with $\dist_{G_i}(v, Q) < d-1$, we can obtain $\dist_{G_i}(u^*, Q) < d$, a contradiction. As a result, we have $u^*\in L$ and $N(u) \subset L$, and $|L|\geq k$. Thus, at least $k$ nodes are deleted at each iteration, and the algorithm terminates in $O(n'/k')$ iterations. 
\end{proof}

}

Thus, the number of iterations is improved from $O(\min\{n', m'/k\})$ to 
$O(n'/k)$ (see Theorem \ref{theorem.framework}). We just proved: 

\begin{theorem} \label{theorem.fast}
Algorithm \ref{algo:lazy} takes $O((|Q|t'+\rho')m')$ time using $O(m')$ space, where $t' \in O(n'/k)$, and $\rho' \leq \min\{d'_{max},$ $ \sqrt{m'}\}$.
\end{theorem}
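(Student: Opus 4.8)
The plan is to prove Theorem~\ref{theorem.fast} by combining the iteration-count bound for Algorithm~\ref{algo:lazy} (just established: $t'\in O(n'/k)$ iterations) with a per-iteration cost analysis that mirrors the proof of Theorem~\ref{theorem.framework}, and then bounding the one-time setup cost. First I would isolate the one-time costs: finding $G_0$ via Algorithm~\ref{algo:findg0} (which takes $O(m')$ time by the earlier remark), truss decomposition / listing all triangles in $G_0$ to initialize edge supports, and building the truss index restricted to the relevant part of the graph. Triangle listing in $G_0$ costs $O(\rho' m')$ where $\rho'$ is the arboricity of $G_0$, using the Chiba--Nishizeki bound, and by Remark~\ref{remark.index} (applied to $G_0$ rather than $G$) we have $\rho'\le\min\{d'_{max},\sqrt{m'}\}$. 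So the total setup is $O(\rho' m')$.

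Next I would account for the per-iteration cost. In each iteration of Algorithm~\ref{algo:lazy}, the dominant operations are: (1) computing query distances $\dist_{G_i}(v,Q)$ for all $v$, which requires $|Q|$ BFS traversals on $G_i$, costing $O(|Q|\,|E(G_i)|) = O(|Q| m')$ since $|E(G_i)|\le m'$; (2) identifying the vertex set $L$ to delete, which is a single linear scan, $O(n')\subseteq O(m')$; and (3) maintaining the $k$-truss property after bulk deletion of $L$ via the truss-maintenance procedure. For step (3), the key observation is that over the \emph{entire} run of the algorithm each edge of $G_0$ is removed at most once, and when an edge $(u,v)$ is processed for removal the procedure iterates over $w\in N(u)\cap N(v)$, decrementing supports; the total work charged this way is again $O(\rho' m')$ amortized across all iterations (this is exactly the triangle-enumeration bound, since the sum of $|N(u)\cap N(v)|$ over all edges is $O(\rho' m')$). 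Hence the maintenance cost is $O(\rho' m')$ total, not per iteration. Summing: $O(t'\cdot|Q|m')$ for the BFS work across all iterations, plus $O(\rho' m')$ for setup and cumulative maintenance, giving $O((|Q|t'+\rho')m')$ time.

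For the space bound I would repeat the argument from Theorem~\ref{theorem.framework}: we never materialize all intermediate graphs $\{G_0,\dots,G_l\}$ but only record, for each iteration, the batch of removed vertices/edges, which is $O(m')$ in total; per-vertex we keep only $\dist(v,Q)$ rather than all $|Q|$ individual distances, costing $O(n')$; the truss index on $G_0$ is $O(m')$. Since $G_0$ is connected, $n'\le m'+1$, so everything is $O(m')$. Finally, substituting $t'\in O(n'/k)$ from the preceding lemma and $\rho'\le\min\{d'_{max},\sqrt{m'}\}$ yields the stated theorem.

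The main obstacle I anticipate is making the amortized accounting of the bulk $k$-truss maintenance rigorous: one must argue that even though multiple iterations each invoke the maintenance routine, the cascading edge deletions and support updates collectively touch each triangle of $G_0$ a bounded number of times, so that the total maintenance cost is $O(\rho' m')$ rather than $O(\rho' m')$ \emph{per} iteration. The bulk-deletion variant (Lemma~\ref{lemma.del}: edges internal to $V_d$ are deleted without checking triangles) only helps constants here; the clean way to see the bound is that the set $S$ of ``affected'' edges processed over the whole execution is a subset of $E(G_0)$ with no edge entering $S$ twice, and processing edge $e$ costs $O(\min\{d(u),d(v)\})$ which sums to $O(\rho' m')$. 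I would state this as the crux and keep the rest routine.
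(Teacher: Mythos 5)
Your proposal is correct and follows essentially the same route as the paper, which simply combines the just-proved $O(n'/k)$ iteration bound with the per-iteration BFS cost and the one-time $O(\rho' m')$ setup/maintenance cost from the proof of Theorem~\ref{theorem.framework}. Your explicit amortization argument for the bulk truss maintenance (each edge deleted at most once, with $\sum_{(u,v)}\min\{d(u),d(v)\} = O(\rho' m')$) just makes rigorous what the paper leaves implicit in the phrase ``creating a series of $k$-truss graphs takes $O(\rho\cdot m')$ time in all.''
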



The approximation quality of Algorithm \ref{algo:lazy} is characterized below. 

\begin{theorem}
Algorithm \ref{algo:lazy} is a $(2+\varepsilon)$-approximation solution of \ctcp, where $\varepsilon = 2/diam(H^*)$. 
\end{theorem}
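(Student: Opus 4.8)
The plan is to adapt the analysis chain developed for Algorithm~\ref{algo:mindia}, tracking the one place where the bulk-deletion strategy loses ground: when we remove \emph{all} nodes with query distance $\geq d-1$ in a single step, we may overshoot and delete nodes of the optimal solution $H^*$ that we would have been allowed to keep. First I would establish the analogue of Lemma~\ref{lemma.terminate} for $\LAZY$: if an intermediate graph $G_i$ of Algorithm~\ref{algo:lazy} satisfies $H^*\subseteq G_i$ and $\dist_{G_i}(G_i,Q) > \dist_{G_i}(H^*,Q) + 1$, then $H^*\subseteq G_{i+1}$ and the algorithm does not terminate at step $i+1$. The reasoning mirrors the earlier lemma: the bulk deletion at iteration $i$ removes exactly the nodes $u$ with $\dist_{G_i}(u,Q)\geq d-1$ where $d=\dist_{G_i}(G_i,Q)$; since every node of $H^*$ has query distance at most $\dist_{G_i}(H^*,Q) \leq d-2 < d-1$ (using Fact~\ref{lemma.dissubgraph} to relate $\dist_{H^*}$ and $\dist_{G_i}$), no node of $H^*$ is touched by the bulk deletion, and since $H^*$ is itself a connected $k$-truss containing $Q$, the $k$-truss-maintenance step cannot remove any edge of $H^*$ either.

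Next I would prove the bulk-deletion counterpart of Lemma~\ref{lemma.optdist}: the output $R$ of Algorithm~\ref{algo:lazy} satisfies $\dist_R(R,Q) \leq \dist_{H^*}(H^*,Q) + 1$. This follows the same case split on the last feasible graph $G_{l-1}$. In Case~(a), $H^*\subseteq G_{l-1}$: if $\dist_{G_{l-1}}(G_{l-1},Q) > \dist_{G_{l-1}}(H^*,Q)+1$ then by the lemma above $G_{l-1}$ would not be the last feasible graph, so $\dist_{G_{l-1}}(G_{l-1},Q) \leq \dist_{G_{l-1}}(H^*,Q)+1 \leq \dist_{H^*}(H^*,Q)+1$, and Step~10 (choosing the graph of smallest query distance among all intermediate graphs) gives $\dist_R(R,Q) \leq \dist_{G_{l-1}}(G_{l-1},Q)$. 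In Case~(b), $H^*\not\subseteq G_{l-1}$: let $G_i$ be the first intermediate graph from which some vertex $v^*\in H^*$ is deleted; since no edge of $H^*$ can be removed by $k$-truss maintenance before a bulk deletion hits $H^*$, $v^*$ is removed in the bulk deletion step, so $\dist_{G_i}(v^*,Q) \geq \dist_{G_i}(G_i,Q) - 1$, hence $\dist_{G_i}(G_i,Q) \leq \dist_{G_i}(v^*,Q)+1 \leq \dist_{G_i}(H^*,Q)+1 \leq \dist_{H^*}(H^*,Q)+1$, and again $\dist_R(R,Q) \leq \dist_{G_i}(G_i,Q)$ by Step~10.

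Finally I would chain the bounds with Lemma~\ref{lemma.disdia}. Writing $q^* = \dist_{H^*}(H^*,Q)$, we have $q^* \leq \diam(H^*)$, and
\[
\diam(R) \;\leq\; 2\,\dist_R(R,Q) \;\leq\; 2\bigl(q^* + 1\bigr) \;\leq\; 2\,\diam(H^*) + 2 \;=\; \Bigl(2 + \tfrac{2}{\diam(H^*)}\Bigr)\diam(H^*),
\]
which is exactly the claimed $(2+\varepsilon)$-approximation with $\varepsilon = 2/\diam(H^*)$. The trussness guarantee $\tau(R)=\tau(H^*)$ and $Q\subseteq R$ carry over verbatim from the Algorithm~\ref{algo:mindia} analysis, since every intermediate graph of Algorithm~\ref{algo:lazy} is still a connected $k$-truss containing $Q$ with the maximum $k$.

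The main obstacle I anticipate is pinning down the ``$+1$'' slack cleanly: one must be careful that the bulk deletion threshold is $d-1$ (not $d$), that the strict-versus-nonstrict inequalities in the termination lemma are consistent with integer-valued distances, and that Fact~\ref{lemma.dissubgraph} is invoked in the correct direction so that $\dist_{G_i}(H^*,Q) \leq \dist_{H^*}(H^*,Q)$ really does hold for the subgraph $H^*\subseteq G_i$. Everything else is a mechanical transcription of the 2-approximation proof with an extra additive unit propagated through.
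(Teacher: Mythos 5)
Your proposal is correct and follows essentially the same route as the paper: the paper's proof reduces the claim to showing $\dist_R(R,Q) \leq \dist_{H^*}(H^*,Q)+1$ and then chains through Lemma~\ref{lemma.disdia}, explicitly deferring the details to an argument ``similar to Lemma~\ref{lemma.optdist}'' that it omits for space. You have simply supplied those omitted details (the $+1$-slack analogues of Lemma~\ref{lemma.terminate} and Lemma~\ref{lemma.optdist}), and they are sound.
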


\LL{
\begin{proof}
To prove this theorem, we only need to ensure $\dist_R(R, Q)$ $ \leq $ $\dist_{H^*}(H^*, Q)+1$. Because $diam(R) $ $\leq $ $ 2\dist_R(R, Q) $ $\leq $ $2\dist_{H^*} (H^*, Q) +2$ $ \leq $ $2(diam(H^*)+1)$ by Lemma \ref{lemma.disdia}, then approximation ratio is $2+\varepsilon$, where $\varepsilon = 2/diam(H^*)$. The detailed proof is similar with Lemma \ref{lemma.optdist}, which is omitted here, due to space limitation.
\end{proof}
}


\begin{algorithm}[t]
\small
\caption{\lazy($G$, $Q$)} \label{algo:lazy}
\textbf{Input:} A graph $G=(V, E)$, a set of query nodes $Q=\{q_1, ..., q_r\}$.\\
\textbf{Output:} A connected $k$-truss $R$ with a small diameter.\\
\
\begin{algorithmic}[1]

\STATE  Find$G0$ (G, Q) //{\tt see Algorithm \ref{algo:findg0}};

\STATE  $d \leftarrow +\infty$;  $l\leftarrow 0$;

\STATE	\textbf{while}  $\con_{G_{l}}(Q)=$ \textbf{true} \textbf{do}

\STATE  \hspace{0.3cm}  Compute $\dist_{G_{l}}(q, u)$, $\forall q\in Q$ and $\forall u\in G_{l}$;

\STATE  \hspace{0.3cm}  $\dist_{G_l}(G_{l}, Q) \leftarrow \max_{u^*\in G_l}\dist_{G_l}(u^*, Q)$;

\STATE  \hspace{0.3cm}  \textbf{if} $\dist_{G_l}(G_{l}, Q) < d$ \textbf{then}

\STATE  \hspace{0.3cm}  \hspace{0.3cm}  $d \leftarrow \dist_{G_l}(G_{l}, Q)$;

\STATE  \hspace{0.3cm}  $L = \{u^*| \dist_{G_l}(u^*, Q)\geq d-1, u^*\in G_{l}\}$;

\STATE  \hspace{0.3cm}  Maintain $k$-truss property of $G_l$ //{\tt see Algorithm \ref{algo:simple_trssmt}};

\STATE  \hspace{0.3cm}  $G_{l+1} \leftarrow G_{l}$; $l \leftarrow l+1$;

\STATE  $R \leftarrow \arg\min_{G'\in\{G_0, ..., G_{l-1}\}}{\dist_{G'}(G', Q)}$;

\end{algorithmic}
\end{algorithm}

\begin{example}
{\em 
Continuing with the previous example, we apply Algorithm \ref{algo:lazy} on Figure \ref{fig.community}(a) to find the closest truss community for $Q=\{q_1, q_2, q_3\}$. In $G_0$, we compute $d = $ $\max_{u\in G_0}\dist_{G_0}(u, Q)$ $= 4$, and $L=\{q_1, q_3, p_1, p_2, p_3\}$, as each node $u\in L$ has query distance  $\dist_{G_0}(u, Q) =3 $ $\geq d-1$. After removing $L$ from $G_0$, the remaining graph does not contain $Q$, and the algorithm terminates. Thus, Algorithm \ref{algo:lazy} reports the entire 4-truss $G_0$ as the answer, which has  diameter 4, compared to the answer of Figure~\ref{fig.community}(b) reported by Algorithm~\ref{algo:mindia}, which has diameter 3. \qed 
} 
\end{example}

\eat{
\subsection{Minimum Query Distance Fuction } \label{sec.mind}
In Algorithm \ref{algo:mindia}, it takes $O(|Q|m')$ time to compute $\dist_{G_l}(v, Q)$ for all $v\in G_l$ at each iteration, which is expensive for large query size $|Q|$. In this subsection, to further improve the efficiency, we propose a new query distance function called $\mind_{G}(v, Q)$ instead of $\dist_{G}(v,Q)$. Based on $\mind_{G}(v, Q)$, we design a graph-based query distance function $\mindist_{G}(G)$. We define functions $\mind$  and $\mindist$ as follow.

\begin{definition}
[Minimum Query Distance] In a graph $G$ and a set of query nodes $Q$, for each vertex $v\in G$, the vertex query distance of $v$ is the minimum length of shortest path from $v$ to $q\in Q$, that is, $\mind_G(v, Q) = \min_{q\in Q}$ $ \dist_G(v, q)$. For a subgraph $H\subseteq G$, the corresponding graph query distance is denoted by $\mindist_G(H) = \max_{u\in H} \mind_{G}(u, Q)$.
\end{definition}

\begin{example}
Consider the graph $G$ in Figure \ref{fig.community}(a) and $Q=\{q_1, q_2, q_3\}$. For the vertex $p_1$, the vertex query distance $\mind_G(p_1, Q)$ $ = 1$ $ = \dist_G(v, q_3)$. Then, for the subgraph $H$ in Figure \ref{fig.community}(b), the graph query distance $\mindist_G(H) = 1$. 
\end{example}

Similar with $\dist_G(G)$, $\mindist_{G}(G)$ can also be used to approximate the diameter of graph $G$. We have the following relationship between $\mindist_{G}(G)$ and $\diam(G)$.

\begin{lemma}
For a graph $G$ and a set of query nodes $Q$, $\max\{$ $\gamma, $ $\mindist_{G}(G)\} $ $ \leq \diam(G) $ $\leq 2\mindist_{G}(G) $ $+\gamma$ holds, where $\gamma = $ $\max_{q', q''\in Q} $ $\dist_G(q', q'')$.
\end{lemma}

\begin{proof}
First, since the diameter is the length of longest shortest path in $G$, $\diam(G) \geq \max\{\gamma, \mindist_{G}(G)\}$ holds.

Second, without loss of generality, suppose the longest path is between $u$ and $v$, then we have $\diam(G) = \dist_G(u,v)$. Thus, for any query vertices $q', q'' \in Q$, we have $\dist_G(u,v) \leq \dist_{G}(u, q') + \dist_{G}(q', q'')+ \dist_G(q'', v)$, for the triangular inequlity of distance in $G$. We select the query vertices  $q', q''$ respectively as $\dist_{G}(u, q') = \mind_{G}(u, Q)$ and $\dist_{G}(q'', v) = \mind_{G}(v, Q)$, which is absolutely possible by definition of $\mind$. Thus, since $\mindist_G(G) $ $\geq \max\{$ $\mind_{G}(u, Q),$ $ \mind_{G}(v, Q)\}$, we obtain that $\diam(G) \leq $ $2\mindist_G(G) + $ $\dist_{G}(q', q'') \leq $ $2\mindist_{G}(G)$ $ +\gamma$.
\end{proof}

\stitle{Algorithm. } In the framework of \lazy in Algorithm \ref{algo:lazy}, we replace all \dist functions with \mind functions, and invoke the Algorithm \ref{algo:minlazy} to compute the $\mind_{G}(u, Q)$ for all $u\in G$. The new algorithm is denoted as \minlazy. Now, we describe the procedure of Algorithm \ref{algo:minlazy}. For a graph $G$, we insert a new virtual node $o$ and the edges between $o$ and each query nodes $q \in Q$. Next, we perform one BFS traverse on this new graph starting form $o$. Then, the $\mind_{G}(u, Q)$, which is the minimum distance between $u$ and one of query nodes $Q$, equals to the distance of $u$ and $o$ decreased by one. We have the following approximation theorem of \minlazy.

\begin{theorem}\label{theorem.minlazy}
\minlazy is a $(2+\varepsilon')$-approximation solution of \ctcp, where $\varepsilon' = \frac{2+\max_{q', q''\in Q} \dist_R(q', q'')}{diam(H^*)}$. 
\end{theorem}

Let us see how big the range of $\varepsilon'$. We have the following corollary by Theorem \ref{theorem.minlazy}.  In Corollary \ref{coro.minlazy}, $\delta = \frac{\max_{q', q''\in Q} \dist_R(q', q'')}{\max_{q', q''\in Q} \dist_G(q', q'')}$ is highly possible near to 1 in real applications, due to the close of query nodes that we are frequently interested to ask.

\begin{corollary}\label{coro.minlazy}
\minlazy is a $(2+\varepsilon+\delta)$-approximation solution of \ctcp, where $\varepsilon = 2/diam(H^*)$ and $\delta = \frac{\max_{q', q''\in Q} \dist_R(q', q'')}{\max_{q', q''\in Q} \dist_G(q', q'')}$.
\end{corollary}

Since the computation of $\mind_{G_l}(u, Q)$ for all $u\in G_l$, it takes $O(m')$ time. The complexity of Algorithm \ref{algo:minlazy} is shown below. 

\begin{theorem}\label{theorem.minlazy}
Algorithm \ref{algo:minlazy} takes $O((t'+\rho')m')$ time using $O(m')$ space, where $t' \in O(n'/k)$, and $\rho \leq \min\{d'_{max},$ $ \sqrt{m' - tk(k-1)/2}\}$.
\end{theorem}

\begin{algorithm}[t]
\small
\caption{$\mind_{G}(u, Q)$, $\forall u\in G$} \label{algo:minlazy}
\
\begin{algorithmic}[1]

\STATE  $G' \leftarrow G$;

\STATE  Insert a virtual node $o$, and edges $(q, o)$ into $G'$,  $\forall q\in Q$; 

\STATE  Perform one BFS traverse on $G'$ starting from $o$;

\STATE  \textbf{for} $u\in G$ \textbf{do}

\STATE 	\hspace{0.3cm}	$\mind_{G}(u, Q) \leftarrow \dist_{G'} (u, o)-1$;

\end{algorithmic}
\end{algorithm}

Although the algorithm using \mind function above has a worse performance guarantee than Algorithm \ref{algo:mindia}, the new algorithm achieves a better time complexity than Algorithm \ref{algo:mindia}. In real applications, the new algorithm can still achieve good approximate performance on par with Algorithm \ref{algo:mindia}, in case that the query nodes are close in graphs.
}

\subsection{Local Exploration}\label{sec.local}
In this subsection, we develop a heuristic strategy to quickly find the \ctc by local exploration. The key idea is as follows. We first form a Steiner tree to connect all query nodes, and then expand it to a graph $G_{0}'$ by involving the local neighborhood of the query nodes. From this new graph $G_{0}'$, we find a connnected $k$-truss with the highest $k$ containing $Q$, and then  iteratively remove the furthest nodes from this $k$-truss using the \lazy algorithm discussed earlier.

\begin{algorithm}[t]
\small
\caption{\local($G$, $Q$)} \label{algo:local}
\textbf{Input:} A graph $G=(V, E)$, a set of query nodes $Q=\{q_1, ..., q_r\}$, a node size threshold $\eta$.\\
\textbf{Output:} A connected $k$-truss $R$ with a small diameter.\\
\
\begin{algorithmic}[1]

\STATE  Compute a Steiner Tree $T$ containing $Q$ using truss distance fuctions;

\STATE  $k_{t} \leftarrow \min_{e\in T} \tau(e)$;

\STATE  Expand $T$ to a graph $G_{t} = \{e\in G |$ $\tau(e)\geq k_{t}\}$, s.t. $T\subseteq G_t$ and $|G_t|\leq \eta$;

\STATE 	Extract the maximal connected $k$-truss $H_{t}$ containing $Q$ from $G_{t}$, where $k\leq k_t$ is the maximum possible trussness;

\STATE 	Apply \lazy algorithm on $H_{t}$ to identify closest community.

\end{algorithmic}
\end{algorithm}

\stitle{Connect query nodes with a Steiner tree.} As explained above, the Steiner tree found is used as a seed for expanding into a $k$-truss. It is well-known that finding a minimal weight Steiner tree is NP-hard but it admits a 2-approximation \cite{kou1981fast,mehlhorn1988faster}. However, a naive application of these algorithms may produce a result with a small trussness. To see this, consider the graph $G$ and the query $Q=\{q_1, q_2, q_3\}$ in Figure \ref{fig.community}(a). Suppose all edges are uniformly weighted. Then it is obvious that the tree $T_1 = \{(q_2, q_1), $ $(q_1, t), $ $(t, q_3)\}$ with total weight 3 is an optimal (i.e., minimum weight) Steiner tree for $Q$. However, the smallest trussness of the edges in $T_1$ is 2, which suggests growing $T_1$ into a larger graph will yield a low trussness. By contrast, the Steiner tree $T_2＝\{(q_1, q_2), $ $(q_2, v_4), $ $(v_4, q_3)\}$ has the total weight 3 and all its edges have the trussness at least 4, indicating it could be expanded into a more dense graph. To help discriminate between such Steiner trees, we define path weights as follows. Recall the definition of $\taubar(S)$ from Section~\ref{sec.problem}.

\eat{
In this algorithm, the choice of the Steiner tree containing $Q$, which is the most improtant step of all. Since we select it as a seed for expanding to large graphs.   In weighted graphs, to find a connected spanning tree contianing all query nodes with the lowest weight, which is a well-known Steiner Tree problem \cite{kou1981fast}. There has already existed sevearl fast solutions \cite{kou1981fast,mehlhorn1988faster} with no greater than 2-approxiamtion guarantee. However, in our \ctc model, we not only requires the distance beween query nodes should be small, but also  need these query nodes to be densely connected in $k$-truss with large $k$. Thus, the algorithms can not be straight-forward applied in our problem, which may causes the Steiner tree conncted by the edges of small trussness. For example, consider graph $G$ and the query $Q=\{q_1, q_2, q_3\}$ in Figure \ref{fig.community}(a), it is obviously that the tree $T_1 = \{(q_2,$ $ q_1), $ $(q_1, t),$ $ (t, q_3)\}$ with the total weight of 3 is a optimal Steiner tree for $Q$, but the smallest trussness of these tree edges is 2. Thus, we prefer the Steiner tree $T_2＝$ $\{(q_1, q_2),$ $(q_2, v_4), $ $(v_4, q_3)\}$ more, since the total weight are also 3 and all tree edges has the trussness at least 4, indicating a more dense graph it can be expaned to. Hence, we modified the distance function as follow, by giving penalty for the edge of small trussness in the path.
}

\begin{definition}
[Truss Distance] Given a path $P$ between nodes $u, v$ in $G$, we define the truss distance of $u$ and $v$ as $\hat{\dist}_{P}(u, v)$ $ = $ $\dist_{P}(u,v)+ $ $\gamma (\taubar(\emptyset) -\min_{e\in P} \tau(e))$, where $\dist_{P}(u, v)$ is the path length of $P$, and $\gamma > 0$. For a tree $T$, by $\hat{\dist}_T(u,v)$ we mean $\hat{\dist}_P(u,v)$ where $P$ is the path connecting $u$ and $v$ in $T$.
\end{definition}\label{def.trussdist}

The difference $\taubar(\emptyset) -\min_{e\in P} \tau(e)$ measures how much the minimum edge trussness of path $P$ falls short of the maximum edge trussness of the graph $G$ and $\gamma$ controls the extent to which small edge trussness is penalized. The larger $\gamma$ is, the more important edge trussness is in distance calculations. \RV{Note that, for a special path $P$ of a single edge $(u,v)$, the minimum edge truss in $P$ is $\tau(u,v)$. On the other hand, for a path $P$ of length more than 1, the penalty only depends on the minimum edge trussness of path $P$, but not accounts for every edge in $P$. In order to leverage the well-known approximation algorithm of Steiner tree algorithm \cite{mehlhorn1988faster}, we define the truss distance for a path.    
Recall the procedure of Steiner tree algorithm \cite{mehlhorn1988faster}, given a graph $G$ and query nodes $Q$, it firstly constructs a complete distance graph $G'$ of query nodes where the distance equals to its shortest path length in $G$, and finds a minimum spanning tree $T$ of $G'$, then constructs another graph $H$ by replacing each $T's$ tree edge by its corresponding shortest path in $G$, and finally finds a minimum spanning tree of $H$ and deleting leaf edges. We apply the truss distance function on the path weight for shortest path and minimum spanning tree construction here.}
\eat{
The parameter $\gamma$ contorls the penalty weight of small edge trussness, and $\taubar(\emptyset)$ is the maximum trussenss in graph $G$, which keeps the penalty non-negative. The larger $\gamma$ is, the more important of edge trussness in the truss distance calculations.
}
For instance, in the above example, $\taubar(\emptyset) = 4$ and for $\gamma=3$, the truss distance of $(q_2, q_3)$ in $T_1$ is $\hat{\dist}_{T_1}(q_2, q_3) = \dist_{T_1}(q_2, q_3) +3\cdot (4-2) = 3+6 = 8$, since the minimum edge trussness of $T_1$ is $\tau(q_1,t)=2$. On the other hand, $\hat{\dist}_{T_2}(q_1, q_3) = \dist_{T_2}(q_1, q_3) +3\cdot (4-4) = 3+0 = 3$. Obviously, the Steiner tree  $T_2$ has a smaller truss distances than $T_1$.
It can be verified that its overall weight is smaller than that of $T_1$.

\stitle{Find $G_0$ by expanding Steiner tree to graph.} After obtaining the Steiner tree $T$ for the query nodes, we locally expand the tree to a small graph $G_t$ as follows. We firstly obtain the minimum trussness of edges in $T$ as $k_{t} = \min_{e\in T} \tau(e)$. Then, we start from the nodes in $T$, and expand the tree to a graph in a BFS manner via edges of trussness no less than $k_t$, and iteratively insert these edges into $G_t$ until the node size exceeds a threshold $\eta$, i.e., $|V(G_t)| \leq \eta$, were $\eta$ is empirically tuned. Since $G_t$ is a local expansion of $T$, the trussness of $G_t$ will be at most $k_{t}$, i.e., $\tau(G_t) \leq k_t$. For ensuring the  dense cohesive structure of identified communities, we apply a truss decompostion algorithm on $G_{t}$. Then, we extract the maximal connected $k$-truss subgraph $H_t$ containing $Q$ by removing all edges of trussness less than \RV{$k$ from $G_t$}, where $k \leq k_t$ is the maximum possible trussness.

\stitle{Reduce the diameter of $G_0$.} We take the graph $H_t$ with the maximum trussness $k$ as input, and apply a variant of \lazy algorithm on $H_{t}$ for returning the identified community.  \RV{We implement a variant of \lazy algorithm, which is different from original \lazy w.r.t. the removed vertex set $L = \{u^*| \dist_{G_l}(u^*, Q)\geq d-1, u^*\in G_{l}\}$. We readjust the furthest nodes to be removed, as $L'=\{u^*| \dist_{G_l}(u^*, Q)\geq d, u^*\in G_{l}\}$. \LL{This adjustment makes the algorithm not as efficent as \lazy in asymptotic running time complexity, but we still find it efficient in practice. On the other hand, in practice, this strategy can achieve a smaller graph diameter than \lazy. This new strategy provides a 2-approximation of the optimal.}}
\eat{
\note[Laks]{Why is this more efficient in practice? Previously we said that removing nodes with distance $\geq d-1$ makes it terminate faster.}
\note[Xin]{In \lazy algorithm, we remove $\geq d-1$ nodes, since it has the theory gurantee. Here, toe remove $\geq d$ nodes, will be slow than \lazy, but achieve a good approximation and obtain a small diameter.}
}
\LL{
Moreover, in our implementation, in each iteration, we carefully remove only a subset of nodes in $L'$, which have \RV{the largest total of distances from} all query nodes. As a result, more nodes with the largest query distance are removed from the community in the end. The reason is as follows. Suppose the largest query distance we found as $d$,  in the real world, 
the number of nodes having query distance $d$ may be large, due to the small-world property. }

\eat{
\note[Laks]{I don't understand this at all. This seems to contradict what we claimed about the previous \lazy strategy. The explanation is confusing.}
\note[Xin]{Yes. This step will also make the algorithm slow. But, this step can ensure the more vertice are removed from community. Even, we found the community of query distance as $d$. We can keep some of vertices with query distance equaling to $d$ are removed from community. This step is preety effectiveness for real applications, since the $d$ could be large and the nodes having query distance as $d$ are large, considering the small-world. }
}

\eat{

\section{Problem Variants} \label{sec.variant}
In this section,
 we present other three varinats of our basic problem, which have widely practical applications. The first problem is finding \ctc with distance constraints (Section \ref{pro.bound}). The second problem is finding maximal \ctc(Section \ref{pro.max}. The third one is finding diverse \ctc(Section \ref{pro.div}).


\subsection{Truss Community with Distance Constraint}\label{pro.bound}
Formally, the problem of closest truss community with distance constraints can be defined as follow.

\begin{problem}
[\conctcp] Given a graph $G$, a set of query nodes $Q = \{q_1, ..., q_r\}$ and a distance parameter $d$, find a connected $k$-truss graph $G'$ containing $Q$, such that $\max_{v\in G'}\dist(v, Q)\leq d$ and $\tau(G')$ is the maximum.
\end{problem}
This problem has a optimal solution, which can be sloved in polynomial time by a variant method of Algorithm \ref{algo:mindia}.

\subsection{Maximal Closest Truss Community}\label{pro.max}
Based on \ctcp, we define the problem of maximal closest truss community as follow.
\begin{problem}
[\maxctcp] Given a graph $G$ and a set of query nodes $Q = \{q_1, ..., q_r\}$, find a maximal closest truss community $G'$ containing $Q$. That is, $\nexists G'' \subseteq G$, such that $G' \subset G''$, $G'$ and $G''$ both are a cloest truss community.
\end{problem}

\subsection{Top-k Diverse Closest Truss Community}\label{pro.div}

In top-k diverse closest truss community, we want to find a set of closest truss community with the smallest diameters, which consists of different nodes. The definition is given below.

\begin{problem}
[\divctcp] Given a graph $G$, a set of query nodes $Q = \{q_1, ..., q_r\}$ and a diverse parameter $\alpha$, to find a set of connected $k$-truss communities containing $Q$ with the smallest diameter as $Ans =\{G_1, ..., G_k\}$ , such that, for any $G_i, G_j\in Ans$, the overlapping ratio $\frac{V(G_i)\cap V(G_j)}{V(G_i)\cup V(G_j)} \leq \alpha$ and $\diam(G_i) \leq \diam(G_j)$. That is, $\nexists Ans'=\{H_1, ..., H_k\}$, such that $\exists i, 1\leq i\leq k$,
$\diam(H_i)<\diam(G_i)$.
\end{problem}
}

\section{Experiments} \label{sec.exp}

We conduct experimental studies using 6 real-world networks
%
%
%
available from the Stanford Network Analysis
Project\footnote{\small{\url{snap.stanford.edu}}}, where all networks
are treated as undirected.  The network statistics are shown in
Table~\ref{tab:dataset}. All networks except for Facebook contain
5,000 top-quality ground-truth communities.

\begin{table}[t]
\begin{center}\vspace*{-0.6cm}
\scriptsize
\caption[]{Network statistics (K $=10^3$ and M $=10^6$)}\label{tab:dataset}
\begin{tabular}{|l|r|r|r|r|}
\hline
{\bf Network} & $|V_G|$ & $|E_G|$ & $d_{max}$ & $\taubar(\emptyset)$ \\
\hline \hline
Facebook	 & 4\textbf{K}	 & 88\textbf{K} & 1,045  & 97 \\ \hline
Amazon	& 335\textbf{K} &	926\textbf{K} &	549 &	7\\ \hline
DBLP 	 & 317\textbf{K}	 & 1\textbf{M} & 342  &114 \\ \hline
Youtube & 1.1\textbf{M}	 & 3 \textbf{M}		 & 28,754 & 19\\ \hline
LiveJournal & 4\textbf{M} & 35\textbf{M} & 14,815 & 352 \\ \hline
Orkut	 & 3.1\textbf{M}	 & 117\textbf{M} & 33,313 & 78 \\ \hline
\end{tabular}\vspace*{-0.6cm}
\end{center}
\vspace*{-0.3cm}
\end{table}


\eat{
\begin{table}[t]
\begin{center}\vspace*{-0.6cm}
\scriptsize

\caption[]{\textbf{Parameters of Query Nodes}}\label{tab:parameters}
\begin{tabular}{|l|c|c|}

\hline
{\bf Parameter} & {\bf Range} & {\bf Default} \\
\hline \hline
query size $|Q|$	 &  1, 2, 4, 8, 16  & 3 \\ \hline
degree rank $Q_{d}$ &  20\%, 40\%, 60\%, 80\%, 100\%  & 80\% \\ \hline
inter-distance $l$ & 1, 2, 3, 4, 5 & 2\\ \hline

\end{tabular}\vspace*{-0.5cm}
\end{center}
\end{table}

}



To evaluate the efficiency and effectiveness of improved strategies,
we test and compare three algorithms proposed in this paper, namely,
%
%
%
\textbf{\CTC}, \textbf{\LAZY}, and \textbf{\LCTC}.
%
%
Here, \textbf{\CTC} is the basic greedy approach \framework in
Algorithm \ref{algo:mindia}, which removes single furthermost node at
each iteration.
%
%
%
%
\textbf{\LAZY} is the \lazy approach in Algorithm \ref{algo:lazy},
which removes multiple furthermost nodes at each iteration.
%
%
\textbf{\LCTC} is the local exploration approach in
Algorithm \ref{algo:local}.  For \LCTC, we set the parameters $\eta =
1,000$ and $\gamma = 3$, where \JY{
$\eta=1,000$ is selected to achieve
stable quality and efficiency by testing $\eta$ in $[500, 2,000]$, and
$\gamma = 3$ is selected to balance the requirements of trussness and
diameter for communities searched.
}


%
%
%
%
%

We randomly generate sets of query nodes to test.
\eat{
Table \ref{tab:parameters} shows the three parameters, query size
$|Q|$, degree rank $Q_d$, and inter-distance $l$, used for generating
query nodes with their ranges and default values.
}
\RV{Three parameters, query size
$|Q|$, degree rank $Q_d$, and inter-distance $l$, are used for generating
query nodes with varied values.
}
Here, $|Q|$ is the
number of query nodes, which is set to 3 by default. $Q_{d}$ is the
degree rank of query nodes. We sort all vertices in descending
order of their degrees in a network. A node is said to be with degree
rank of X\%, if it has top highest X\% degree in the network. The default
value of $Q_d$ is 80\%, which means that a query node has degree
higher than the degree of 20\% nodes in the whole network. The
inter-distance $l$ is the inter-distance between all query nodes. The
default $l=2$ indicates that all query nodes are within distance of 2
to each other in the network.

For the efficiency, we report runtime in seconds. We treat the runtime
of a query as infinite if its runtime exceeds 1 hour.

\JY{
For the effectiveness of eliminating ``free riders'', we compare our
methods with \TRUS (Algorithm \ref{algo:findg0}), which finds the
connected $k$-truss graph containing query nodes with the largest $k$
only.
%
%

Let $G_R$ be the \ctc found by \LCTC and $G_0$ be computed by \TRUS.
We report two things. One is the percentage of nodes that are kept in
the resulting community by $\frac{|V(G_R)|}{|V(G_0)|}$.  The less
percentage the more ``free riders'' being removed. The other is the
edge density $2|E(g)|/|V(g)|(|V(g)|-1)$, where $g$ is either $G_R$ or
$G_0$.}

\RV{In addition, to evaluate the quality of \ctc found, we implemented two
state-of-the-art community search methods: the minimum
degree-based community search (\MDC) \cite{sozio2010}, which globally
finds the dense subgraph containing all query nodes with the highest
minimum degree under the distance and size constraints, and the query
biased densest community search (\QDC) \cite{wu2015robust}, which
shifts the detected community to the neighborhood of the query by
integrating the edge density and nodes proximity to the query
nodes.}
\JY{
Here, \MDC and \QDC are implemented using the same data structures,
such as graph, steiner tree and hashtable as we do for \LCTC. To
compare \LCTC with \MDC and \QDC, we test the datasets with
ground-truth, and show F1-score to measure the alignment between a
discovered community $C$ and a ground-truth community $\hat{C}$. Here,
$F1$ is defined as $F1(C, \hat{C})$ $=\frac{2\cdot
prec(C, \hat{C}) \cdot recall(C, \hat{C})}{prec(C, \hat{C}) +
recall(C, \hat{C})}$
where $prec(C, \hat{C}) =\frac{|C\cap \hat{C}|}{|C|}$ is the precision
and $recall(C, \hat{C}) =\frac{|C\cap \hat{C}|}{|\hat{C}|}$ is the
recall.
}



%
%
%

All algorithms are implemented in C++, and all the experiments are
conducted on a Linux Server with Intel Xeon CUP X5570 (2.93 GHz) and
50GB main memory.
%
%

\eat{

\begin{figure}[t]
\scriptsize
\centering
{\includegraphics[width=1.0\linewidth]{./Figure/exp/C1.eps}}
\vspace{-0.6cm}
\caption{Closest $4$-truss community containing \{Jeffrey Xu Yu,
Xuemin Lin, Dimitris Papadias, Nick Koudas\}}
\label{fig.fourauhtors}
\vspace{-0.4cm}
\end{figure}

\begin{figure*}[t]
\scriptsize
 \vspace{-0.6cm}
\centering
{
\subfigure[$Q_1 = $ \{Jiawei Han, Jian Pei\}]{\includegraphics[width=0.32\linewidth]{./Figure/exp/Q1.eps}}
\subfigure[$Q_2 =$ \{Jiawei Han, Jian Pei, Hongjun Lu\}]{\includegraphics[width=0.36\linewidth]{./Figure/exp/Q2.eps}}
\subfigure[$Q_3=$ \{Jiawei Han, Jian Pei, C. X. Zhai\}]{\includegraphics[width=0.32\linewidth]{./Figure/exp/Q3.eps}}
}
\vspace{-0.5cm}
\caption{Overalpping communities detected using similar but different queries by our model}
\label{fig.sensetive}\vspace{-0.1cm}
\end{figure*}

\noindent
{\bf Exp-1: A Case Study on DBLP}: We construct a collaboration
network from the raw DBLP data
set\footnote{\small{\url{http://dblp.uni-trier.de/xml/}}} for a case
study. A vertex represents an author, and an edge between two authors
indicates they have co-authored no less than 3 times. This DBLP graph
contains 234,879 vertices and 541,814 edges.

First, we query \ctc containing \{``Jeffrey Xu Yu'', ``Xuemin Lin'',
``Dimitris Papadias'', ``Nick Koudas''\}. As shown in
Figure \ref{fig.fourauhtors}, the community is a 4-truss connected
graph of 29 vertices and 92 edges, where two co-authors of each edge
have at least two common co-authors. The diameter of this community is
3, and the distance between any two authors is no greater than 3. For
example, ``Dimitris Papadias'' can reach ``Xuemin Lin'' via a path of
length 3 as ``Dimitris Papadias'' $\rightarrow$ ``Man Lung Yiu''
$\rightarrow$ ``Xiaofang Zhou'' $\rightarrow$ ``Xuemin Lin''. All are
densely connected in this community. Note that \cite{wu2015robust}
uses the same query to detect two disjoint local communities which do
not include all query authors in one community.

Second, we show that our \ctc model can detect overlapping communities
by similar but different queries. We test it using three queries: $Q_1
= $ \{``Jiawei Han'', ``Jian Pei''\}, $Q_2 = $ \{``Jiawei Han'',
``Jian Pei'', ``Hongjun Lu''\}, and $Q_3 = $ \{``Jiawei Han'', ``Jian
Pei'', ``Chengxiang Zhai''\}.  $Q_2$ and $Q_3$ add a different author
into $Q_1$.
The \ctc for $Q_1$ is shown in
Figure \ref{fig.sensetive}(a), which is a 6-truss of diameter 2.
The \ctc for $Q_2$ shown in Figure \ref{fig.sensetive}(b) is a
5-truss of diameter 2. The two communities for $Q_1$ and $Q_2$
are different.
%
%
The \ctc for $Q_3$ is in Figure \ref{fig.sensetive}(c), which is a
5-truss of diameter 2. The left part of authors are close to ``Jiawei
Han'' and ``Jian Pei'' as shown in Figure \ref{fig.sensetive}(a), and
the right part is another group led by ``Jiawei Han'' and
``Chengxiang Zhai'' at UIUC.
%
%
%
}

\begin{figure}[t]
 \vskip -0.4cm
\centering \mbox{
\subfigure[Query Time]{\includegraphics[width=0.37\linewidth,height=2.5cm]{./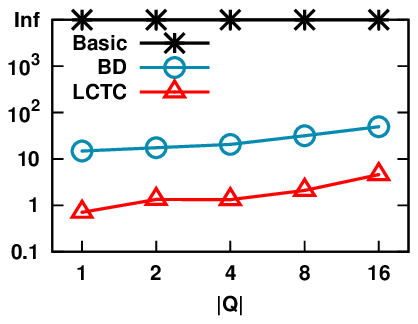}}\hskip -0.15in
\subfigure[The percentage]{\includegraphics[width=0.37\linewidth,height=2.5cm]{./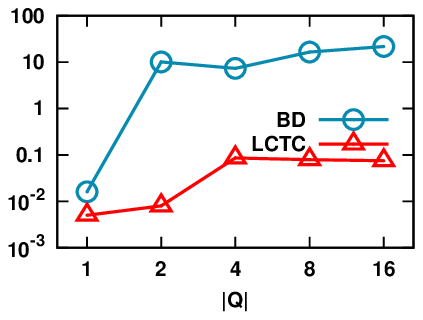}}
\hskip -0.15in
\subfigure[Density]{\includegraphics[width=0.37\linewidth,height=2.5cm]{./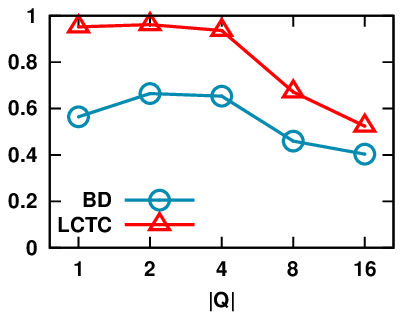}}
}
\vskip -0.2in
\caption{DBLP: varying query size $|Q|$} \label{fig.size_time}
\vspace*{-0.3cm}
\end{figure}

\begin{figure}[t]
\vskip -0.2cm
\centering \mbox{
\subfigure[Query Time]{\includegraphics[width=0.37\linewidth,height=2.5cm]{./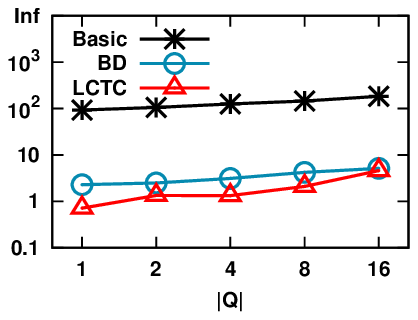}}\hskip -0.15in
\subfigure[The percentage]{\includegraphics[width=0.37\linewidth,height=2.5cm]{./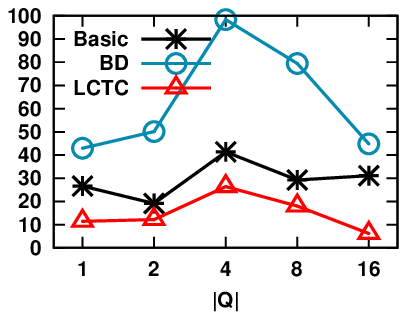}}
\hskip -0.15in
\subfigure[Density]{\includegraphics[width=0.37\linewidth,height=2.5cm]{./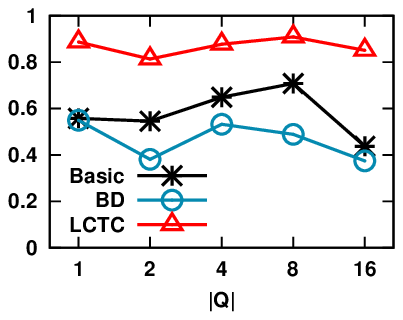}}
}
\vskip -0.2in
\caption{Facebook: varying query size $|Q|$} \label{fig.size_V}
\vspace*{-0.3cm}
\end{figure}


\noindent
{\bf Exp-\RV{1}\eat{2} Different Queries}: We test our approaches
using different queries on DBLP and Facebook
(Table~\ref{tab:dataset}).

First, we vary the query size $|Q|$. We test 5 different $|Q|$ in \{1,
2, 4, 8, 16\}. \RV{For each $|Q|$, we randomly select 100 sets of
$|Q|$ query nodes, and we report the average runtime, the average
percentage of avoiding \FRE and the average edge
density.}\footnote{Notice that \CTC, \LAZY, \LCTC are not optimal
algorithms.} The results for DBLP and Facebook are shown in
Figure \ref{fig.size_time} and Figure \ref{fig.size_V}, respectively.
\JY{\LCTC
outperforms the best in terms of efficiency, the percentage of
avoiding \FRE, and edge density in all cases.}
\CTC cannot find communities in DBLP in 1 hour limit.
%
%
\JY{\LAZY achieves better efficiency in Facebook than DBLP.
This is because Facebook contains only 4K vertices and
the global method \LAZY is effective on such a small network.
%
However, \LAZY performs worse than \CTC
for the percentage of avoiding \FRE and density for Facebook.}
%

\comment{
Figure \ref{fig.size_V}
shows the averaged the percentage of $\frac{|V(R)|}{|V(G_0)|}$ of each
method by varying the parameter $|Q|$. As we can see that, \LCTC keeps
the least nodes in the community among all three methods. \CTC is
without output communities on DBLP, which is not shown in
Figure \ref{fig.size_V} (b). \LAZY achieves the worst performance on
both network. All methods increase the percentage of remaining nodes
w.r.t. the size of connected $k$-truss with the largest $k$, by the
increased parameter query size. Only less than 20\% nodes and 0.1\%
nodes respectively for Facebook and DBLP, the method \LCTC obtained,
which shows the powerful of avoiding \FRE by our truss community
model.
}

\begin{figure}[t]
\vskip -0.4cm
\centering \mbox{
\subfigure[Query Time]{\includegraphics[width=0.37\linewidth,height=2.5cm]{./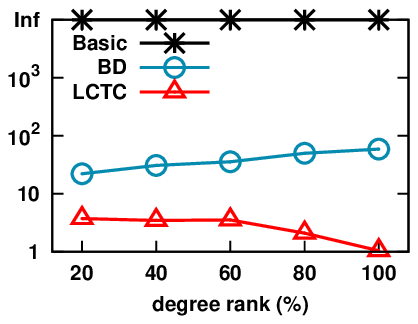}}\hskip -0.15in
\subfigure[The
percentage]{\includegraphics[width=0.37\linewidth,height=2.5cm]{./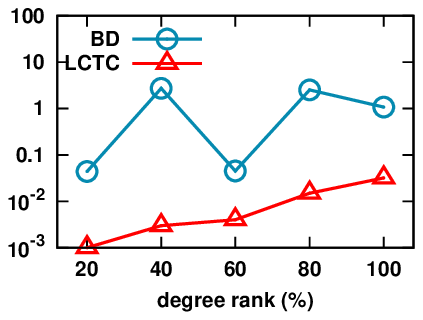}}
\hskip -0.15in
\subfigure[Density]{\includegraphics[width=0.37\linewidth,height=2.5cm]{./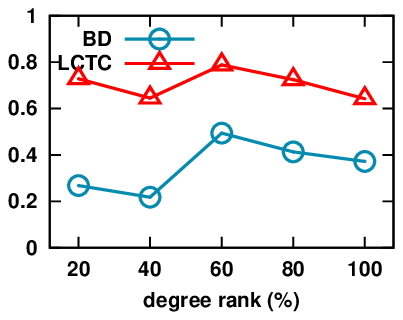}}
}
\vskip -0.2in
\caption{DBLP: varying query vertices} \label{fig.deg_time}
\vspace*{-0.3cm}
\end{figure}

\begin{figure}[t]
\vskip -0.2cm
\centering \mbox{
\subfigure[Query Time]{\includegraphics[width=0.37\linewidth,height=2.5cm]{./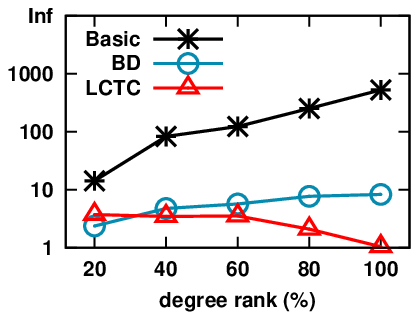}}\hskip -0.15in
\subfigure[The percentage]{\includegraphics[width=0.37\linewidth,height=2.5cm]{./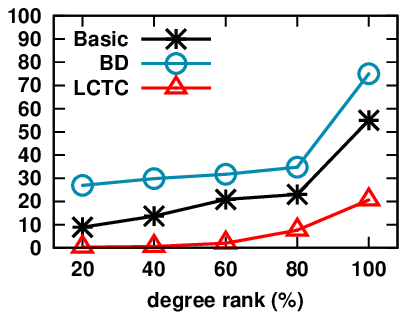}}
\hskip -0.15in
\subfigure[Density]{\includegraphics[width=0.37\linewidth,height=2.5cm]{./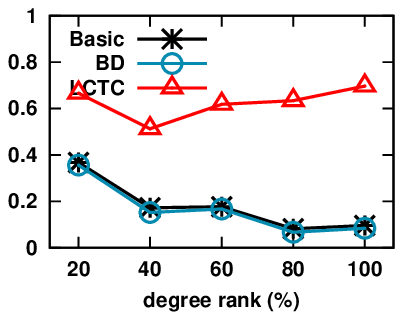}}
} \vskip -0.2in
\caption{Facebook: varying query vertices} \label{fig.deg_V}
\vspace*{-0.3cm}
\end{figure}

Second, we vary the degree of query nodes.  For a graph to be tested,
we sort the vertices in descending order of their degrees, and
partition them into 5 equal-sized buckets. \RV{ For each bucket, we
randomly select 100 different query sets of size 3, and we report the
average runtime, the average percentage of avoiding \FRE and the average density.} The
results for DBLP and Facebook are shown in Figure \ref{fig.deg_time}
and Figure \ref{fig.deg_V}, respectively. \RV{  In terms of runtime, the percentage of avoiding \FRE and density, the performance are similar to the results
by varying the query sizes.}
\LCTC outperforms the others.

\comment{
The average query time for each degree group is
reported in Figure \ref{fig.deg_time}. It is observed that global
methods \CTC and \LAZY cost more query processing time when the degree
of query nodes decreases, and \LCTC is inverses. For the methods \CTC
and \LAZY, the connected $k$-truss containing query nodes of low
degree may have small trussness $k$ even for the largest one $k$, and
the size of corresponding graph $G_0$ would be larger, which leads to
a more expensive computation for community search. On the contrary,
the method \LCTC considers a smaller local graph for query nodes of
lower degree, and the query processing becomes more efficient. Once
again, \LCTC outperforms than other methods on two networks, and \CTC
is still the slowest of all. \LL{ The averaged the percentage of
$\frac{|V(R)|}{|V(G_0)|}$ of each method is shown in
Figure \ref{fig.deg_V}. \LCTC still achieve the lowest percentage
among all three methods,  \LAZY still achieves the worst performance
on both network. All methods increase the percentage with the
decreased averaged degree of query node, due to the decreasing largest
trussness.}
}

\begin{figure}[t]
\vskip -0.4cm
\centering \mbox{
\subfigure[Query Time]{\includegraphics[width=0.37\linewidth,height=2.5cm]{./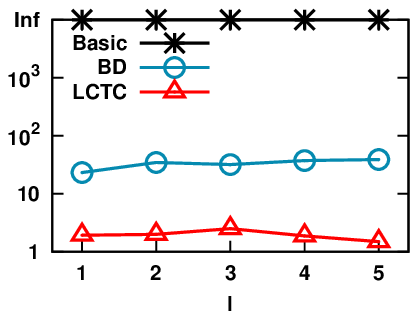}}\hskip -0.15in
\subfigure[The percentage]{\includegraphics[width=0.37\linewidth,height=2.5cm]{./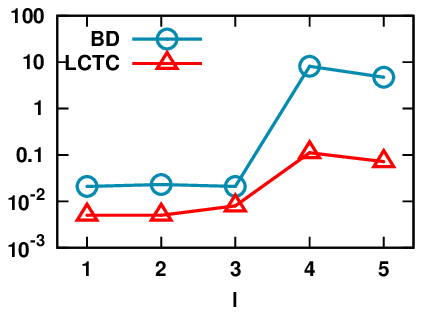}}
\hskip -0.15in
\subfigure[Density]{\includegraphics[width=0.37\linewidth,height=2.5cm]{./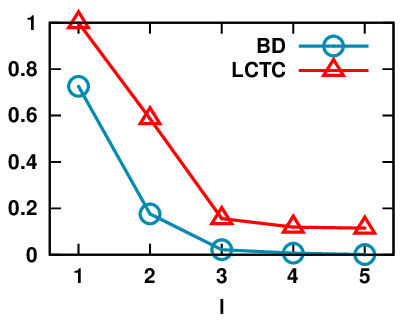}}
} \vskip -0.2in
\caption{DBLP: varying inner distance $l$} \label{fig.dist_time}
\vspace*{-0.3cm}
\end{figure}

\begin{figure}[t]
\vskip -0.2cm
\centering \mbox{
\subfigure[Query Time]{\includegraphics[width=0.37\linewidth,height=2.5cm]{./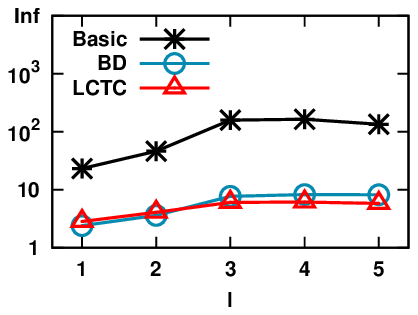}}\hskip -0.15in
\subfigure[The percentage]{\includegraphics[width=0.37\linewidth,height=2.5cm]{./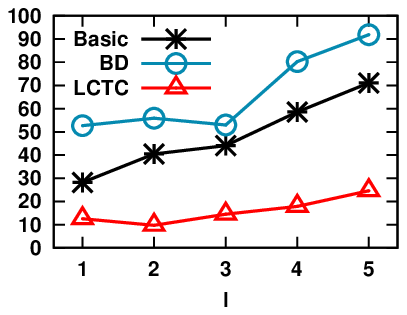}}
\hskip -0.15in
\subfigure[Density]{\includegraphics[width=0.37\linewidth,height=2.5cm]{./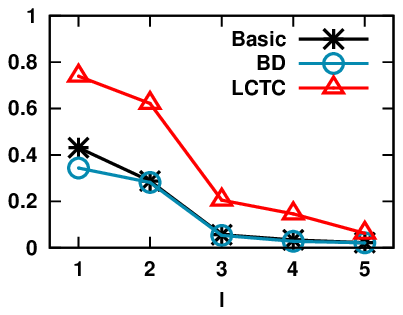}}
} \vskip -0.2in
\caption{Facebook: varying inner  distance $l$} \label{fig.dist_V}
\vspace*{-0.3cm}
\end{figure}

Third, we vary the inter-distance $l$ within query nodes from 1
to 5. For each $l$ value, we randomly select 100 sets of 3 query
nodes, in which the inter-distance of query nodes is to be $l$. \RV{ We
report the average runtime, the average percentage of
avoiding \FRE and density.} The results for DBLP and Facebook are shown in
Figure \ref{fig.dist_time} and Figure \ref{fig.dist_V}, respectively. \RV{The
performance in terms of runtime, the percentage of avoiding \FRE and density are
similar to the results observed.} All methods increase the percentage
while the inter-distance $l$ increases. This is because the diameter
of community increases, and therefore the less number of nodes can be
removed from graph.
\LCTC outperforms the others.

\comment{
We
report the averaged query time for each inter-distance group in
Figure \ref{fig.dist_time}. Obviously, \LCTC is most efficient,
and \CTC is worst. All methods achieve stable efficiency performance
by increasing the inter-distance, with a slightly increasing running
time, which is caused by the increased size of $G_0$. \LL{Moreover, we
report he averaged the percentage of $\frac{|V(R)|}{|V(G_0)|}$ of each
method by varying $l$ in Figure \ref{fig.dist_V}. First, all methods
increase the percentage with the increased inter-distance $l$, because
the diameter of community increases and the less nodes can be removed
from graph. \LCTC has the best performance again among all methods,
and \LAZY is worst.}
}


\begin{table}[h]
\begin{center}
\vspace*{-0.5cm}
\scriptsize
\caption[]{\textbf{Index size and index construction time}}\label{tab:index}
\begin{tabular}{|l|r|r|r|}
\hline
{\bf Network} & {\bf Graph Size (M)} & {\bf Index Size (M)} & {\bf Index Time (s)} \\ \hline
Facebook	& 	0.9	& 	1.3	& 7.4 \\ \hline
Amazon		& 12	& 	19	& 	6.7 \\ \hline
DBLP	& 	13		& 20		& 14 \\ \hline
Youtube	& 		37	& 	59		& 76 \\ \hline
LiveJounarl	& 	478		& 666		 	& 2,142 \\ \hline
Orkut		& 1,640	& 	2,190	& 	21,012 \\ \hline
\end{tabular}
\vspace*{-0.1cm}
\end{center}
\end{table}

We report the simple $k$-truss index in terms of index size
(Megabytes) and index construction time (seconds) in
Table \ref{tab:index}. The size of the $k$-truss index is 1.6 times of
the original graph size, which confirms that the simple $k$-truss
indexing scheme has $O(m)$ space complexity and is very compact. The
index construction is very efficient.
%
%

\eat{

\subsection{Efficient Truss Maintenance}

In this section, we evaluate the improvement performance of efficient
truss maintenance techniques. We consider a variant method of \lazy
equipped with the basic truss maintenance in
Algorithm \ref{algo:simple_trssmt}, and compare it with the
method \lazy equipped with fast truss maintenance in
Algorithm \ref{algo:trssmt}. We report the averaged query time for two
methods varying different query degree groups on DBLP network in
Figure \ref{fig.truss_maintain}. As we can see, the efficient truss
maintenance method takes less time than the basic one. Indeed, the
efficient truss maintenance method improves the query processing
performance, especially when the averaged degree of query nodes is low
and the corresponding graph $G_0$ is large.

\begin{figure}[h]
\vskip -0.1cm
\centering \mbox{
\subfigure[DBLP]{\includegraphics[width=0.5\linewidth,height=2.5cm]{./Figure/exp/DBLP/fast_time.eps}}
} \vskip -0.2in
\caption{Query time (in seconds) of \lazy methods respectively with basic and fast truss maintenance} \label{fig.truss_maintain}
\end{figure}
}


\noindent
{\bf \JY{Exp-2} A Case Study on DBLP}: We construct a collaboration
network from the raw DBLP data
set\footnote{\small{\url{http://dblp.uni-trier.de/xml/}}} for a case
study. A vertex represents an author, and an edge between two authors
indicates they have co-authored no less than 3 times. This DBLP graph
contains 234,879 vertices and 541,814 edges.

\RV{
We use the query $Q=$ \{``Alon Y. Halevy", ``Michael J. Franklin", ``Jeffrey D. Ullman", ``Jennifer Widom"\} to test our \ctc model for detecting the community. Figure \ref{fig.sensetive}(a) shows
$G_0$ that is the maximal connected $9$-truss containing $Q$.
This entire graph has 73 nodes, 486 edges, edge density of 0.18 and diameter of 4.
As we can see that most black nodes span long distance to reach at each other. They are loosely connected with query nodes by some midsts.  
Our method \LCTC removes these balck nodes and finds a \ctc for $Q$ shown in
Figure \ref{fig.sensetive}(b), which is a 9-truss of diameter 2. It
has 14 authors, 81 edges and the edge density of 0.89.  The community
does not inclue any authors in 9-truss, and thoes other are far away
from queried authors.
}


\begin{figure}[t]
\scriptsize
 \vspace{-0.2cm}
\centering
{
\subfigure[$G_0$]{\includegraphics[width=0.45\linewidth]{./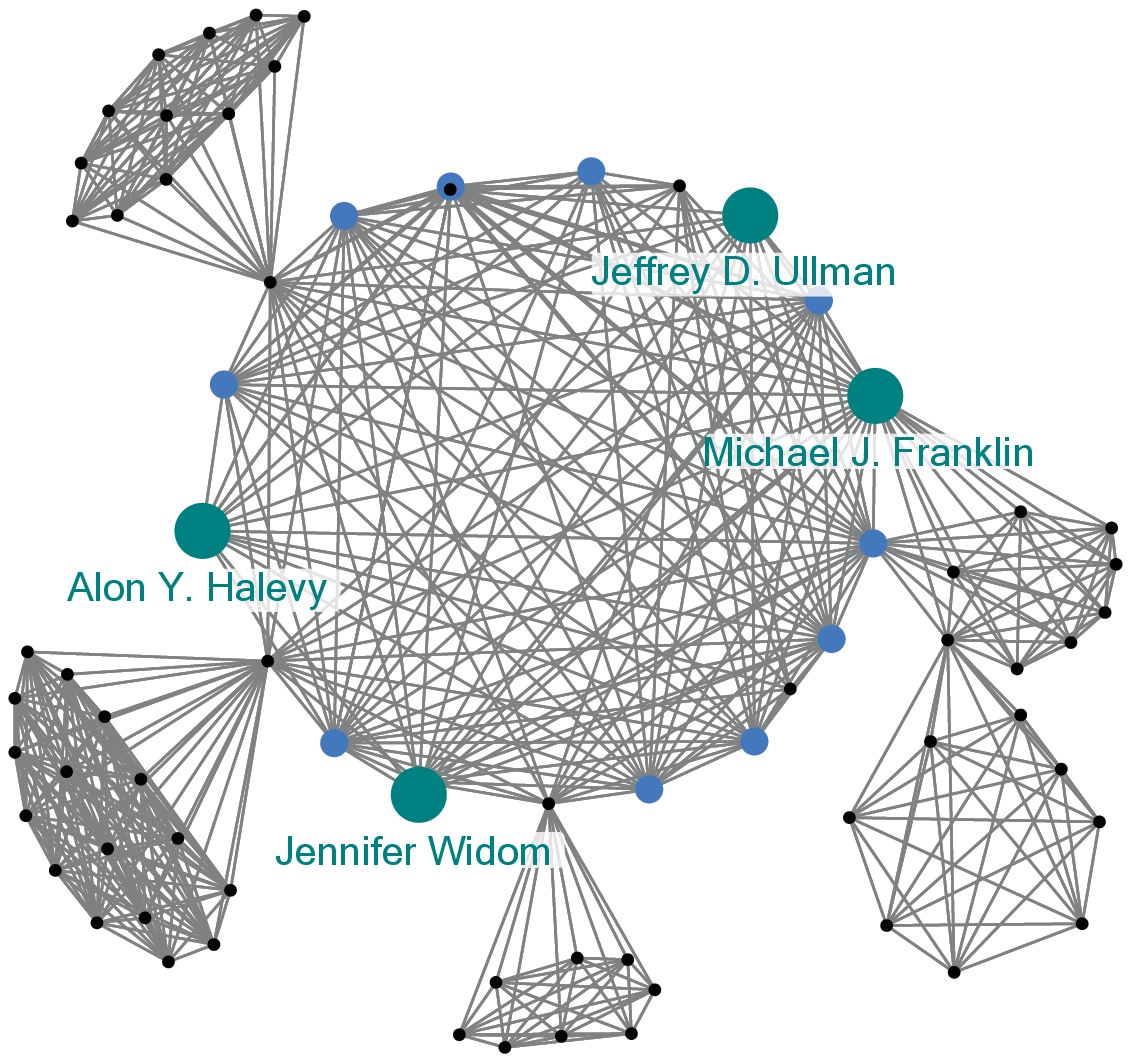}}
\subfigure[\LCTC]{\includegraphics[width=0.45\linewidth]{./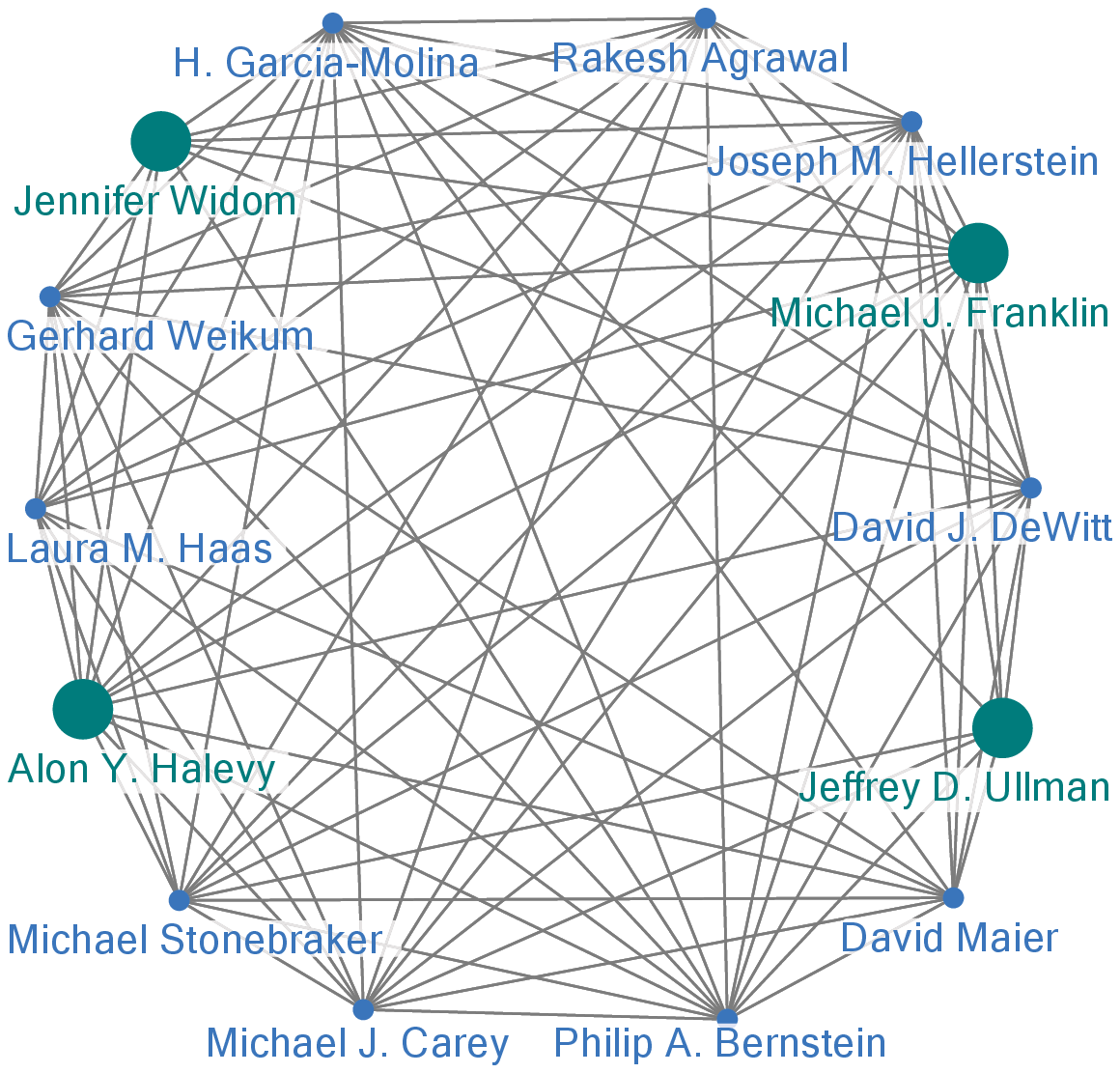}}
}
\vspace{-0.4cm}
\caption{Community search on DBLP network using query $Q=$\{``Alon Y. Halevy", ``Michael J. Franklin", ``Jeffrey D. Ullman", ``Jennifer Widom"\} }
\label{fig.sensetive}\vspace{-0.2cm}
\end{figure}

\eat{
We use the query $Q = \{``Jiawei Han'', ``Jian Pei''\}$ to test our \ctc
model for detecting the community. Figure \ref{fig.sensetive}(a) shows
$G_0$ that is the maximal connected $6$-truss containing $Q$.
The entire graph has 5105 nodes, 28217 edges and diameter of 39.
As we can see that the connections between most nodes span long distance. Moreover, the edge density is 0.002, which is loose.
Thus, $G_0$ can not be a good community for $Q$, and most nodes can be removed from $G_0$.
The \ctc for $Q$ is shown in
Figure \ref{fig.sensetive}(b), which is a 6-truss of diameter 2. It
has 10 authors, 29 edges and the edge density of 0.644.  The community
does not inclue any authors in 5-truss, and thoes other are far away
from queried authors.

\begin{figure*}[t]
\scriptsize
 \vspace{-0.6cm}
\centering
{
\subfigure[A 6-truss graph $G_0$ containing $Q$]{\includegraphics[width=0.42\linewidth]{./Figure/exp/G0.eps}}
\subfigure[Closet Truss Community for $Q$]{\includegraphics[width=0.35\linewidth]{./Figure/exp/Q1.eps}}

}
\vspace{-0.5cm}
\caption{A community detected using query $Q = \{Jiawei Han, Jian Pei\}$}
\label{fig.sensetive}\vspace{-0.1cm}
\end{figure*}
 }

%

\begin{figure}[t]
\vskip -0.4cm
\centering \mbox{\hskip -0.1in
\subfigure[$F_1$ score]{\includegraphics[width=0.38\linewidth]{./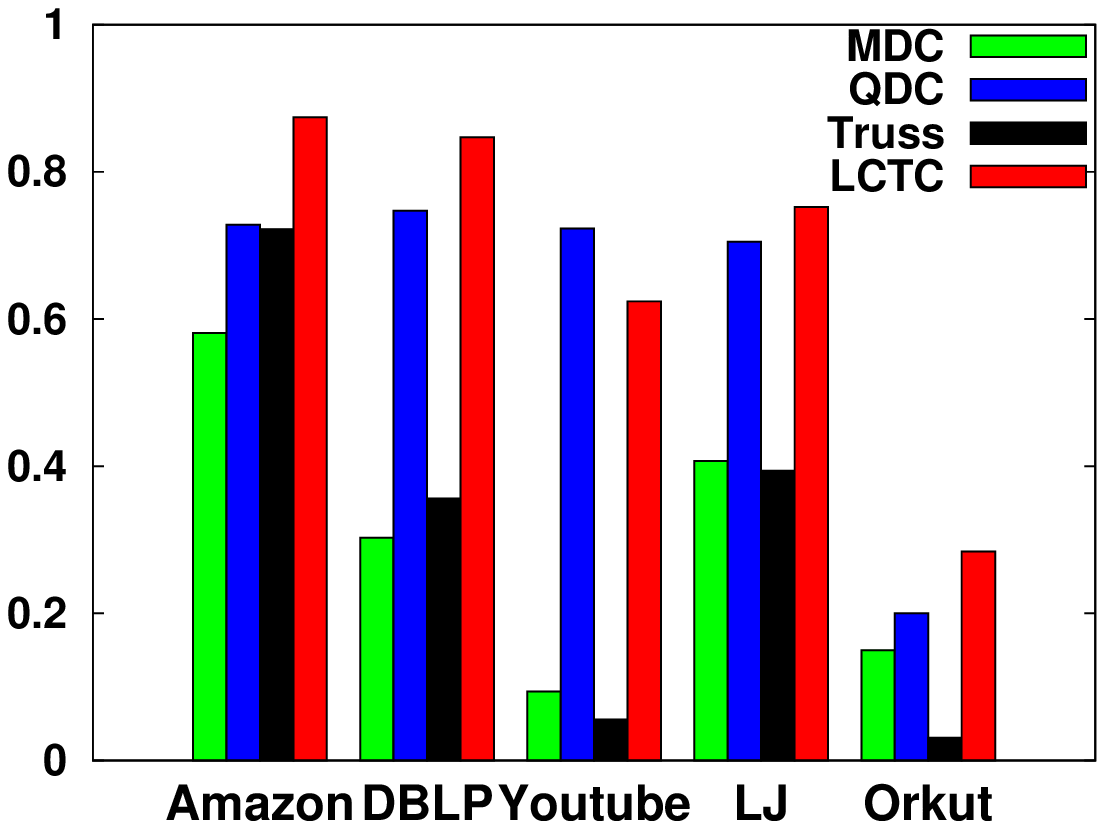}} \hskip -0.1in
\subfigure[Query
time]{\includegraphics[width=0.38\linewidth]{./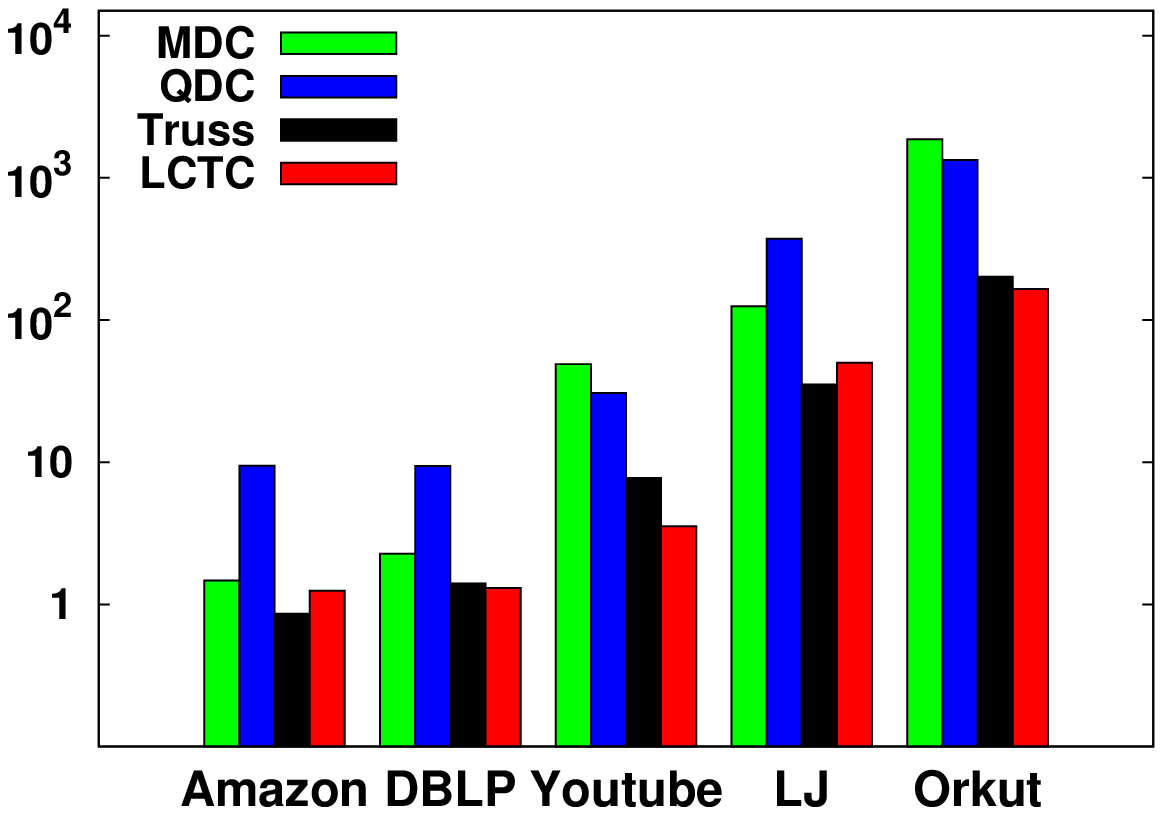}}
\hskip -0.1in
\subfigure[Reduction]{\includegraphics[width=0.38\linewidth]{./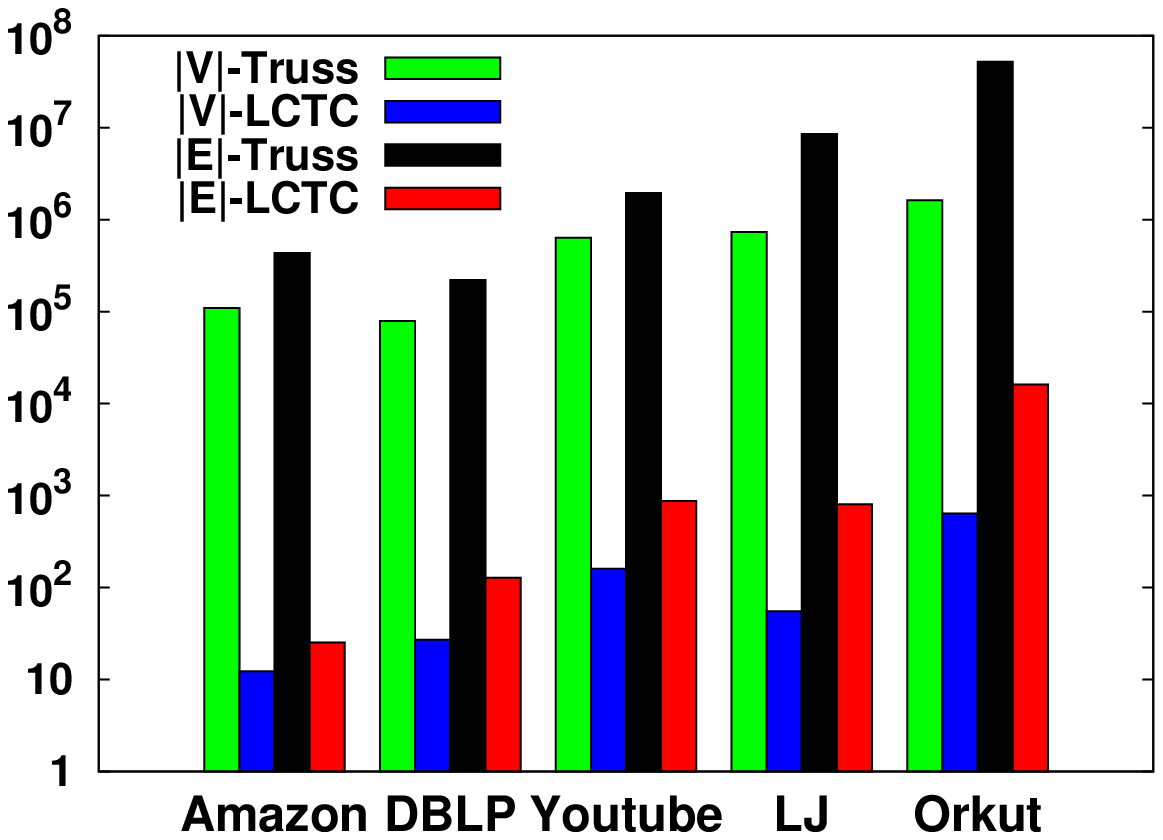}}
}
\vskip -0.2in
\caption{Quality evaluation on networks with ground-truth communities} \label{fig.quality}
\vspace*{-0.4cm}
\end{figure}

\noindent
{\bf Exp-3 The Quality \JY{by Ground-Truth}}: To evaluate
the effectiveness of different community models, we compare \LCTC with
three other methods \MDC, \QDC and \TRUS using the 5 networks, DBLP,
Amazon, Youtube, LiveJournal, and Orkut, with ground-truth
communities \cite{YangL12}.
%
%
We randomly select query nodes that appear in a unique ground-truth
community, and select 1,000 sets of such query nodes with the size
randomly ranging from 1 to 16.
%
%
%
%
We evaluate the accuracy by the F1-score of the detected community,
and report the averaged F1-score over all query cases.

Figure \ref{fig.quality}(a) shows the F1-score.  Our method achieves the
highest F1-score on most networks. \QDC has the second best
performance, which outperforms \LCTC on Youtube network. \MDC does not
perform well due to the fixed distance and size constraints.
\RV{We observe that the accuracy drops on Orkut for most methods. One possible reason is that many ground-truth communities in Orkut are not densely connected, which violates the assumption of all dense community models. Another reason is that the community membership per node on Orkut is much larger than that on other networks \cite{YangL12}. The large overlap of ground-truth communities makes them difficult to be detected accurately.}
Figure \ref{fig.quality}(b) shows that \LCTC runs much faster than \MDC and \QDC, and is close to \TRUS.
%
%
Figure \ref{fig.quality}(c) shows the size of communities detected by
\LCTC and \TRUS, in terms of the number of vertices and edges.
%
%
As we can see, the number of nodes ($|V|$-) and the number of edges
($|E|$-) in communities detected by \LCTC are much less than those by
\TRUS on all networks. It confirms the power of eliminating irrelevant
nodes from discovered communities by our \LCTC.

\begin{figure}[t]
\vskip -0.2cm
\centering \mbox{
\subfigure[Diameter]{\includegraphics[width=0.5\linewidth,height=2.5cm]{./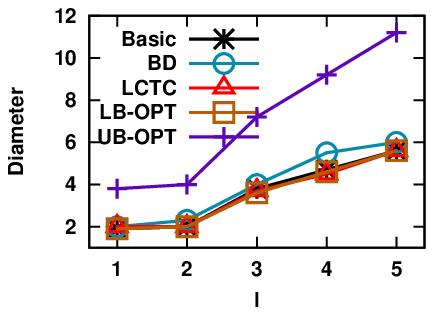}}
\subfigure[Trussness]{\includegraphics[width=0.5\linewidth,height=2.5cm]{./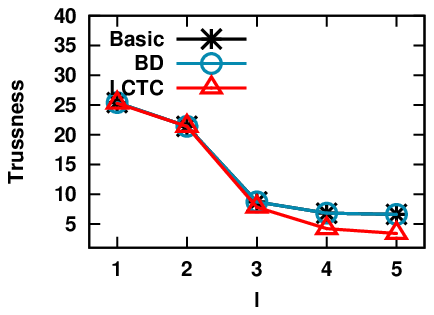}}
} \vskip -0.2in
\caption{Varying the inner distance $l$ on Facebook} \label{fig.diam_truss}
\vspace*{-0.2cm}
\end{figure}


\noindent
{\bf \JY{Exp-4} Diameter and Trussness Approximation}: We evaluate the
diameter approximation of detected communities by our methods on
Facebook network.
%
%
Here, we take the lower bound of the optimal diameter \JY{(LB-OPT)} as
the smallest query distance $\dist_{R}(R,Q)$, where $R$ is the
community detected by method \CTC. We show the curve of
$2\dist_{R}(R,Q)$, which serves as the upper bound of smallest
diameter \JY{(UB-OPT)} by Lemma \ref{lemma.disdia}. The averaged
diameters of communities detected by different methods are reported in
Figure \ref{fig.diam_truss}(a), where we vary the inter-distance $l$.
The diameters of detected communities obtained by all our methods are
very close to the lower bound of optimal one.
Figure \ref{fig.diam_truss}(b) shows the maximum trussness of detected
communities by our methods. \CTC and \LAZY globally
search the $k$-truss containing query nodes on the entire graph, and
the detected communities have the maximum trussness $k$.
\JY{
\LCTC can
detect the trussness of communities
which are very close to \CTC and \LAZY, by
searching over a small graph locally.
}
\LCTC
balances the efficiency and effectiveness well.

\eat{
\noindent
{\bf \JY{Exp-5} Varying Maximum Trussness $k$}: In this experiment, we
evalute our three methods that do not find the truss community with
the real maximum trussness, but with a given maximum value $k$. The
parameter $k$ ranges from 2 to 8.  For comparison, we also report the real
 maximum trussness case as ``max'' in the horizontal axis. The results of diameter
varying by $k$ are respectively shown in Figure \ref{fig.fixk} (a) and
(b).
In Figure \ref{fig.fixk} (a), when the maximum trussness $k$
decreases, the lower bound of optimal diameter also decreases as
expected, but the margin is small. On the other hand, even though
trussness $k$ greatly decreases from 8 to 2, the communities detected
by our methods are nearly and very close to the lower bound of optimal
one. \LCTC and \CTC ahieve the smallest diameter for $k=4$, and \LAZY
does it for $k=6$. This indicates that it is hard to choose the best
$k$ for smallest diameter in real applications. In addition, as we can
see in Figure \ref{fig.fixk} (b), the density become worse by \CTC
and \LAZY methods when $k$ decreases, which shows the necessary of
maximum trussness constraint. On the other hand, \LCTC achieve stable
edge density when parameter $k$ varies.
}

\begin{figure}[t]
\vskip -0.25cm
\centering
\includegraphics[width=0.5\linewidth,height=2.5cm]{./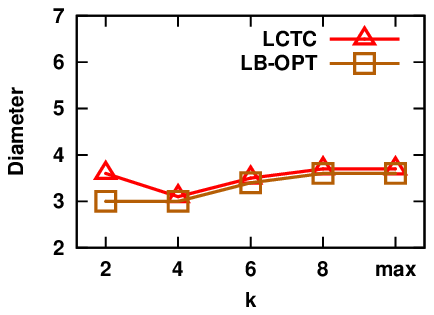}
\vskip -0.2in
\caption{Diameter v.s. the maximum trussness $k$ on Facebook} \label{fig.fixk}
\end{figure}

\RV{
\noindent
{\bf \JY{Exp-5} Varying Maximum Trussness $k$}: 
In this experiment, we
evalute our method \LCTC that do not find the truss community with
the real maximum trussness, but with a given maximum value $k$. We test different $k$ ranged from 2 to to ``max ", which is the real largest trussness could be.  The diameter of found community by \LCTC is reported in Figure \ref{fig.fixk}. With $k$
decreases, the lower bound of optimal diameter also decreases from 3.6 to 3.0,  but the margin is small. Meanwhile, the communities detected by \LCTC are very close to the optimal
one for any $k$. The approximation ratio is not greater than 1.2. This indicates that our model with the maximum trussness constraint have the adavantage of parameter-free. }

\eat{
\noindent
{\bf Reducing \FRE Evaluation}: In this experiments, we evaluate the goodenss of the discovered communities after reducing \FRE from $G_0$ by our method \LCTC on Facebook and DBLP networks. $G_0$ is used as a baseline. Since finding the optimal solution is expensive, we do not directly use the defition of \FRE as the measure. Instead, we evaluate it using four metrics such as graph size($|V|/|E|$), diameter, cluster coefficient and edge density($\frac{2|E|}{|V|(|V|-1)}$). The results are reported in Table \ref{tab:fre}. As we can see, \LCTC achieved the the communities with lower diameter, high clustering coefficient, and larger density than $G_0$ both on Facebook and DBLP, which shows the advantage of reducing the \FRE.

\begin{table}[h]
\begin{center}
\vspace*{-0.5cm}
\scriptsize
\caption[]{\textbf{Power of reducing \FRE on Facebook}}\label{tab:fre}
\begin{tabular}{|l|c|c|}
\hline
{\bf } & {\bf $G_0$} & {\bf LCTC} \\ \hline
$|V|/|E|$	& $2680/72472$	& $389/10718$ \\ \hline
Diameter & 6.8 & \bf 3.7 \\ \hline
Cluster Coefficient & 0.564 & \bf 0.579 \\ \hline
Density & 0.03 & \bf 0.204 \\ \hline
\end{tabular}
\vspace*{-0.5cm}
\end{center}
\end{table}

\begin{table}[h]
\begin{center}
\vspace*{-0.5cm}
\scriptsize
\caption[]{\textbf{Power of reducing \FRE on DBLP}}\label{tab:fre}
\begin{tabular}{|l|c|c|}
\hline
{\bf } & {\bf $G_0$} & {\bf LCTC} \\ \hline
$|V|/|E|$	& $156103/661525$ & $134/1135$\\ \hline
Diameter & 22.1 & \bf 4.3 \\ \hline
Cluster Coefficient & 0.459 & \bf 0.479 \\ \hline
Density & 0.0001 & \bf 0.156 \\ \hline
\end{tabular}
\vspace*{-0.5cm}
\end{center}
\end{table}

}

\noindent
{\bf Exp-6 Varying \LCTC parameters}: In this experiment, we test the performance of \LCTC by varying  parameters $\eta$ and $\gamma$. We used the same query nodes that are selected in Exp-3 on DBLP network. The similar results can be also observed on other 4 networks in this paper. $\eta=1000$ and $\gamma=3$ is the default setting for \LCTC.
For the parameter $\eta$, we firstly vary it from 100 to 2000. The results of F1-score, the number of community vertices $|V|$ and the running time. are reported in Figure \ref{fig.lctc_eta}. As we can see, the number of community vertices increases when $\eta$ increases from 100 to 500, and then keeps stable for larger $\eta$. It shows that the default setting $\eta=1000$ is large enough. Moreover, \LCTC achieves the stable performance of F1-score and running time by varying $\eta$. We also test the parameter $\gamma$, and report the results on Figure \ref{fig.lctc_gamma}. The number of community vertices increases with the increased $\gamma$. Because \LCTC with a larger $\gamma$ can detected the community of a larger trussness, and the number of vertices to be removed is reduced. On the other hand, the F1-score increases with increasing $\gamma$ at first, but it drop slightly when $\gamma$ further increases. The running time of \LCTC keeps table.

\begin{figure}[t]
\centering \mbox{
\subfigure[$|V|-$\LCTC]{\includegraphics[width=0.35\linewidth,height=2.5cm]{./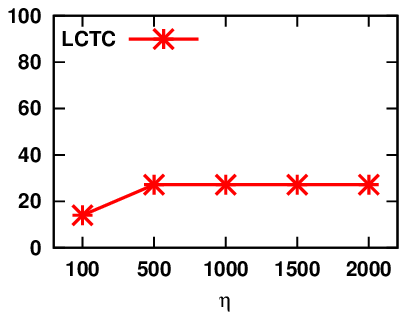}}\hskip -0.15in
\subfigure[F1-Score]{\includegraphics[width=0.35\linewidth,height=2.5cm]{./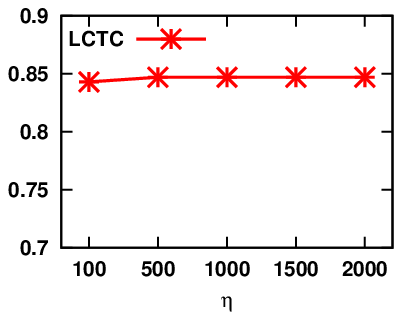}} \hskip -0.15in
\subfigure[Query Time]{\includegraphics[width=0.35\linewidth,height=2.5cm]{./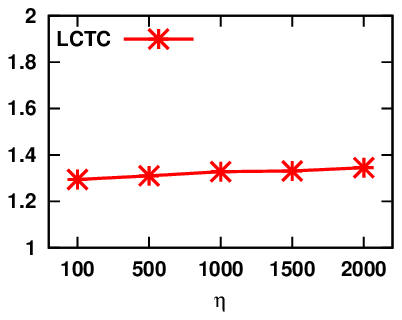}}
} \vskip -0.2in
\caption{DBLP: varying parameter $\eta$ of \LCTC} \label{fig.lctc_eta}
\vspace*{-0.3cm}
\end{figure}

\begin{figure}[t]
\centering \mbox{
\subfigure[$|V|-$\LCTC]{\includegraphics[width=0.35\linewidth,height=2.5cm]{./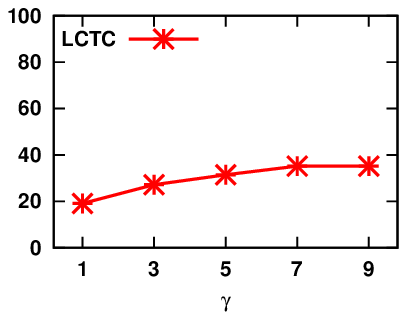}}\hskip -0.15in
\subfigure[F1-Score]{\includegraphics[width=0.35\linewidth,height=2.5cm]{./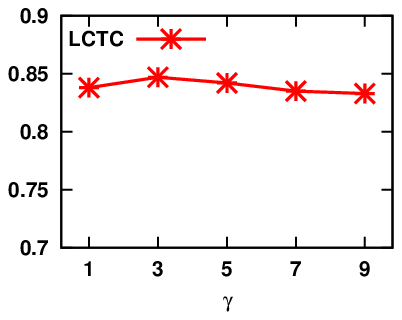}} \hskip -0.15in
\subfigure[Query Time]{\includegraphics[width=0.35\linewidth,height=2.5cm]{./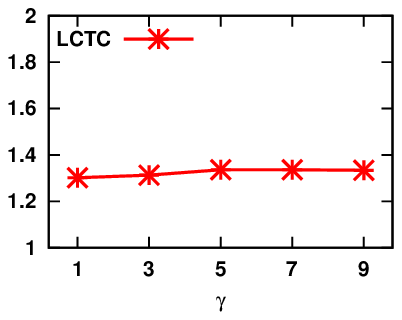}}
} \vskip -0.2in
\caption{DBLP: varying parameter $\gamma$ of \LCTC} \label{fig.lctc_gamma}
\vspace*{-0.3cm}
\end{figure}

\vspace*{-0.4cm}
\section{Related Work and Discussion} \label{sec.related}

\RV{

In this section, we firstly discuss the rationale of our designed model, and then review the most related work to our study, which contains community search, community detection, and dense subgraph mining.

\subsection{Design Decisions} \label{sec.design}

Here, we discuss several natural candidates for community models and provide a rationale for our definition of \ctc.

\laks{
\stitle{Diameter vs query distance.} Being closely related to the query nodes is a natural desirable property for nodes to be included in a community. In the literature, small diameter has been regularly considered as an important hallmark of a good community -- see e.g., \cite{Edachery99graphclustering, doddi2000approximation, gonzalez1985clustering, edachery1999graph}. Thus, minimizing diameter in identifying communities has a natural motivation.

Secondly, by definition, a community with a small diameter will also have small query distance from its nodes. On the other hand, minimizing query distance ignores the distance between non-query nodes in the community. In this sense, small diameter is a strictly stronger property than small query distance. Example \ref{ex-qd-dia} illustrates this point and the value of minimizing diameter as opposed to just query distance.

}

\eat{The first comparison model is to replace the smallest diameter metirc $\diam(G')$ with the smallest query distance $\dist_{G'}(G', Q)$ in CTC. We will show the metric of diameter is better than the query distance as follow.
Firstly, in the literature, the diameter is regularly considered as an important feature of a good community \cite{Edachery99graphclustering, doddi2000approximation, gonzalez1985clustering, edachery1999graph}, and usually serves in the optimazation functions, due to it naturally represent the strength of membership closeness.

Secondly, the query distance metric only consider the distance between non-query nodes and query nodes in community, but ingores the distance among non-query nodes, which causes large diameter.
Actually, the community with the smallest diameter detected by CTC also has the smallest query distance, which is also an answer of the query distance community model. In other words, our CTC model is stronger than the query distance model. On the contrary, the community with smallest query distance may strictly have a larger diameter than the community with smallest diameter.
 For example, consider the graph $G$ in the Figure \ref{fig.querydistance} and query $Q=\{q_1, q_2\}$. 
The whole graph is a 4-truss, since every edge has 2 triangles.
 The distance between the star node $r$ and $q_2$ is 3, and the query distance for square nodes and circle nodes is 2. Thus, the subgraph excluding the star node with their incident edges is the community with max trussness of 4 and minimum query distance of 2. However, its diameter is 3, as the maximum distance of square node $v$ and circle node $p$ is 3. On the other hand, the subgraph exculding both square nodes and star node has trussness of 4 and the minimum diameter of 2. As a result, the minimizing the query distance is not enough and indeed diameter should be minimized. }
\eat{
\laks{
\stitle{Trading trussness for diameter.} Every $k$-truss is also a $(k-1)$-truss by definition. Thus, relaxing the maximum trussness requirement may allow us to find a community with a smaller diameter by sacrificing trussness. \RV{We conduct a testing on Facebook with default queries. With the maximum trussness decreases from 92 to 2, the lower bound of optimal diameter also decreases little from 3.6 to 3.0. Meanwhile, \LCTC can still find communities with nearly optimal diameter for any $k$. The approximation ratio is less than 1.2. Another problem is that} the variation of diameter as trussness decreases, may not be smooth but may face a sudden drop as trussness decreases to a low value. E.g., continuing with the example of Figure \ref{fig.community}(a) with query $Q=\{q_1, q_2, q_3\}$, our CTC model yields a community with the highest trussness $k=4$ and diameter 3, as in Figure \ref{fig.community}(b). When $k=3$, the 3-truss containing $Q$ with the smallest diameter is still Figure \ref{fig.community}(b). However, when $k=2$, the cycle of $\{(q_1, t), $ $(t, q_3),$ $ (q_3, v_4),$ $ (v_4, q_2),$ $ (q_2, q_1)\}$ turns out to be the 2-truss containing $Q$ and its diameter is 2. However, this is loosely connected and has a low edge density. In general, for a small $k$, the $k$-truss community found by removing free riders may have loosely connected structure and thus may be noisy. One advantage of our approach is that it is parameter-free. However, if a user would like to explore trading trussness for diameter, it is straightforward to extend our algorithms (Algorithms \ref{algo:mindia} and \ref{algo:findg0}) to treat the desired trussness $k$ as a constraint instead of maximizing trussness. Finally, another way of combining trussness and diameter is using a weighted combination, but this comes with the challenge of tuning the weights. Our parameter-free approach of minimizing diameter while keeping trussness at the  maximum value is a reasonable choice.
}
}

\laks{
\stitle{Trading trussness for diameter.} Every $k$-truss is also a $(k-1)$-truss by definition. Thus, relaxing the maximum trussness requirement may allow us to find a community with a smaller diameter by sacrificing trussness.  One problem is that the variation of diameter as trussness decreases, may not be smooth but may face a sudden drop as trussness decreases to a low value. E.g., continuing with the example of Figure \ref{fig.community}(a) with query $Q=\{q_1, q_2, q_3\}$, our CTC model yields a community with the highest trussness $k=4$ and diameter 3, as in Figure \ref{fig.community}(b). When $k=3$, the 3-truss containing $Q$ with the smallest diameter is still Figure \ref{fig.community}(b). However, when $k=2$, the cycle of $\{(q_1, t), $ $(t, q_3),$ $ (q_3, v_4),$ $ (v_4, q_2),$ $ (q_2, q_1)\}$ turns out to be the 2-truss containing $Q$ and its diameter is 2. However, this is loosely connected and has a low edge density. In general, for a small $k$, the $k$-truss community found by removing free riders may have loosely connected structure and thus may be noisy. One advantage of our approach is that it is parameter-free. However, if a user would like to explore trading trussness for diameter, it is straightforward to extend our algorithms (Algorithms \ref{algo:mindia} and \ref{algo:findg0}) to treat the desired trussness $k$ as a constraint instead of maximizing trussness. Finally, another way of combining trussness and diameter is using a weighted combination, but this comes with the challenge of tuning the weights. Our parameter-free approach of minimizing diameter while keeping trussness at the  maximum value is a reasonable choice.
}

\eat{
For $k$ decreases to 3, the detected commuinty is still valid, and the diameter of 3 do not change. For $k$ decreases to 2, we indeedly find a new community that is the cycle of $\{(q_1, t), $ $(t, q_3),$ $ (q_3, v_4),$ $ (v_4, q_2),$ $ (q_2, q_1)\}$ with a smaller diameter of 2. However, this community are loosely connected with very low edge density, which is not good. In other words, for a small $k$, even we try to find community by removing free riders, the basis $k$-truss is not right one which has many noises and loosely connected structure. Thus, the choose of a proper parameter $k$ is important and hard, which brings a new problem. Obviously, our current CTC model has advantage of being parameter-free.

The second comparison model is to trade off the trussness and diameter in CTC, i.e., releax the largest trussness condition to a smaller value $k$ and acheive a community of smaller diameter. For a fixed query, by allowing a smaller trussnes, we can find a community with a smaller diamter, since any $k$-truss is also a $(k-1)$-truss.

Firstly, let us consider the community changes caused by the decreased trussness. The variantion of diameter may not be smooth when trussness decreases, but suddenly drops when the trussness decreases below some thresholds. Continuing above example with query $Q=\{q_1, q_2, q_3\}$ in Figure \ref{fig.community}(a). Our CTC model finds a community with the highest trussness $k=4$ in Figure \ref{fig.community}(b). For $k$ decreases to 3, the detected commuinty is still valid, and the diameter of 3 do not change. For $k$ decreases to 2, we indeedly find a new community that is the cycle of $\{(q_1, t), $ $(t, q_3),$ $ (q_3, v_4),$ $ (v_4, q_2),$ $ (q_2, q_1)\}$ with a smaller diameter of 2. However, this community are loosely connected with very low edge density, which is not good. In other words, for a small $k$, even we try to find community by removing free riders, the basis $k$-truss is not right one which has many noises and loosely connected structure. Thus, the choose of a proper parameter $k$ is important and hard, which brings a new problem. Obviously, our current CTC model has advantage of being parameter-free.

Secondly, a weighted combination of diameter and trussness bring new challenges. It is also very difficult to choose the weights, since we do not what the best look like, and the best community for different queries varied greatly in parameter settings. Leaving the exploration to users is an option. Fortunately, all proposed techniques throught this paper can be easily extended to handle the CTC model with a given trussness $k$. We only need to assign variable $k$ with the given value in Algorithm \ref{algo:mindia} and \ref{algo:findg0}, insteading of finding the largest trussness.
}


\laks{
\stitle{Constraining community size.}
At first, it appears that we can minimize or avoid free riders by bounding the size of a community. However, sizes of commuities may vary widely and it is difficult, if possible at all, to impose proper bounds on acceptable community sizes. Moreover, bounding the size of the community may render the problem of finding a query driven community inapproximable w.r.t. any factor. Specifically, consider the special case of finding a $k$-truss of size at most a given parameter $\ell$ that contains $Q=\emptyset$. This subsumes the $k$-clique problem, which is not approximable within any reasonable factor \cite{haastad1996clique}. By contrast, minimizing diameter instead of size admits efficient approximation. Indeed, our formulation does address community size indirectly. The larger the $k$, the smaller the size of the $k$-truss. Our CTC model maximizes trussness. Furthermore, by minimizing the diameter, it helps remove free riders, thus reducing the size in a disciplined manner. On the algorithmic side, our LCTC method (Section \ref{sec.local}) actually uses a size threshold to prune the search space and improve efficiency. Thus, LCTC controls the size of a community in a heuristic manner.
}

\eat{
We discussed the size constraint into two folds:  motivation and techniques.  There are two size constraints, i.e. lower bound and upper bound. It is trival to reforece the lower bound of size constraint, i.e., exceeding a given value,  into CTC model. Because we can firstly find the maximum connected $k$-truss containg $Q$, and to check it exceeding the threshold. If not, then there exists no solution; Otherwise, we can find a solution by using our algorithms. In the following, we consider the upper bound of size constraint.  The third variant model we consider is to enforce the size constraint into CTC, i.e., the new model should satisfy all conditions in CTC, and the community size would be at most a given threshold. In terms of motivation, since the size of different communities vary huge, it is hard for user to know the real size before its detected and give a proper parameter setting. On the other hand, the new problem can be truned out to be hard in techniques. 
For this new model, we can ignore the optimization of smallest diameter, and set the size threshold as $k$. Then, the new problem is turned out to be an hard-approximate problem. Consider an instance of its decision problem that is to check whether there is a connected $k$-truss containing $Q=\emptyset$ with the size at most $k$. It is exactly the $k$-clique decision problem. As we know, there has no polynomial algorithms that provide any reasonable approximation to the maximum clique problem \cite{haastad1996clique}, which bring the computation hardness.  The size as a natural community mertic, it has already been considered by our existing optimizations, i.e. the largest trussness and smallest diameter.  For a connected $k$-truss containing $Q$, the larger $k$ is, the smaller  size of $k$-truss is. To furtherly reduce the diameter, it removes irrelevant nodes from the $k$-truss, which shrinks the community size. Therefore, by avoiding the \FRE, CTC can detect a small community than the $k$-truss.  On the other hand, our \LCTC method (Section \ref{sec.local}) actually use the size threshold to control the search space for speeduping the efficiency, which can be regarded as a heuristic method for the community search problem with the size constraint.
}

}


\subsection{Community Search}
Recently, several community search models have been studied, including $k$-truss  \cite{huang2014}, quasi-clique  \cite{CuiXWLW13}, $k$-core  \cite{sozio2010, cui2014local}, influential community  \cite{li2015influential} and  query biased densest subgraph  \cite{wu2015robust}. Here, we compare these models with our proposed \ctc model w.r.t.\ three aspects: (i) consideration of query nodes, (ii) cohesive structure, and (iii) quality approximation. 

\stitle{Query nodes.}
Cui et al.\ \cite{CuiXWLW13} have recently studied the problem of online search of overlapping communities for a query node by designing a new $\alpha$-adjacency $\gamma$-quasi-$k$-clique model. Huang et al.\  \cite{huang2014} propose a $k$-truss community model based on triangle adjacency, to find all overlapping communities of a query node. They ignore the diameter of the resulting community.  Cui et al.\ \cite{cui2014local} find a $k$-core community for a query node using local search.  In addition, influential community model \cite{li2015influential} finds top-$r$ communities with the highest influence scores over the entire graph; no query nodes are considered.  Extending any of above models from one (or zero) query node to multiple query nodes raises new challenges. First, for the models with one query node and a parameter $k$, the search algorithm can easily start from this node to find qualified subgraphs. For multiple nodes, it is non-trivial for the search algorithm to determine the start point and search directions, which can quickly connect all query nodes. Second, for a given parameter $k$, the connected dense subgraph containing all query nodes may not exist. Thus, it requires the search algorithm to automatically determine the proper $k$ for different query nodes.

\eat{
There exist several recent works on community search  for multiple query node(s). 
Sozio et al.\ \cite{sozio2010} proposed a $k$-core based community model with distance and size restriction. 
Most recently, Wu et al. \cite{wu2015robust} studied the query biased densest connected subgraph (QDC) problem for avoiding subgraphs irrelevant to query nodes in the community found. This is the most related work to ours,  we compare our model with them below.
\note[Laks]{Isn't THIS the section where we should compare all of them?}
\note[Xin]{I compare all methods in this subsection. We can remove them to another subsection or a single section. I am fine with both.}
}

\stitle{Cohesive structure.} \cite{sozio2010} and \cite{wu2015robust} support community search of multiple query nodes similarly to us, thus they are most related to our work.  Sozio et al.\ \cite{sozio2010} proposed a $k$-core based community model, called Cocktail Party model, with distance and size constraints.
\RV{Our proposed \ctc model is based on connected $k$-truss. Conceptually, $k$-truss is a more cohesive definition than $k$-core, as $k$-truss is based on triangles whereas $k$-core simply considers node degree \cite{WangC12}.
}
\eat{
Conceptually, $k$-truss is a more cohesive definition than $k$-core, as $k$-truss is based on triangles whereas $k$-core simply considers node degree \cite{WangC12}. Our proposed \ctc model is based on connected $k$-truss, and the Cocktail Party model is based on connected $k$-core.
}
\eat{Conceptually, $k$-truss is a more cohesive definition than $k$-core, as $k$-truss is based on triangles whereas $k$-core simply considers node degree \cite{WangC12}. }
 Most recently, Wu et al.\ \cite{wu2015robust} studied the query biased densest connected subgraph (QDC) problem for avoiding subgraphs irrelevant to query nodes in the community found. \LL{While QDC \cite{wu2015robust} is also defined based on a connected graph containing $Q$ similarly to CTC, it optimizes a fundamentally different function called query biased edge density, which is calculated as the overall edge weight averaged over the weight of nodes in a community.}

\note[Laks]{OK, this discussion shows that our model is somewhat different from previous ones. But we also need to argue in what way our model is more appropriate as a model for community, compared to previous models.} \note[xin]{Our model not only consider the community nodes should be relevant(close), but also require cohesive structure of each edges(node) in the community. The found community has the quality guarantee. }
\note[Laks]{You make interesting points above. Can you give an example to illustrate the last point?} \note[Xin]{
I will try to show a experiment comparison later.}

\stitle{Quality approximation.} Both problems proposed in \cite{sozio2010} and \cite{wu2015robust} are NP-hard to compute, and do not admit approximations without further assumptions.  \cite{wu2015robust} gives an approximation solution of QDC by relaxing the problem. Unfortunately, as the authors show themselves \cite{wu2015robust}, this could fail in real applications, for two reasons. First, the algorithm may find a solution consisting of several connected components with query nodes split between them. Second, the approximation factor can be large, which can deteriorate further with a larger number of query nodes.
\eat{
based on an assumption that the result of one relaxation problem $QDC'$ exists a connected subgraph containing $Q$ and at least one non-query node, which could fail in real applications as shown in \cite{wu2015robust}.  \note[Laks]{How? Can you explain?} \note[Xin]{Please ref to first paragraph in Section 6.3 of \cite{wu2015robust}}.
 The reason has two folds. First, the algorithm finds the optimal solution consisted of several connected components with the largest edge density. But, the query nodes are fallen into different components. The other reason is that not any non-query nodes are in the community. Moreover, the approximate ratio could be large for multiple query nodes. The ratio is $\pi(T)/(\pi(T)-\pi(Q))$ (Lemma 12 of \cite{wu2015robust}), where $T$ is the nodes set of answered community, and $\pi(T)$ represents the total of query biased node weight of $T$.  $\pi(T)-\pi(Q)$ equals to the node weight of non-query nodes, which can be small. With more query nodes, $\pi(T)$ becomes larger, and the approximation ratio may be larger.
\note[Laks]{What is $\pi$? Let's also explain in words what we are saying here. This is not a technical section of the paper.} \note[Xin]{See above.}
}
In contrast, we provide an efficient 2-approximation algorithm for finding the closest truss community containing any set of query nodes. We provide a heuristic algorithm based on local exploration which significantly improves the efficiency and show that on several real networks, it delivers a high-quality solution.

\subsection{Community Detection}
\eat{
The goal of community detection is to identify all communities in the entire network. A typical method for finding communities is to optimize the modularity measure \cite{newman2004fast}. Generally, community detection falls into two major categories: non-overlapping \cite{Rosvall08, yang2012clustering} and overlapping community detection \cite{Palla05, Ahn10}. All these methods consider  static communities, where the networks are partitioned a priori. Query nodes are not considered since their focus is not community search. 
As such, these works on community detection are significantly different from our goal of query driven community search.
}

The goal of community detection is to identify all communities in the entire network. A typical method for finding communities is to optimize the modularity measure \cite{newman2004fast}. Generally, community detection falls into two major categories: non-overlapping \cite{Newman04, Rosvall08, yang2012clustering} and overlapping community detection \cite{Palla05, Ahn10, XieKS13, yang2013overlapping}. All these methods consider  static communities, where the networks are partitioned a priori. Query nodes are not considered since their focus is not community search. \cite{lancichinetti2009community} surveys several community detection methods and evaluates their performance using rigorous tests. \cite{xie2013labelrankt} proposes an online distributed algorithm for community detection in dynamic networks using label propagation. As such, these works on community detection are significantly different from our goal of query driven community search.

\subsection{Dense Subgraph Mining}
\eat{There is a very large body of work on mining dense subgraph patterns, including clique \cite{BronK73}, quasi-clique \cite{Tsour13}, $k$-core \cite{BatageljZ03, JCheng11}, $k$-truss \cite{cohen2008, WangC12}
 to name a few. Various studies have been done on core decomposition and truss decomposition in different settings, including in-memory algorithms \cite{BatageljZ03, cohen2008}, external-memory algorithms \cite{JCheng11, WangC12}, and MapReduce \cite{Cohen09}.
None of these works considers query nodes, which as we have discussed earlier, raise major computational challenges.}

There is a very large body of work on mining dense subgraph patterns, including clique \cite{BronK73, JCheng10, JWang13, JXiang13}, quasi-clique \cite{Tsour13}, $k$-core \cite{BatageljZ03, JCheng11}, $k$-truss \cite{cohen2008, WangC12, YZhang12}, dense neighborhood graph \cite{NWang10},
 to name a few.

Clique and quasi-clique enumeration methods include the classical algorithm \cite{BronK73}, the external-memory $H^*$-graph algorithm \cite{JCheng10}, redundancy-aware clique enumeration \cite{JWang13}, maximum clique computation using MapReduce \cite{JXiang13}, and optimal quasi-clique mining \cite{Tsour13}. Various studies have been done on core decomposition and truss decomposition in different settings, including in-memory algorithms \cite{BatageljZ03, cohen2008, YZhang12}, external-memory algorithms \cite{JCheng11, WangC12}, and MapReduce  \cite{Cohen09}.  \cite{huang2014, YZhang12} designed an incremental algorithm for updating a $k$-truss with edge insertions/deletions. Wang et al.\ \cite{NWang10} studied a dense neighborhood graph based on common neighbors.
None of these works considers query nodes, which as we have discussed earlier, raise major computational challenges.

\vspace*{-0.4cm}
\section{Conclusion} \label{sec.conclu}

{In this paper, we study the \ctc search problem over a graph, given a set of query nodes, that is, find a densely connected community, in which nodes are  close to each other. 
Based on the dense subgraph definition of a $k$-truss, we formualte the CTC as a connected $k$-truss subgraph containing the query nodes with the largest $k$, and has the minimum diameter among such subgraphs.  We showed the problem is NP-hard and is NP-hard to approximate within a factor better than 2. We also matched this lower bound by developing a greedy algorithmic framework that provides a 2-approximation to the optimal solution. To support the efficient search of a CTC, we make use of a truss index and develop efficient methods of truss idenfication and maintenance. Futhermore, we improve the efficiency of greedy framework further using the bulk deletion optimization and local exploration strategies. Extensive experimental results on large real-world networks with ground-truth communities demonstrate the effectivenss and efficiency of our proposed community search model and solutions. 

It would be interesting to extend our search model and algorithms to directed graphs. Given the recent surge of interest in probabilisic graphs, an exciting question is how \ktruss generalizes to probabilistic graphs. The challenge is to develop extensions that are widely useful and tractable. Last but not the least, it would be interesting to extend the notions and techniques to networks with interactions between nodes.
}

\bibliographystyle{abbrv}
{\small
\bibliography{truss}

\begin{thebibliography}{10}

\bibitem{Ahn10}
Y.-Y. Ahn, J.~P. Bagrow, and S.~Lehmann.
\newblock Link communities reveal multiscale complexity in networks.
\newblock {\em Nature}, 466(7307):761--764, 2010.

\bibitem{BatageljZ03}
V.~Batagelj and M.~Zaversnik.
\newblock An o (m) algorithm for cores decomposition of networks.
\newblock {\em arXiv preprint cs/0310049}, 2003.

\bibitem{BronK73}
C.~Bron and J.~Kerbosch.
\newblock Finding all cliques of an undirected graph (algorithm 457).
\newblock {\em Commun. ACM}, 16(9):575--576, 1973.

\bibitem{JCheng11}
J.~Cheng, Y.~Ke, S.~Chu, and M.~T. {\"O}zsu.
\newblock Efficient core decomposition in massive networks.
\newblock In {\em ICDE}, pages 51--62, 2011.

\bibitem{JCheng10}
J.~Cheng, Y.~Ke, A.~W.-C. Fu, J.~X. Yu, and L.~Zhu.
\newblock Finding maximal cliques in massive networks by h*-graph.
\newblock In {\em SIGMOD}, pages 447--458, 2010.

\bibitem{ChibaN85}
N.~Chiba and T.~Nishizeki.
\newblock Arboricity and subgraph listing algorithms.
\newblock {\em SIAM J. Comput.}, 14(1):210--223, 1985.

\bibitem{cohen2008}
J.~Cohen.
\newblock Trusses: Cohesive subgraphs for social network analysis.
\newblock Technical report, National Security Agency, 2008.

\bibitem{Cohen09}
J.~Cohen.
\newblock Graph twiddling in a mapreduce world.
\newblock {\em Computing in Science and Engineering}, 11(4):29--41, 2009.

\bibitem{CuiXWLW13}
W.~Cui, Y.~Xiao, H.~Wang, Y.~Lu, and W.~Wang.
\newblock Online search of overlapping communities.
\newblock In {\em SIGMOD}, pages 277--288, 2013.

\bibitem{cui2014local}
W.~Cui, Y.~Xiao, H.~Wang, and W.~Wang.
\newblock Local search of communities in large graphs.
\newblock In {\em SIGMOD}, pages 991--1002, 2014.

\bibitem{doddi2000approximation}
S.~R. Doddi, M.~V. Marathe, S.~Ravi, D.~S. Taylor, and P.~Widmayer.
\newblock Approximation algorithms for clustering to minimize the sum of
  diameters.
\newblock In {\em Algorithm Theory-SWAT 2000}, pages 237--250. Springer, 2000.

\bibitem{Edachery99graphclustering}
J.~Edachery, A.~Sen, and F.~J. Brandenburg.
\newblock Graph clustering using distance-k cliques.
\newblock In {\em Proceedings of the 7th International Symposium on Graph
  Drawing}, pages 98--106, 1999.

\bibitem{edachery1999graph}
J.~Edachery, A.~Sen, and F.~J. Brandenburg.
\newblock Graph clustering using distance-k cliques.
\newblock In {\em Graph drawing}, pages 98--106. Springer, 1999.

\bibitem{DBLP:books/fm/GareyJ79}
M.~R. Garey and D.~S. Johnson.
\newblock {\em Computers and Intractability: A Guide to the Theory of
  NP-Completeness}.
\newblock W. H. Freeman, 1979.

\bibitem{gonzalez1985clustering}
T.~F. Gonzalez.
\newblock Clustering to minimize the maximum intercluster distance.
\newblock {\em Theoretical Computer Science}, 38:293--306, 1985.

\bibitem{haastad1996clique}
J.~H{\aa}stad.
\newblock Clique is hard to approximate within n 1-\&epsiv.
\newblock In {\em Foundations of Computer Science, 1996. Proceedings., 37th
  Annual Symposium on}, pages 627--636. IEEE, 1996.

\bibitem{huang2014}
X.~Huang, H.~Cheng, L.~Qin, W.~Tian, and J.~X. Yu.
\newblock Querying k-truss community in large and dynamic graphs.
\newblock In {\em SIGMOD}, pages 1311--1322, 2014.

\bibitem{kou1981fast}
L.~Kou, G.~Markowsky, and L.~Berman.
\newblock A fast algorithm for steiner trees.
\newblock {\em Acta informatica}, 15(2):141--145, 1981.

\bibitem{lancichinetti2009community}
A.~Lancichinetti and S.~Fortunato.
\newblock Community detection algorithms: a comparative analysis.
\newblock {\em Physical review E}, 80(5):056117, 2009.

\bibitem{li2015influential}
R.-H. Li, L.~Qin, J.~X. Yu, and R.~Mao.
\newblock Influential community search in large networks.
\newblock {\em PVLDB}, 8(5), 2015.

\bibitem{mcauley2012learning}
J.~J. McAuley and J.~Leskovec.
\newblock Learning to discover social circles in ego networks.
\newblock In {\em NIPS}, volume 272, pages 548--556, 2012.

\bibitem{mehlhorn1988faster}
K.~Mehlhorn.
\newblock A faster approximation algorithm for the steiner problem in graphs.
\newblock {\em Information Processing Letters}, 27(3):125--128, 1988.

\bibitem{newman2004fast}
M.~E. Newman.
\newblock Fast algorithm for detecting community structure in networks.
\newblock {\em Physical review E}, 69(6):066133, 2004.

\bibitem{Newman04}
M.~E. Newman and M.~Girvan.
\newblock Finding and evaluating community structure in networks.
\newblock {\em Physical review E}, 69(2):026113, 2004.

\bibitem{Palla05}
G.~Palla, I.~Der{\'e}nyi, I.~Farkas, and T.~Vicsek.
\newblock Uncovering the overlapping community structure of complex networks in
  nature and society.
\newblock {\em Nature}, 435(7043):814--818, 2005.

\bibitem{Rosvall08}
M.~Rosvall and C.~T. Bergstrom.
\newblock Maps of random walks on complex networks reveal community structure.
\newblock {\em Proceedings of the National Academy of Sciences},
  105(4):1118--1123, 2008.

\bibitem{sozio2010}
M.~Sozio and A.~Gionis.
\newblock The community-search problem and how to plan a successful cocktail
  party.
\newblock In {\em KDD}, pages 939--948, 2010.

\bibitem{Tsour13}
C.~E. Tsourakakis, F.~Bonchi, A.~Gionis, F.~Gullo, and M.~A. Tsiarli.
\newblock Denser than the densest subgraph: Extracting optimal quasi-cliques
  with quality guarantees.
\newblock In {\em KDD}, pages 104--112, 2013.

\bibitem{WangC12}
J.~Wang and J.~Cheng.
\newblock Truss decomposition in massive networks.
\newblock {\em PVLDB}, 5(9):812--823, 2012.

\bibitem{JWang13}
J.~Wang, J.~Cheng, and A.~W.-C. Fu.
\newblock Redundancy-aware maximal cliques.
\newblock In {\em KDD}, pages 122--130, 2013.

\bibitem{NWang10}
N.~Wang, J.~Zhang, K.-L. Tan, and A.~K.~H. Tung.
\newblock On triangulation-based dense neighborhood graphs discovery.
\newblock {\em PVLDB}, 4(2):58--68, 2010.

\bibitem{wu2015robust}
Y.~Wu, R.~Jin, J.~Li, and X.~Zhang.
\newblock Robust local community detection: On free rider effect and its
  elimination.
\newblock {\em PVLDB}, 8(7), 2015.

\bibitem{JXiang13}
J.~Xiang, C.~Guo, and A.~Aboulnaga.
\newblock Scalable maximum clique computation using mapreduce.
\newblock In {\em ICDE}, pages 74--85, 2013.

\bibitem{xie2013labelrankt}
J.~Xie, M.~Chen, and B.~K. Szymanski.
\newblock Labelrankt: Incremental community detection in dynamic networks via
  label propagation.
\newblock In {\em Proceedings of the Workshop on Dynamic Networks Management
  and Mining}, pages 25--32, 2013.

\bibitem{XieKS13}
J.~Xie, S.~Kelley, and B.~K. Szymanski.
\newblock Overlapping community detection in networks: The state-of-the-art and
  comparative study.
\newblock {\em ACM Comput. Surv.}, 45(4):43, 2013.

\bibitem{YangL12}
J.~Yang and J.~Leskovec.
\newblock Defining and evaluating network communities based on ground-truth.
\newblock In {\em ICDM}, pages 745--754, 2012.

\bibitem{yang2013overlapping}
J.~Yang and J.~Leskovec.
\newblock Overlapping community detection at scale: a nonnegative matrix
  factorization approach.
\newblock In {\em WSDM}, pages 587--596, 2013.

\bibitem{yang2012clustering}
Z.~Yang, T.~Hao, O.~Dikmen, X.~Chen, and E.~Oja.
\newblock Clustering by nonnegative matrix factorization using graph random
  walk.
\newblock In {\em NIPS}, pages 1079--1087, 2012.

\bibitem{YZhang12}
Y.~Zhang and S.~Parthasarathy.
\newblock Extracting analyzing and visualizing triangle k-core motifs within
  networks.
\newblock In {\em ICDE}, pages 1049--1060, 2012.

\end{thebibliography}
}

\end{document}